\def\showauthornotes{1}
\newcommand{\pnote}[1]{\textcolor{red}{ {\textbf{(Pravesh: #1)}}}}
\newcommand{\jnote}[1]{\textcolor{blue}{ {\textbf{(Jeff: #1)}}}}
\newcommand{\pnote}[1]{}
\newcommand{\jnote}[1]{}
\def \N {\mathbb{N}}
\def \R {\mathbb{R}}
\def \E {\mathbb{E}}
\def \eps {\epsilon}
\def \al {\alpha}
\renewcommand{\Pr}{\mathop{\bf Pr\/}}
\newcommand{\Var}{\mathrm{Var}}
 \def\1{\bm{1}}
\newtheorem{theorem}{Theorem}[section]
\newtheorem{lemma}[theorem]{Lemma}
\newtheorem{claim}[theorem]{Claim}
\newtheorem{proposition}[theorem]{Proposition}
\newtheorem{fact}[theorem]{Fact}
\newtheorem{corollary}[theorem]{Corollary}
\newtheorem{conjecture}[theorem]{Conjecture}
\newtheorem{question}[theorem]{Question}
\newtheorem{definition}[theorem]{Definition}
\newtheorem{remark}[theorem]{Remark}
\newtheorem{observation}[theorem]{Observation}
\newtheorem{examples}[theorem]{Example}
\newcommand{\polylog}{\operatorname{polylog}}
\DeclareMathAlphabet{\mathsfit}{\encodingdefault}{\sfdefault}{m}{sl}
\SetMathAlphabet{\mathsfit}{bold}{\encodingdefault}{\sfdefault}{bx}{n}
\mathchardef\mhyphen="2D
\newcommand{\val}{\text{val}}
\newcommand{\mul}{\text{mul}}
\renewcommand{\int }{\text{Int}}
\let\mathbb\varmathbb
\begin{document}
\title{Smooth Trade-off for Tensor PCA via Sharp Bounds for Kikuchi Matrices}
\author{Pravesh K. Kothari \thanks{Princeton University. Supported by an NSF Career Award \# 2047933, an NSF Medium Grant \# 2211971, and a Sloan Research Fellowship.} \and Jeff Xu\thanks{Carnegie Mellon University. Supported by NSF Medium Grant \# 2211971.}}
\maketitle

\thispagestyle{empty}
\setcounter{page}{0}
\thispagestyle{empty}
\setcounter{page}{0}

%

\abstract{
In this work, we revisit algorithms for Tensor PCA: given an  order-$r$ tensor of the form $T = G+\lambda \cdot v^{\otimes r}$ where $G$ is a random symmetric Gaussian tensor with unit variance entries and $v$ is an unknown boolean vector in $\{\pm 1\}^n$, what's the minimum $\lambda$ at which one can distinguish $T$ from a random Gaussian tensor and more generally, recover $v$? As a result of a long line of work, we know that for any $\ell \in \N$, there is a $n^{O(\ell)}$ time algorithm that succeeds when the signal strength $\lambda \gtrsim \sqrt{\log n} \cdot n^{-r/4} \cdot \ell^{1/2-r/4}$. The question of whether the logarithmic factor is necessary turns out to be crucial to understanding whether larger polynomial time allows recovering the signal at a lower signal strength. Such a smooth trade-off is necessary for tensor PCA being a candidate problem for quantum speedups~\cite{SOKB25}. It was first conjectured by~\cite{WAM19} and then, more recently, with an eye on smooth trade-offs, reiterated in a blogpost of Bandeira~\cite{bandeira_kikuchi_blogpost, bandeira2024oberwolfach}.

In this work, we resolve these conjectures and show that spectral algorithms based on the Kikuchi hierarchy \cite{WAM19} succeed whenever $\lambda \geq \Theta_r(1) \cdot n^{-r/4} \cdot \ell^{1/2-r/4}$ where $\Theta_r(1)$ only hides an absolute constant independent of $n$ and $\ell$. A sharp bound such as this was previously known only for $\ell \leq 3r/4$ via non-asymptotic techniques in random matrix theory inspired by free probability~\cite{bandeira2024matrixconcentrationinequalitiesfree}. 

 
 
Our main technical contribution is a new framework for proving spectral norm bounds on Kikuchi matrices that are tight up to an absolute constant. Along the way to our result, we also confirm a suspicion that Kikuchi matrices are, in general, not intrinsically free -- a property necessary for the free probability-inspired techniques ~\cite{bandeira2024matrixconcentrationinequalitiesfree} to work when $\ell$ grows beyond a fixed constant.


\clearpage
\newpage
\section{Introduction} In this work, we study the tensor principal component analysis (PCA) problem. In this setting, the input is a tensor $T$ obtained by adding a ``noise" tensor $G$ to a symmetric rank-$1$ tensor (the ``signal") on a boolean vector $v\in \{\pm 1\}^n$. Every entry of $G$ is an independent centered Gaussian random variable with variance $1$. 
\[
T = G + \lambda \cdot v^{\otimes r}.
\]
The central algorithmic question is to understand the minimum $\lambda$ --- a measure of the \emph{signal strength} --- at which it is possible to distinguish $T$ from the background ``noise" $G$ and, more generally, recover the planted vector $v$. 

The first interesting case of $r=2$ is the classical single-spiked Gaussian PCA problem, which, in addition to being the fundamental question in algorithmic statistics, has also been of great interest in statistical physics. A celebrated result~\cite{baik2005phase} establishes the famous BBP \emph{sharp} phase transition for this model: the planted signal is efficiently recoverable for any $\lambda>\frac{1}{\sqrt{n}} $\footnote{In particular, this threshold is a factor $1/2$ smaller than the spectral norm of the noise matrix $G$, a finding that may be surprising at first.} and informationally-theoretically impossible to recover when $\lambda < \frac{1}{\sqrt{n}} $. This phenomenon is often described as the existence of a \emph{sharp algorithmic threshold} (the minimum strength at which the planted signal can be detected/recovered) for the PCA problem. 

The case of $r>2$ has been extensively studied as a canonical problem in statistical inference, starting with the work of Montanari and Richard~\cite{MontRichard14}. On the one hand, the planted signal is information-theoretically detectable\footnote{We use $\gtrsim$ to denote an inequality that holds up to a multiplier that is an absolute constant independent of $n$ but may depend on $r$.} whenever $\lambda \gtrsim n^{-\frac{r+1}{2}}$. The best known polynomial-time algorithms\cite{HSS15, HSSS16} (which in fact run in \emph{near-linear} time!) , however, are known to succeed only when $\lambda \gtrsim \sqrt{\log n} \cdot n^{-r/4}$. This potential gap (which arises only for $r \geq 3$) between the signal strength required for information-theoretic and efficient detection/recovery is a major topic of study in average-case algorithm design. \cite{perry2016statistical,  Zdeborov__2016, lesieur2017statistical, MNS18, brennan2018reducibility, jagannath2020statistical} 

\paragraph{Can the minimum signal strength be improved?} Whether there are algorithms for tensor PCA that succeed at a lower signal strength has been a central question in average-case algorithm design. Indeed, investigations of this question have already  led to new techniques that have profound consequences for algorithm design and beyond. About a decade ago, researchers~\cite{BhattiproluGuruswamiLee17, BhattiproluGhoshGuruswamiLeeTulsiani17, RRS17}  found higher order spectral methods that run in subexponential-time algorithms and improve the above algorithmic results. In their landmark work, Wein, Alaoui, and Moore~\cite{WAM19} introduced spectral methods based on \emph{Kikuchi Matrices} to substantially simplify algorithms and their analyses. Kikuchi matrices have since had a deep impact on algorithm design and beyond \cite{guruswami2022boolean, hsieh2023hypergraph, alrabiah2023near, kothari2023exponential, kothari2024superpolynomial}.

\begin{theorem}[\cite{BhattiproluGuruswamiLee17, BhattiproluGhoshGuruswamiLeeTulsiani17, RRS17,WAM19}]
For any $1\leq \ell \leq n$, there is a $n^{O(\ell)}$-time algorithm for detection and recovery in tensor PCA model that succeeds with probability tending to $1$ whenever 
\[ 
\lambda  \gtrsim \sqrt{\log n} \cdot n^{-r/4} \cdot \ell^{1/2- r/4}\,,
\] 
for an absolute constant $C>0$.
\end{theorem}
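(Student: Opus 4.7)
The plan is to carry out the standard Kikuchi spectral analysis from~\cite{WAM19}. For even $r$ (odd $r$ works analogously using a rectangular matrix between $\binom{[n]}{\ell}$ and $\binom{[n]}{\ell+1}$), define the level-$\ell$ Kikuchi matrix $Y_\ell(T)$ as the $\binom{n}{\ell} \times \binom{n}{\ell}$ symmetric matrix indexed by $\ell$-subsets of $[n]$ with
\[
Y_\ell(T)[S,S'] := T[S \triangle S'] \quad \text{when } |S \triangle S'| = r,
\]
and $0$ otherwise (viewing $T$ as a function on order-$r$ multi-indices). By linearity, $Y_\ell(T) = \lambda \cdot Y_\ell(v^{\otimes r}) + Y_\ell(G)$, so the algorithm will compute the top eigenvalue of $Y_\ell(T)$, threshold it for detection, and round the top eigenvector for recovery. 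Correctness reduces to showing that with probability $1 - o(1)$ the signal term contributes an eigenvalue strictly larger than $\|Y_\ell(G)\|$.

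\textbf{Signal lower bound.} Consider the ``lifted'' vector $w \in \{\pm1\}^{\binom{[n]}{\ell}}$ with $w_S := \prod_{i \in S} v_i$. A one-line computation using $v_i^2 = 1$ yields $(Y_\ell(v^{\otimes r}) w)_S = |\{S' : |S \triangle S'| = r\}| \cdot w_S$, and the count is exactly $\binom{\ell}{r/2}\binom{n-\ell}{r/2} = \Theta_r(\ell^{r/2} n^{r/2})$ for $\ell \le n/2$. Hence the signal term has an eigenvalue of magnitude $\lambda \cdot \binom{\ell}{r/2}\binom{n-\ell}{r/2}$.

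\textbf{Noise upper bound.} Write $Y_\ell(G) = \sum_A G[A] \cdot M_A$ as a matrix Gaussian series, where $A$ ranges over order-$r$ multi-indices and $M_A[S,S'] := \mathbf{1}_{S \triangle S' = A}$. The noncommutative Khintchine (equivalently, matrix Gaussian) inequality gives
\[
\mathbb{E}\|Y_\ell(G)\| \lesssim \sigma \sqrt{\log N}, \qquad N := \binom{n}{\ell},
\]
where $\sigma^2 := \max(\|\mathbb{E}[Y_\ell(G) Y_\ell(G)^\top]\|, \|\mathbb{E}[Y_\ell(G)^\top Y_\ell(G)]\|)$. A direct computation shows $\mathbb{E}[Y_\ell(G) Y_\ell(G)^\top]$ is diagonal with entries $\binom{\ell}{r/2}\binom{n-\ell}{r/2}$, since the off-diagonal $(S, T)$ entry is a count of $S'$ with $S \triangle S' = T \triangle S'$, which forces $S = T$. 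Hence $\sigma = \Theta_r(\ell^{r/4} n^{r/4})$; combined with $\log N = O(\ell \log(n/\ell))$ and Gaussian concentration (Borell--TIS or trace moment methods) to upgrade expectation to a high-probability statement, we obtain $\|Y_\ell(G)\| \lesssim \sqrt{\ell \log n} \cdot \ell^{r/4} n^{r/4}$.

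\textbf{Combining.} Signal beats noise exactly when $\lambda \cdot \ell^{r/2} n^{r/2} \gtrsim \sqrt{\ell \log n} \cdot \ell^{r/4} n^{r/4}$, which rearranges to $\lambda \gtrsim \sqrt{\log n} \cdot n^{-r/4} \cdot \ell^{1/2 - r/4}$, matching the theorem. The top eigenpair of $Y_\ell(T)$ can be computed in $\mathrm{poly}(N) = n^{O(\ell)}$ time via power iteration, and recovery of $v$ from $w$ uses the product structure $w_S w_{S'} = \prod_{i \in S \triangle S'} v_i$ (e.g., by majority vote over pairs). The only technical input is the noncommutative Khintchine bound on $\|Y_\ell(G)\|$; the key obstacle the present paper addresses is precisely the extraneous $\sqrt{\log N}$ factor inherent in Khintchine-type inequalities at general $\ell$, which this paper removes via a sharper framework for spectral bounds on Kikuchi matrices.
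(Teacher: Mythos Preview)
Your proposal is correct and follows the standard Kikuchi spectral argument from~\cite{WAM19}: lower-bound the signal eigenvalue via the lifted boolean vector, upper-bound the noise spectral norm via a black-box matrix concentration inequality (here noncommutative Khintchine), and compare. The paper does not give its own standalone proof of this cited theorem; it states the result as prior work and notes that the folklore bound $B_{\mathrm{folklore}}(G)=O_r(1)\cdot n^{r/4}\ell^{r/4+1/2}\sqrt{\log n}$ follows from ``black-box matrix concentration bounds (e.g.\ Matrix Bernstein),'' which is exactly your route.

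The one methodological difference worth flagging is that the paper, as a warm-up for its new technique, re-derives this same norm bound in Claim~2.1 via the trace moment method with a simple edge-based factor-assignment scheme (for Rademacher entries), rather than via Khintchine. Both arguments land on the same $\sqrt{\log n}$ overhead, but the trace-moment re-derivation is pedagogically important in the paper because it localizes the loss to the ``last-cost'' factor of $2q=\Theta(\ell\log n)$ per returning edge---the precise quantity the paper then improves to $\Theta_r(\ell)$ to obtain its main result. Your Khintchine argument is cleaner for establishing the prior-work bound, but it does not expose this bottleneck.
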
  
On the flip side, lower bounds in various restricted models provide evidence that no algorithm could improve the guarantees on $\lambda$ above by more than a polylogarithmic factor\cite{PR22,kunisky2019ldlr,DudejaHsu21,MontRichard14, BenArousGheissariJagannath20}. 

\paragraph{Smooth tradeoffs: can higher polynomial times help?} The best polynomial-time guarantee for tensor PCA is currently achieved by a near-linear time algorithm. Does allowing higher polynomial time help? If true, such a result establishes that there is no sharp algorithmic threshold for tensor PCA for $r>2$ unlike the case of $r=2$ and other well-studied statistical inference problems such as community detection in stochastic block models~\cite{decelle2011asymptotic, abbe2017community, ghosh2020sos_skpaped, wein2021optimal, bandeira2021spectral}. 

While such a smooth trade-off has been suspected by researchers for half a decade now, rigorously establishing it has proved challenging. The key technical difficulty is in the analyses of the spectral norms of highly correlated random matrices that arise in the spectral methods for the problem. Even with the simplifications offered by the Kikuchi matrices, proving sharp bounds has proven difficult. Indeed, in their original work that introduced these matrices, Wein, Alaoui, and Moore~\cite{WAM19} conjectured an improved bound on the minimum signal strength that incurs no logarithmic loss. The significance of this result and its consequences for establishing a smooth trade-off in algorithmic thresholds for tensor PCA was recently highlighted in a conjecture of Bandeira. We state his conjecture below and refer the reader to the accompanying blogpost for a more detailed discussion of the context. 

\begin{conjecture}[\cite{bandeira2024oberwolfach, bandeira_kikuchi_blogpost}] \label{conj:bandeira}
For any even $r$ and $\ell\geq \frac{r}{2} $ and any $n$, let $\lambda_{r,\ell}$ be the threshold given by the spectral norm of the Level-$\ell$ Kikuchi matrix. For any fixed $r$, one has \[ 
n^{r/4} \cdot \lambda_{r,\ell} \rightarrow 0
\]
as $\ell \rightarrow \infty$ (crucially noting that this is after one has taken $n\rightarrow \infty$).
\end{conjecture}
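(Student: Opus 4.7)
The plan is to derive Conjecture~\ref{conj:bandeira} as an immediate consequence of the sharp spectral norm bound on the Gaussian Kikuchi matrix announced in the abstract. Let $M_\ell$ denote the level-$\ell$ Kikuchi matrix built from the noise tensor $G$. A direct calculation shows that the vector $x_S = \prod_{i \in S} v_i$ is an eigenvector of the signal Kikuchi matrix with eigenvalue exactly $\binom{\ell}{r/2}\binom{n-\ell}{r/2}$, so the spectral threshold equals
\[
\lambda_{r,\ell} \;=\; \frac{\|M_\ell\|}{\binom{\ell}{r/2}\binom{n-\ell}{r/2}}.
\]
Thus it suffices to prove $\|M_\ell\| \leq C_r \cdot n^{r/4} \cdot \ell^{r/4 + 1/2}$ with $C_r$ depending only on $r$; dividing by $\binom{\ell}{r/2}\binom{n-\ell}{r/2} \asymp n^{r/2} \ell^{r/2}$ then yields $n^{r/4}\lambda_{r,\ell} \leq C_r' \cdot \ell^{1/2 - r/4}$ uniformly in $n$. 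For even $r \geq 4$ the exponent $1/2 - r/4$ is strictly negative, so taking $n \to \infty$ (which is vacuous since the bound is $n$-uniform) and then $\ell \to \infty$ drives the right-hand side to $0$, which is exactly the conjecture.

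The whole task thus reduces to the no-log spectral estimate for $M_\ell$. The natural attack is the trace-moment method: control $\E[\operatorname{tr}(M_\ell^{2q})]$ and invoke $\|M_\ell\| \leq (\E[\operatorname{tr}(M_\ell^{2q})])^{1/(2q)}$. Expanding by Wick's formula writes the expectation as a sum over closed length-$2q$ walks in the Kikuchi graph in which the Gaussian tensor entries are paired, with only pairings where each tensor entry appears an even number of times contributing. To shed the $\sqrt{\log n}$ factor, $q$ must be allowed to grow to the scale where the dimensional prefactor $\binom{n}{\ell}^{1/(2q)}$ becomes $O(1)$, i.e.\ $q = \Theta(\ell \log n)$, forcing analysis of walks of super-constant length rather than the fixed-$q$ regime in which the free-probability bounds of~\cite{bandeira2024matrixconcentrationinequalitiesfree} apply.

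The combinatorial engine I would develop is a shape decomposition: classify walks by the multigraph (``shape'') they trace out in the Kikuchi graph, count the distinct vertex-labelings consistent with each shape, and combine these counts with the per-walk Gaussian contribution. The crucial obstacle --- and the reason Kikuchi matrices fail to be intrinsically free at high levels --- is that non-tree shapes with high edge-multiplicity contribute comparably to the tree-like pairings that drive free probability, so one cannot simply read off the bound from a non-commutative limit. The plan is to identify the dominant shapes exactly (these produce the target $n^{r/4} \ell^{r/4 + 1/2}$), and then show that the total contribution of all higher-excess shapes forms a geometric series in some excess parameter, summable to an absolute constant.

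The principal obstacle is the sharp control of this summation. Standard graph-matrix analyses allow $\polylog(n)$ or $n^{o(1)}$ slack in the charging arguments, but any such slack reintroduces a $\sqrt{\log n}$ and defeats the conjecture. One therefore needs a global charging scheme that, simultaneously across all shapes of a given excess, balances the number of shapes against their per-shape Gaussian weight with $r$-dependent but $n$- and $\ell$-independent constants. Designing such a non-asymptotic replacement for the matrix concentration tools of~\cite{bandeira2024matrixconcentrationinequalitiesfree} in a regime where they provably fail is, to my mind, the main technical content of the paper, and I expect it to occupy the bulk of the proof.
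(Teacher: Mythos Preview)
Your reduction is exactly right: the conjecture follows immediately once one has the sharp bound $\|M_\ell\| \leq O_r(1)\cdot n^{r/4}\ell^{r/4+1/2}$, and the paper proves precisely this via the trace moment method with $q$ up to $n^\delta$. Where you diverge from the paper is in the combinatorial engine. You propose a \emph{global shape decomposition}: classify walks by their underlying multigraph, isolate dominant shapes, and sum the rest via an excess parameter. The paper instead designs a \emph{local, per-step factor-assignment scheme}: rather than grouping walks by shape, it processes each walk step by step, assigning to every step a bounded ``budget'' so that the product of budgets over the $2q$ steps dominates both the counting and the Gaussian edge-value contributions. The paper is explicit that global-structure reasoning becomes unwieldy for long walks, and that the local scheme is what makes the analysis tractable.

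The specific idea you are missing, and which the paper identifies as the crux, is the treatment of the \emph{last-cost}: the factor needed to specify an edge on its final appearance. A naive bound gives $2q$ (whence the $\sqrt{\log n}$), and the paper shows via a lower bound that $\Theta_r(\ell)$ is actually necessary (this is also why the free-probability prediction $B_{fp}$ is off by $\sqrt\ell$). The upper bound of $O_r(\ell)$ is achieved by the ``ideal last-step'' observation: if some outgoing vertex is making its own final appearance, then the departing edge is uniquely determined at cost $\ell$ rather than $2q$. Non-ideal last-steps are then charged against middle-appearance steps, which are lower-order. This vertex-level redistribution---splitting the $n$ cost of a vertex as $\sqrt n$ at its first and last appearance, and the $\ell$ last-cost as $\sqrt\ell$ at the edge's first and last appearance---is the mechanism that replaces your proposed shape-by-shape geometric summation. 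Your approach may well be workable, but it is not the route the paper takes, and the paper's local scheme sidesteps the need to enumerate and balance shapes of arbitrary excess.
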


In addition to being a question of central interest in statistical inference, establishing such a smooth trade-off has consequences for establishing potential quantum speedups. Indeed, a recent work~\cite{SOKB25} shows a \emph{quartic} (i.e., a factor four improvement, in contrast to the expected ``Grover-search type" factor two improvement, in the exponent) speedup in estimating the top eigenvector of Kikuchi matrices for tensor PCA over the classical power iteration. Their result suggests tensor PCA as a candidate example for quartic quantum speedups provided one could establish that higher polynomial running times strictly lower the signal strength required for success of the Kikuchi spectral methods.

\paragraph{Spectral Norm Bound for Random Matrices as the Bottleneck.}
 
Let us describe the key technical barrier in establishing the conjecture above in more detail now. At a high-level, the main issue is that the level $\ell$ Kikuchi matrix defined as following has size $\Theta(n^{2\ell})$  but only $O(n^{r})$ independent bits of randomness. When $\ell \gg r$ -- the previously unexplored setting -- this matrix has highly correlated entries. 

\begin{definition}[Kikuchi Matrix for  Even-$r$. See \cref{def: kikuchi-odd} for the definition for odd $r$] \label{def: kikuchi-even}
	For even $r$, and any $\ell \in \N$, and given tensor $G$ of order-$r$, $M_\ell$ is the Kikuchi matrix of size $n^{\binom{n}{\ell}} \times n^{\binom{n}{\ell}} $ for level-$\ell$ indexed by subsets  $I,J \subseteq \binom{n}{\ell}$ with entry $M_\ell(G) [I,J] \coloneqq G_{I\Delta J}  $ for $|I \Delta J| = r$, and $0$ otherwise..
	
	We usually drop the dependence on $G$ when it is clear. For the bulk of our work, unless otherwise specified, we will be working with $G$ being a random symmetric Gaussian tensor of order-$r$.
\end{definition}	
\begin{examples}
	For indices $I= \{v_1,v_2,v_3,v_4,v_5,v_6\}, J=\{v_1,v_2,v_3,v_4, v_7,v_8\}$, in the case of $r=4$ and $\ell = 6$, we have the corresponding entry of the Kikuchi matrix be \[ 
	M_\ell [I,J ] \coloneqq   G_{I\Delta J} = G_{v_5, v_6, v_7, v_8}\,.
	\]
\end{examples}


More precisely, the analyses of the Kikuchi spectral method for tensor PCA relates a high probability estimate on the spectral norm of the Kikuchi matrix built from $G$ to the threshold signal strength at which detection is possible as follows:
%
%

\begin{claim}
	For a given upper bound $B(G)$ on a random tensor $G$ that holds with high probability, one has a distinguishing algorithm for \[\lambda \geq  \Theta_r(1) \cdot  \frac{B(G)}{n^{r/2} \cdot \ell^{r/2}}  \,.\]
\end{claim}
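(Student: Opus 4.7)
The plan is to exploit the linearity of the Kikuchi construction, $M_\ell(T) = M_\ell(G) + \lambda \cdot M_\ell(v^{\otimes r})$, and to exhibit an explicit test vector whose Rayleigh quotient against the signal part $M_\ell(v^{\otimes r})$ scales like $n^{r/2} \ell^{r/2}$. Combining this lower bound on the signal with the assumed high-probability upper bound $B(G)$ on the noise part immediately yields a spectral-norm distinguisher for $\lambda \cdot n^{r/2}\ell^{r/2} \gtrsim B(G)$, matching the claim.

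Concretely, take the parity vector $w \in \R^{\binom{[n]}{\ell}}$ defined by $w_I = \prod_{i \in I} v_i$. For $|I \Delta J| = r$, the planted-tensor matrix entry is $M_\ell(v^{\otimes r})[I,J] = v^{\otimes r}_{I \Delta J} = \prod_{i \in I \Delta J} v_i$, so
\[
(M_\ell(v^{\otimes r}) w)_I \;=\; \sum_{J : |I\Delta J|=r} \Big(\prod_{i \in I \Delta J} v_i\Big)\Big(\prod_{j \in J} v_j\Big) \;=\; \sum_{J : |I\Delta J|=r} \prod_{i \in I} v_i \;=\; N_{n,\ell,r} \cdot w_I,
\]
where the middle equality uses $v_i^2 = 1$ to cancel every index lying in $J \setminus I$ or $J \cap I$, leaving only $\prod_{i \in I} v_i$. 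The combinatorial count is $N_{n,\ell,r} = \binom{\ell}{r/2}\binom{n-\ell}{r/2}$, since $|I|=|J|=\ell$ together with $|I\Delta J|=r$ forces $|I\setminus J| = |J\setminus I| = r/2$. Thus $w$ is an exact eigenvector of $M_\ell(v^{\otimes r})$ with eigenvalue $\Theta_r(1) \cdot \ell^{r/2} \cdot n^{r/2}$, provided, say, $\ell \leq n/2$ (and otherwise the claim is vacuous because the Kikuchi matrix has dimension only $\binom{n}{\ell}$).

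Next, to build the distinguisher, observe that under the planted model the Rayleigh quotient satisfies
\[
\frac{w^\top M_\ell(T) w}{\|w\|^2} \;\geq\; \lambda \cdot \Theta_r(1)\cdot n^{r/2}\ell^{r/2} \;-\; \|M_\ell(G)\|_{\mathrm{op}} \;\geq\; \lambda \cdot \Theta_r(1)\cdot n^{r/2}\ell^{r/2} \;-\; B(G)
\]
with high probability, whereas under the pure-noise model $\|M_\ell(G)\|_{\mathrm{op}} \leq B(G)$ bounds the top eigenvalue uniformly. Thresholding the top eigenvalue of $M_\ell(T)$ at, say, $\tfrac{3}{2} B(G)$ therefore separates the two distributions as soon as $\lambda \geq \Theta_r(1) \cdot B(G)/(n^{r/2}\ell^{r/2})$ for a sufficiently large absolute constant.

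The only mild subtlety is the combinatorial identity and the $v_i^2 = 1$ absorption used above; both are routine and the argument goes through uniformly in $\ell$. Thus there is no real obstacle in this reduction itself: all the technical difficulty is packaged into establishing a sharp high-probability bound $B(G)$ on $\|M_\ell(G)\|_{\mathrm{op}}$, which is precisely the spectral-norm question the rest of the paper is devoted to.
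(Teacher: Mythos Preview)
Your proposal is correct and follows essentially the same approach as the paper: both exhibit the parity vector $w_I = \prod_{i\in I} v_i$ as a test vector for the signal part $M_\ell(v^{\otimes r})$, compute its Rayleigh quotient (yielding the eigenvalue $\binom{\ell}{r/2}\binom{n-\ell}{r/2} = \Theta_r(1)\cdot n^{r/2}\ell^{r/2}$), and then threshold the spectral norm of $M_\ell(T)$ to separate the planted from the null model. Your observation that $w$ is in fact an exact eigenvector is a slight sharpening of the paper's Rayleigh-quotient lower bound, but the argument is otherwise identical.
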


In the original line of work establishing the "coarse" thresholds, the spectral norm bounds for Kikuchi matrices all suffer a loss of polylog factors, yielding bounds \[ B_{folklore}(G) = O_r(1)  \cdot n^{r/4}\cdot \ell^{r/4+1/2}\cdot   \sqrt{\log n}\,, \] as an immediate application of black-box matrix concentration bounds (eg. Matrix Bernstein). 

Let us make two comments about the above bound. First, note that the $ \sqrt{\log n}$ factor dominates the effect of increasing $\ell$ so long as it is bounded by an absolute constant. Thus, one cannot rigorously infer a smooth trade-off in polynomial time regime of the algorithm. Secondly, the dependence of $\ell$ should also be tightly controlled apart from the  $ \sqrt{\log n}$ factor - restricted to the dependence of $\ell$, a bound of $o(\ell^{r/2})$ is necessary to establish any trade-off between $\ell$ and $\lambda$. 
%

It is important to point out that controlling the dependence on the two key parameters simulataneously -- $\log n$ and the $\ell$ -- turns out to be difficult. For example, the recent work of d'Orsi and Trevisan~\cite{dorsi_trevisan23} introduced an Ihara–Bass-type formula that yields sharper spectral norm estimates for the “basic” spectral algorithm in the random $k$-xor setting. However, while their bound does get rid of the $\log n$ factor, their analyses, when run for larger $\ell$ has a worse polynomial dependence on $\ell$. In retrospect, their strategy relies on analyzing the non-backtracking walk matrix and is tied to the setting where the input tensor is sparse. In contrast, our input tensor is fully dense.  



\paragraph{Recent Progress and Barriers in Free Probability.} 

 A series of works \cite{Bandeira_2023, bandeira2024matrixconcentrationinequalitiesfree, BLNvR25} introduce techniques from free probability to prove sharp bounds on correlated random matrices via improved noncommutative Khintchine inequalities. These works have used Kikuchi matrices arising in tensor PCA as a testbed for applications. In particular, a recent work  \cite{bandeira2024matrixconcentrationinequalitiesfree} makes non-trivial progress on \cref{conj:bandeira}. 
\begin{theorem}[\cite{bandeira2024matrixconcentrationinequalitiesfree}] For any even $r$ and any $\frac{r}{2}\leq  \ell < \frac{3r}{4}$, let $G$ be a random symmetric tensor of order-$r$ as defined above,  and let $M_\ell(G)$ be the Kikuchi-$\ell$ matrix, with high probability \[ 
\|M_{\ell}(G)\|_{sp} \leq   \underbrace{ O_r(1) \cdot n^{r/4} \ell^{r/4}}_{\coloneqq B_{fp}(G)  } \,. 
\] 
\end{theorem}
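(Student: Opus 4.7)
My approach is to apply the matrix concentration inequality with free reference of Bandeira--Boedihardjo--van Handel (BvHB) \cite{bandeira2024matrixconcentrationinequalitiesfree} to the Gaussian matrix series representation of the Kikuchi matrix:
\[
M_\ell(G) \;=\; \sum_{S \in \binom{[n]}{r}} g_S\, A_S,\qquad A_S[I,J] \coloneqq \mathbf{1}[I \Delta J = S],
\]
where $g_S \sim N(0,1)$ are i.i.d.\ and each $A_S$ is a symmetric $\{0,1\}$ matrix. Note that $J$ is uniquely determined by $I$ and $S$ (via $J = I \Delta S$, valid exactly when $|I \cap S| = r/2$), so each $A_S$ is a partial permutation matrix and $A_S^2 = \mathrm{diag}(\mathbf{1}[|I\cap S| = r/2])$. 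Summing over $S$ yields
\[
\sigma(M_\ell)^2 \;\coloneqq\; \Bigl\|\sum_S A_S^2\Bigr\| \;=\; \binom{\ell}{r/2}\binom{n-\ell}{r/2} \;\lesssim_r\; n^{r/2}\,\ell^{r/2},
\]
so that $\sigma(M_\ell) \lesssim_r n^{r/4}\ell^{r/4}$ already matches the target order. The content of the theorem is thus to realize this deterministic variance as a genuine high-probability spectral norm bound, without the $\sqrt{\log n}$ loss that matrix Bernstein would incur.

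The BvHB inequality asserts that for a Gaussian matrix series $X = \sum_i g_i A_i$,
\[
\mathbb{E}\|X\| \;\leq\; \|X^{\mathrm{free}}\| \;+\; O\!\bigl(\sigma_*(X)^{1/2}\sigma(X)^{1/2}(\log d)^{3/4}\bigr) \;+\; O(v(X)\log d),
\]
where $X^{\mathrm{free}} \coloneqq \sum_i s_i A_i$ replaces the $g_i$ by free standard semicirculars, $d$ is the matrix dimension, $v(X) \coloneqq \max_i\|A_i\|$, and $\sigma_*(X)$ is an operator-valued covariance parameter. For the Kikuchi matrix one has $v(M_\ell) = 1$ (each $A_S$ is a partial permutation), and $\sigma_*(M_\ell)$ can be computed directly from the covariance operator and is small; together these give error terms of order $\widetilde{O}(\sigma(M_\ell)^{1/2}\ell^{3/4}(\log n)^{3/4}) = o(\sigma(M_\ell))$ in the relevant regime. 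The problem therefore reduces to proving $\|M_\ell^{\mathrm{free}}\| \leq C_r\,\sigma(M_\ell)$.

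The main obstacle, and the heart of the argument, is to bound the free spectral norm via the trace moment method:
\[
\tau\bigl((M_\ell^{\mathrm{free}})^{2k}\bigr) \;=\; \sum_{\pi \in \mathrm{NC}_2(2k)}\ \sum_{S_1,\ldots,S_k}\, \tau_\pi(A_{S_1},\ldots,A_{S_k}),
\]
where $\mathrm{NC}_2(2k)$ denotes the non-crossing pair partitions of $[2k]$. For each $\pi$, the inner sum counts closed walks $I_0 \to I_1 \to \cdots \to I_{2k} = I_0$ on the Kikuchi graph whose $r$-subset edge labels agree on each pair of $\pi$. The key technical claim is that every $\pi$ contributes at most $\sigma(M_\ell)^{2k}$; Catalan counting then yields $\|M_\ell^{\mathrm{free}}\| \leq (2+o(1))\sigma(M_\ell)$. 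The constraint $\ell < 3r/4$ enters precisely here: for $\pi$'s with deeply nested arcs, the vertex supports of the inner and outer sub-walks may share up to $3r/4$ vertices, and once $\ell \geq 3r/4$ this overlap permits extra ``free'' vertex choices, producing an excess factor $n^{\Omega(\ell - 3r/4)\cdot k}$ that breaks the $\sigma^{2k}$ bound. The delicate step is to classify non-crossing partitions by their overlap shape and reduce each index sum to a lattice-point count tied to the Kikuchi hypergraph; once this is in hand, plugging back into BvHB completes the proof.
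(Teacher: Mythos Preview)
This theorem is not proved in the present paper at all: it is quoted verbatim from \cite{bandeira2024matrixconcentrationinequalitiesfree} as prior work in the introduction, and the paper offers no argument for it. The paper's own technical contribution is a \emph{different} bound (Theorems~\ref{thm:main-thm-even} and~\ref{thm:main-thm-odd}, carrying an extra $\sqrt{\ell}$ and valid for all $\ell\le n^{\delta}$), established by an unrelated method---the trace-moment calculation with the local factor-assignment scheme of Section~3. So there is no ``paper's own proof'' to compare your proposal against.

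On the proposal itself: the architecture you describe (write $M_\ell=\sum_S g_S A_S$ with partial-permutation $A_S$, compute $\sigma^2=\binom{\ell}{r/2}\binom{n-\ell}{r/2}$, and invoke the BvHB free-probability inequality) is indeed the route taken in the cited work. However, you have the location of the main difficulty inverted. The bound $\|M_\ell^{\mathrm{free}}\|\le 2\sigma(M_\ell)$ holds \emph{automatically} for any self-adjoint Gaussian series---it is a general fact about operator-valued semicircular families (Lehner's formula), and requires no hypothesis on $\ell$. The actual work, and the place where the restriction $\ell<3r/4$ enters, is precisely the step you dismiss in one clause: showing that the deviation parameter (your $\sigma_*(M_\ell)$, or the related higher-order quantities in BvHB) is $o(\sigma)$. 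For larger $\ell$ the correlation among the $A_S$ is strong enough that this deviation is no longer negligible---which is exactly the obstruction the present paper highlights by quoting Remark~3.4 of \cite{bandeira2024matrixconcentrationinequalitiesfree} and then confirming via the lower bound in Section~\ref{sec:lower-bound}. Your final paragraph about non-crossing partitions and overlap shapes is pointed at the wrong term in the BvHB decomposition.
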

For even r (they do not analyze their algorithm for odd $r$ that often tends to be similar but more technically challenging), they prove the conjecture for all $\ell \leq 3r/4$. 

The bound above is based on non-asymptotic techniques in random matrix theory from free probability. In particular, the techniques end up establishing not only a spectral norm bound as above but also a semicircular shape of the spectrum as in Wigner random matrices. This property is called \emph{intrinsic freeness}. In their work\cite{bandeira2024matrixconcentrationinequalitiesfree}, the authors suggest that intrinsic freeness is likely false for the Kikuchi matrices above when $\ell \gg r$. 


\begin{quote}(Remark 3.4) \label{remrk: afonso-free}
 When $\ell$ is large compared to $r$, the dependence structure of $M_\ell$ is so strong that it is unclear whether it could be accurately modeled by (the corresponding free matrix) $M_{\text{free}}$.	
\end{quote}

That said, if the matrix is indeed intrinsically free, its spectrum would have a small deviation to that of $M_{\text{free}}$, predicting a norm bound of $B_{fp}$. As we will see, along the way to our main result, we confirm their suspicion and show that
an additional $\sqrt{\ell}$ factor is necessary via a lower bound for the expected high trace, establishing the the matrix does not exhibit semicircle spectrum. We highlight the lower bound construction below \cref{lem:lower-bound}  with formal verification in \cref{sec:lower-bound}. 



In an incomparable improvement on the above bound, a recent work of Bandeira and Nizic-Nikolac shows the following bound that removes the logarithmic factor for all $\ell$ but has a worse polynomial dependence on $\ell$. As we discussed above, finding a proof strategy that simultaneously eliminates the logarithmic factor while preserving the right dependence on $\ell$ has proven challenging despite significant efforts. 

\begin{theorem}[\cite{bandeira_nizic_kikuchi_pc}] 
In the same set-up as above except now for any $\ell \in \N$,
 \[ 
\|M_\ell(G)\|_{sp} \leq  \underbrace{ O_r(1) \cdot n^{r/4}    \cdot  \ell^{r/2}}_{\coloneqq B_{pc}(G)   }\,.
\]
	
\end{theorem}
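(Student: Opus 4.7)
The plan is to prove this bound via the classical method of moments: upper bound $\E[\mathrm{Tr}(M_\ell(G)^{2q})]$ for a suitably chosen integer $q$ and convert to a tail bound on $\|M_\ell\|_{sp}$ via Markov's inequality. The essential advantage over matrix Bernstein is the opportunity to avoid the $\sqrt{\log n}$ factor present in $B_{\mathrm{folklore}}$ by absorbing the dimension penalty $\log \binom{n}{\ell} = \Theta(\ell \log n)$ into the $(2q)$-th root, at the cost of a coarser dependence on $\ell$ of $\ell^{r/2}$ rather than the sharp $\ell^{r/4+1/2}$.

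The first step is to expand the trace as a sum over closed walks in the Kikuchi graph,
\[ \mathrm{Tr}(M_\ell^{2q}) = \sum_{I_0, \ldots, I_{2q-1}} \prod_{t=0}^{2q-1} G_{I_t \Delta I_{t+1}}, \]
where consecutive indices satisfy $|I_t \Delta I_{t+1}| = r$ (with $I_{2q} \equiv I_0$). Isserlis' theorem for Gaussian moments reduces the expectation to a sum over pairs (closed walk, perfect matching $\pi$ on $\{0, \ldots, 2q-1\}$) subject to the constraint that for each pair $(a,b) \in \pi$, the $r$-sets $S_a := I_a \Delta I_{a+1}$ and $S_b := I_b \Delta I_{b+1}$ coincide. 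I would then classify such pairs by their \emph{shape} --- the isomorphism type of the edge-colored closed walk obtained by identifying matched steps --- and bound the number of embeddings of each shape into the Kikuchi graph.

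For a shape with $v$ distinct $\ell$-set vertices and $q$ distinct $r$-set labels, the number of embeddings is at most $\binom{n}{\ell}$ (choices for $I_0$) times a product of step factors. A ``first-visit'' step, introducing both a fresh $r$-set and a fresh vertex, contributes up to $\binom{\ell}{r/2}\binom{n-\ell}{r/2} = O_r(\ell^{r/2} n^{r/2})$ choices: pick $r/2$ elements of $I_t$ to swap out and $r/2$ elements outside $I_t$ to swap in. A ``revisit'' step, where the $r$-set is fixed by the matching, contributes a more restricted number of options --- but crucially, in the Kikuchi setting, the weaker pairing condition $S_a = S_b$ (rather than the Wigner condition forcing the \emph{endpoint pair} $\{I_a, I_{a+1}\}$ to equal $\{I_b, I_{b+1}\}$) admits additional shapes where distinct vertex pairs share the same $r$-set label, boosting the per-shape count by an extra $\ell^{r/2}$-type factor aggregated across the walk. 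After summing over shapes, the target estimate is
\[ \E[\mathrm{Tr}(M_\ell^{2q})] \leq \binom{n}{\ell} \cdot (C_r \ell^{r} n^{r/2})^q, \]
and Markov's inequality with $q = \Theta(\ell \log n)$ and threshold $t = O_r(1) \cdot n^{r/4} \ell^{r/2}$ absorbs $\binom{n}{\ell}^{1/(2q)} = O(1)$ and yields the claimed bound with high probability.

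The main obstacle is the delicate shape-by-shape counting. A naive bound using the $(2q-1)!! \sim (2q/e)^q$ count of all possible matchings would reintroduce a $\sqrt{q} = \sqrt{\ell \log n}$ factor after the $(2q)$-th root, effectively recovering $B_{\mathrm{folklore}}$; avoiding this requires restricting to a sub-family of shapes of total size $O(1)^q$ whose contribution dominates, and arguing that the ``extra'' Kikuchi-specific shapes --- those where paired $r$-sets connect disparate vertex pairs --- can be absorbed into the leading $(\ell^r n^{r/2})^q$ term via careful control of vertex-intersection constraints. Pushing further to the sharp $\ell^{r/4 + 1/2}$ bound (the central achievement of the paper under review) requires even finer shape accounting that exploits cancellations this moment-method estimate does not capture; the trade-off between logarithmic factors and polynomial-in-$\ell$ factors is precisely what makes the sharp bound technically challenging.
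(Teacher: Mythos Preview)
This theorem is \emph{cited} from the unpublished work of Bandeira and Nizi\'c-Nikolac; the present paper does not prove it and contains no argument to compare your proposal against. The paper's own technical contribution is the sharper bound $O_r(1)\cdot n^{r/4}\ell^{r/4+1/2}$ (Theorems~\ref{thm:main-thm-even} and~\ref{thm:main-thm-odd}), obtained via a step-local factor-assignment scheme rather than a shape-enumeration argument.

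Evaluating your proposal on its own terms: the trace-moment framework and the target estimate $\E[\Tr(M_\ell^{2q})]\le \binom{n}{\ell}(C_r\ell^r n^{r/2})^q$ are both reasonable, and you correctly identify that the $(2q-1)!!$ matching count is the enemy. But your resolution --- ``restricting to a sub-family of shapes of total size $O(1)^q$ whose contribution dominates'' --- is the entire difficulty, and you give no mechanism for it. This is not a detail to be filled in later; it is the proof.

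There is also a conceptual slip in your account of where the extra $\ell^{r/2}$ comes from. You write that the Kikuchi pairing condition $S_a=S_b$ ``admits additional shapes where distinct vertex pairs share the same $r$-set label, boosting the per-shape count by an extra $\ell^{r/2}$-type factor.'' But in the Kikuchi graph, once the current vertex $I_t$ and the edge $e$ are fixed, the next vertex is \emph{uniquely} $I_{t+1}=I_t\Delta e$ (provided $|I_t\cap e|=r/2$); there is no extra multiplicity from the endpoint pair. The genuine source of slack between $\ell^{r/4+1/2}$ and $\ell^{r/2}$ lies elsewhere --- roughly, in how carefully one accounts for the cost of identifying an $L$-edge from the current boundary. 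The paper shows this ``last-cost'' is $\Theta_r(\ell)$ per $L$-step in the sharp analysis; a cruder bound of $\binom{\ell}{r/2}=O_r(\ell^{r/2})$ per step (choose which $r/2$ boundary elements are outgoing, then argue the edge is determined among previously seen ones) would land at $B_{pc}$, but making even that rigorous without the $2q$ overcount requires an argument you have not supplied.
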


To summarize, the first bound $B_{fp}(G)$ establishes a smooth trade-off but only for running time regimes where $r/2 \leq \ell \leq 3r/4$.  The second bound $B_{pc}$, on the other hand, removes the additional constraint on $\ell$ at the cost of a worse dependence on $\ell$ and does not imply a smooth trade-off between running time and signal strength for tensor PCA in polynomial time. 

In this work, we prove a sharp bound on $\|M_{\ell}(G)\|$ (for both even and odd $r$) for all $\ell \ll n^{\Omega(1)}$ and as a special case, resolve \cref{conj:bandeira}.

\subsection{Our Results}
 We are now ready to describe our main result. Firstly, we state our main result on sharp spectral norm bounds for Kikuchi matrices, and defer the formal definition of the odd-$r$ case to the subsequent section.
 
\begin{theorem}[Improved Spectral Norm Bound for Kikuchi Matrix for Even-$r$] \label{thm:main-thm-even} For any even $r>2$,  and $G$ a random symmetric tensor of order-$r$ with entry $G_S \sim N(0,1)$ for any $S\in \binom{n}{r}$, let $M_\ell(G)$ be the level-$\ell$ Kikuchi matrix of $G$ defined in \cref{def: kikuchi-even},
there exists some constant $\delta>0$ such that for any $\ell \in \N$ such that $\ell <n^\delta $, with probability at least $1-o_n(1)$, \[ 
\|M_\ell(G)\|_{sp} \leq   O_r(1) \cdot \left(\sqrt{n \cdot \ell }\right)^{r/2} \cdot \sqrt{\ell }\,.
\]
where $O_r(1)$ hides constants independent of $n$ and $\ell$ but possibly depending on $r$.
\end{theorem}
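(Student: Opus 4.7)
The natural route is the trace moment method. With moment order $q \asymp \log\binom{n}{\ell}$, it suffices to prove the non-asymptotic bound
\[
\mathbb{E}\bigl[\mathrm{tr}(M_\ell(G)^{2q})\bigr] \;\leq\; \binom{n}{\ell}\cdot\Bigl(C_r\cdot(n\ell)^{r/4}\cdot\sqrt{\ell}\Bigr)^{2q}
\]
for an $n$- and $\ell$-independent constant $C_r$, after which Markov's inequality applied to $\|M_\ell\|^{2q}\leq \mathrm{tr}(M_\ell^{2q})$ yields the claim with probability $1-o_n(1)$. Expanding $\mathrm{tr}(M_\ell^{2q})$ as a sum over closed walks $I_0 \to I_1 \to \cdots \to I_{2q}=I_0$ on $\ell$-subsets with $|I_t\Delta I_{t+1}|=r$, and applying Isserlis' theorem to the Gaussians, reduces the task to a combinatorial counting problem: only walks whose step-labels $S_t := I_t\Delta I_{t+1}$ repeat with even multiplicity contribute, and each walk is weighted by a product of Gaussian double factorials.

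The organizing idea is a two-layer shape: (i) a Wick-pairing layer recording which pairs of steps are forced to share the same Gaussian $G_S$, and (ii) a set-coincidence layer recording which elements of $[n]$ appearing across the $\ell$-sets $I_t$ are identified. For each shape, the number of walks realizing it factorizes as $n^{V_{\text{free}}}\cdot\ell^{V_{\text{int}}}$, where $V_{\text{free}}$ counts shape-vertices that are free to range over $[n]$ and $V_{\text{int}}$ counts vertices constrained to lie inside an $\ell$-block already fixed earlier in the walk. The sum over shapes then splits into a ``free'' contribution --- non-crossing Wick pairings on pairwise-disjoint Gaussians, which maximize $V_{\text{free}}$, keep $V_{\text{int}}=0$, and already reproduce the intrinsic-free bound $B_{fp}=(n\ell)^{r/4}$ per edge --- and a ``correlated'' contribution in which some elements of the $\ell$-sets are recycled across non-adjacent steps, trading $V_{\text{free}}$ for $V_{\text{int}}$. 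One carries the counting out step by step until the per-edge weight $(n\ell)^{r/4}\cdot\ell^{1/2}$ is exposed, which is precisely the target. The lower-bound construction in \cref{sec:lower-bound} already witnesses exactly the correlated shapes that saturate the extra $\ell^{q}$, which both motivates this decomposition and confirms the bound is sharp.

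\textbf{Main obstacle.} The central difficulty is proving the per-shape bound $n^{V_{\text{free}}}\ell^{V_{\text{int}}}$ together with a convergent sum over shapes whose total is absolute-constant per step and, crucially, \emph{uniform in $q$}. Naive matrix-Bernstein or direct union bounds collapse the $n$- and $\ell$-sides together, reintroducing a $\sqrt{\log n}$ factor or inflating the $\sqrt{\ell}$ enhancement to the $\ell^{r/2}$ of $B_{pc}$; the free-probability bounds of \cite{bandeira2024matrixconcentrationinequalitiesfree} cap out at $\ell<3r/4$ exactly because the intrinsic-freeness hypothesis fails once non-tree set-coincidences begin to matter. The new framework must therefore (a) charge each $\ell$-block identification to exactly one $\ell^{1/2}$ factor with no slack, (b) avoid double-counting between the Wick pairing layer and the set-coincidence layer, and (c) control ``long-range'' recycling where an element of $[n]$ reappears in $\ell$-sets many steps apart --- it is precisely these long-range configurations that break the tree-like structure underlying free independence and for which no prior technique gives tight control. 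Designing an encoding of walks whose bookkeeping remains tight across all $\ell<n^{\delta}$ and all Wick patterns is the technical heart of the proof.
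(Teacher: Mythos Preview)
Your starting point---trace moment method, targeting $\E[\Tr(M_\ell^{2q})] \le \binom{n}{\ell}\bigl(C_r(n\ell)^{r/4}\sqrt{\ell}\bigr)^{2q}$ with $q$ of order $\ell\log n$, then Markov---matches the paper exactly, and you correctly locate where the extra $\sqrt{\ell}$ must come from. But your proposal stops precisely at the hard part: you \emph{name} the obstacle (an encoding of walks whose bookkeeping is tight across all shapes and uniform in $q$) without supplying any mechanism that resolves it. The two-layer shape decomposition you sketch---Wick layer plus set-coincidence layer, with walk count factoring as $n^{V_{\text{free}}}\ell^{V_{\text{int}}}$---is plausible in spirit, but executing it requires a global sum over an exponentially-in-$q$ large family of shapes, and it is exactly this global sum that in prior work reintroduces the $\sqrt{\log n}$ or inflates the $\ell$-dependence to $\ell^{r/2}$. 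You have restated the problem, not solved it.

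The paper takes a different route that sidesteps the global shape sum entirely: a \emph{local, step-by-step factor-assignment scheme}. Instead of classifying walks by shape and bounding each shape, one assigns to every individual step of every walk a budget $B_q$, and shows the product of budgets dominates the trace. The key idea you are missing is the treatment of \emph{last-steps}, i.e.\ steps where an edge makes its final appearance. Na\"ively identifying such an edge costs $2q$ (point back to which earlier step it came from), and this is exactly where $\sqrt{\log n}$ enters. The paper's observation is that in the \emph{ideal} case---when some outgoing vertex in the current $\ell$-boundary is itself making its final appearance in the walk---the last-step edge is uniquely determined once that vertex is named, so the cost drops from $2q$ to $\ell$. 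Non-ideal last-steps are then shown to be lower order by a \emph{vertex-level} (not edge-level) redistribution: each vertex is charged $\sqrt{n}$ at its first and at its last appearance, and $O_r(q)$ at middle appearances; if a last-step is non-ideal then every outgoing vertex reappears later, so none of them collects its $\sqrt{n}$ ``last-appearance'' credit at this step, and the forfeited $\sqrt{n}$ factors more than pay for the $2q$ cost provided $q\le n^\delta$. This vertex-based splitting of the $n$- and $\ell$-factors between first and last appearances is the engine that makes the per-step bound $O_r(1)\cdot(n\ell)^{r/4}\sqrt{\ell}$ hold uniformly in $q$, and there is no analogue of it in your outline.
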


As noted before, the prior bound of\cite{bandeira2024matrixconcentrationinequalitiesfree} worked only when $\ell \leq 3r/4$ and was established only for even $r$. The case of odd $r$ has been challenging in all analyses of Kikuchi matrix method~\cite{guruswami2022boolean, hsieh2023hypergraph, alrabiah2023near, kothari2023exponential, kothari2024superpolynomial}. 

\begin{theorem}[Improved Spectral Norm Bound for Kikuchi Matrix for Odd-$r$]\label{thm:main-thm-odd}  For any odd $r>2$,  and $G$ a random symmetric tensor of order-$r$ in the above normalization,  let $M_\ell(G)$ be the level-$\ell$ Kikuchi matrix of $G$ defined in \cref{def: kikuchi-odd}, there exists some constant $\delta>0$ such that for any $\ell \in \N$ such that $\ell \leq  n^\delta$, with probability at least $1-o_n(1)$, 
\[ 
\|M_\ell(G)\|_{sp} \leq O_r(1) \cdot \left(\sqrt{n\cdot \ell}\right)^r \,.\]
\end{theorem}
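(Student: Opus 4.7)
The plan is to prove \cref{thm:main-thm-odd} by a trace-method argument that adapts the framework of \cref{thm:main-thm-even} to the bipartite structure intrinsic to odd orders. Specifically, I would bound $\E[\mathrm{Tr}(M_\ell(G)^{2q})]$ for a carefully chosen $q$ (large enough to absorb the $\binom{n}{\ell}$ dimension factor into a constant after taking $2q$-th roots, yet small enough to keep the shape enumeration tractable), convert this trace moment estimate into a spectral-norm tail bound via Markov's inequality, and match constants to obtain the target.

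The first technical step is to expand the trace as a sum over closed walks $I_0 \to I_1 \to \cdots \to I_{2q} = I_0$ in the Kikuchi graph, and invoke Wick's (Isserlis's) formula to reduce the expectation of a product of Gaussian entries to a sum over perfect matchings of the $2q$ ``hyperedges.'' Only walks in which every $r$-subset $I_t \Delta I_{t+1}$ is traversed an even number of times contribute. For odd $r$, the constraint $|I_t \Delta I_{t+1}| = r$ being odd forces $|I_t|$ and $|I_{t+1}|$ to have opposite parities, so the Kikuchi graph is naturally bipartite. Following the conventions of \cref{def: kikuchi-odd}, I would pass to the Hermitian dilation of $M_\ell$ (the block matrix with $M_\ell$ off the diagonal and zero on the diagonal) and equivalently estimate $\mathrm{Tr}((M_\ell M_\ell^\top)^q)$, which counts even-length closed walks in the underlying bipartite graph.

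Next, I would group walks by their combinatorial \emph{shape}: the isomorphism class of the multi-hypergraph of hyperedges, decorated with propagation data recording which vertices persist across consecutive $I_t$'s as auxiliary (non-hyperedge) elements. For a fixed shape, the number of labelings into $[n]$ factors into at most $n^{rq}$ choices for the distinct hyperedge vertices and at most $\ell^{rq}$ choices for the auxiliary slots inside the sets $I_t$. Summing over shapes via the sharp counting lemma underlying \cref{thm:main-thm-even}, suitably generalized to the bipartite setting, should yield
\[
\E[\mathrm{Tr}(M_\ell^{2q})] \;\leq\; O_r(1)^{2q} \cdot n^{rq} \cdot \ell^{rq},
\]
and hence, after Markov's inequality and the choice of $q$ above, the target bound $\|M_\ell\|_{sp} \leq O_r(1) \cdot (\sqrt{n\ell})^r$ with probability $1 - o_n(1)$.

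The main technical obstacle will be the shape enumeration itself. Unlike the even case, the bipartite structure implies that consecutive set indices lie at different parity levels, so the auxiliary-vertex propagation analysis is no longer symmetric under time-shift, and one must carefully track extra boundary contributions to avoid a spurious loss of polynomial factors of $\ell$. I expect the counting lemma from \cref{thm:main-thm-even} to extend once one identifies the correct invariant to charge against, for instance the total flux of auxiliary vertices between consecutive $I_t$'s, which remains balanced along any valid closed walk even when $r$ is odd. With this extension in hand, the remainder of the argument proceeds essentially as in the even case, and the distinct polynomial dependence $(\sqrt{n\ell})^r$ (versus the even-$r$ bound $(\sqrt{n\ell})^{r/2}\sqrt{\ell}$) arises directly from the different normalization of the odd-$r$ Kikuchi matrix recorded in \cref{def: kikuchi-odd}.
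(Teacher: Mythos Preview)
Your proposal rests on a misreading of \cref{def: kikuchi-odd}. You write that ``the constraint $|I_t \Delta I_{t+1}| = r$ being odd forces $|I_t|$ and $|I_{t+1}|$ to have opposite parities, so the Kikuchi graph is naturally bipartite,'' and you plan to work with a Hermitian dilation. But the odd-$r$ Kikuchi matrix in this paper is \emph{not} a bipartite object with $|I\Delta J|=r$. Both row and column indices lie in $\binom{n}{\ell}$, the symmetric difference satisfies $|I\Delta J|=2(r-1)$, and each entry is already a degree-$2$ polynomial $\sum_t G_{S_1\cup T_1\cup\{t\}}G_{S_2\cup T_2\cup\{t\}}$ built via the Cauchy--Schwarz trick. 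Consequently each step of a trace-walk traverses \emph{two} hyperedges sharing an intermediate vertex $W_t$, the step-status lives in $\{F,H,L\}\times\{F,H,L\}$, and there is no bipartite dilation to pass to. Your entire combinatorial plan (Wick matchings of $2q$ single hyperedges, parity-alternating set sizes, bipartite shape enumeration) is aimed at a different matrix.

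Even if one grants you the right object, the proposal is missing the mechanism that produces the exponent $r$ rather than $r/2$. In the paper's factor-assignment scheme the point is that an odd-$r$ step has $r$ incoming active vertices but only $r-1$ of them sit on the step boundary (the intermediate vertex $W_t$ does not), so the out-going $\sqrt{\ell}$ factors contribute $\sqrt{\ell}^{\,r-1}$; together with the single $\sqrt{\ell}$ from the ideal last-step cost and $\sqrt{n}^{\,r}$ from first/last vertex appearances this yields $(\sqrt{n\ell})^r$ per step. The ideal last-step condition for odd $r$ must additionally require that $W_t$ itself is making its final appearance, which is what lets one identify \emph{both} $L$-edges of the step with a single $O_r(\ell)$ cost via two applications of \cref{claim:retrun-fixed-given-vertex}. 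None of this structure is visible in your ``$n^{rq}\cdot\ell^{rq}$ labelings plus a counting lemma'' sketch, and without it the last-cost analysis would default to a $2q$ factor and reinstate the $\sqrt{\log n}$ loss you are trying to remove.
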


\begin{remark}
	 Our bound works to $\ell = n^\delta$ for some constant $\delta>0$ and gives stronger concentration bound that holds with error probability at most $\exp(n^\delta)$ as opposed to the polynomial tail bound from black-box non-commutative Khintchine.
\end{remark} 

 
 From our improved norm bounds for Kikuchi matrices, we show that the Kikuchi hierarchy gives a smooth trade-off for the Tensor PCA problem in the full polynomial-time regime, proving the conjecture of \cite{bandeira_kikuchi_blogpost}.

\begin{theorem}[Smooth Tradeoff for Tensor PCA] \label{thm:tensor-pca-application}  There exists some constant $\delta$ such that for any $\ell\leq n^\delta$,
	there is an algorithm with runtime $n^{O(\ell)}$ that solves the detection problem of Tensor PCA with probability at least $1-o_n(1)$ for \[
	\lambda > C_r \cdot n^{-r/4 } \cdot \ell^{1/2-r/4}
	 \]
	 for some absolute constant $C_r$ that depends solely on $r$. 
	 
	 Concretely, there is an algorithm $f(T): \R^{n^r} \rightarrow \{0, 1\} $ running in time $n^{O(\ell)}$ that satisfies $ 
	 \Pr_G[f(G)  = 1 ] = o_n(1) 
	 $ if $\lambda = 0$ and $G$ is a random symmetric tensor, while   $ 
	 \Pr_{G, v}[f( G +\lambda \cdot v^{\otimes r} )  = 1 ] = 1-o_n(1) 
	 $ if $\lambda >   C_r \cdot n^{-r/4 } \cdot \ell^{1/2-r/4} $ and $v$ is a uniformly chosen boolean vector $\{\pm 1\}^n$.
\end{theorem}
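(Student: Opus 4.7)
The plan is to convert the spectral norm bounds of \cref{thm:main-thm-even} and \cref{thm:main-thm-odd} into a detection algorithm via the standard signal-vs.-noise comparison that underlies the Kikuchi spectral framework, as already previewed by the claim preceding the norm bounds. Concretely, given the input tensor $T$, the algorithm forms the Kikuchi matrix $M_\ell(T)$ (which can be done in $n^{O(\ell)}$ time and space), computes its top singular value by power iteration, and outputs $1$ iff $\|M_\ell(T)\|_{sp}$ exceeds the threshold $\tau := 2 B_{r,\ell}$, where $B_{r,\ell}$ denotes the high-probability noise bound supplied by the relevant theorem. Under the null hypothesis, $M_\ell(T) = M_\ell(G)$ and \cref{thm:main-thm-even} (resp.\ \cref{thm:main-thm-odd}) immediately gives $\|M_\ell(T)\|_{sp} \leq B_{r,\ell} < \tau$ with probability $1-o_n(1)$, so the test outputs $0$ whp.

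Under the alternative, the linearity of the Kikuchi construction in the input tensor gives $M_\ell(T) = M_\ell(G) + \lambda \cdot M_\ell(v^{\otimes r})$, and the key step is to lower bound the top singular value of the rank-one signal piece $M_\ell(v^{\otimes r})$ by $\Theta_r(n^{r/2}\ell^{r/2})$. In the even-$r$ case I would introduce the ``character'' vector $w \in \{\pm 1\}^{\binom{[n]}{\ell}}$ with entries $w_I := \prod_{i\in I}v_i$. Exploiting $v_i^2 = 1$ produces the cancellation
\[
\bigl(M_\ell(v^{\otimes r})\, w\bigr)_I \;=\; \sum_{J:\,|I\Delta J|=r}\Bigl(\prod_{i\in I\Delta J}v_i\Bigr)\prod_{j\in J}v_j \;=\; \sum_{J:\,|I\Delta J|=r} \prod_{i\in I} v_i \;=\; N_\ell\cdot w_I ,
\]
where $N_\ell = \binom{\ell}{r/2}\binom{n-\ell}{r/2} = \Theta_r(n^{r/2}\ell^{r/2})$ is the constant row-degree of the sparsity pattern of $M_\ell$. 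Hence $w$ is an eigenvector of eigenvalue $N_\ell$ and $\|M_\ell(v^{\otimes r})\|_{sp}\geq N_\ell$. For odd $r$, the matrix of \cref{def: kikuchi-odd} is rectangular between subsets of adjacent sizes, and the analogous computation applied to a pair of left/right character singular vectors $w, w'$ with entries $\prod_{i\in I}v_i$ and $\prod_{j\in J}v_j$ yields the identity $w^\top M_\ell(v^{\otimes r}) w' = |\{(I,J):|I\Delta J|=r\}|$, which after dividing by $\|w\|\|w'\|$ again produces a top singular value of the required order $\Theta_r(n^{r/2}\ell^{r/2})$ from the binomial counts of one-sided row-degree.

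Combining the two ingredients via the triangle inequality,
\[
\|M_\ell(T)\|_{sp} \;\geq\; \lambda\cdot \Theta_r(n^{r/2}\ell^{r/2}) \,-\, B_{r,\ell} ,
\]
which exceeds $\tau = 2 B_{r,\ell}$ as soon as $\lambda \geq C_r \cdot B_{r,\ell}/(n^{r/2}\ell^{r/2})$. Substituting $B_{r,\ell} = O_r(n^{r/4}\ell^{r/4 + 1/2})$ from \cref{thm:main-thm-even} collapses this to the advertised threshold $\lambda \geq C_r\cdot n^{-r/4}\ell^{1/2 - r/4}$; the analogous substitution from \cref{thm:main-thm-odd} yields the same bound for odd $r$. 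The only nontrivial input is the sharp noise bound itself; once \cref{thm:main-thm-even} and \cref{thm:main-thm-odd} are used as black boxes, the remainder is elementary, and the only real bookkeeping obstacle is verifying the binomial counts in the rectangular odd-$r$ signal computation, which differ from the even case only in the choice of the two subset sizes.
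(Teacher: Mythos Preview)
Your even-$r$ argument is essentially identical to the paper's: both form the character vector $\tilde v_I=\prod_{i\in I}v_i$, verify it is an eigenvector of $M_\ell(v^{\otimes r})$ with eigenvalue $\binom{\ell}{r/2}\binom{n-\ell}{r/2}=\Theta_r(n^{r/2}\ell^{r/2})$, and threshold $\|M_\ell(T)\|_{sp}$ against the noise bound from \cref{thm:main-thm-even}. Nothing to add there.

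The odd-$r$ part has a genuine gap. You describe the matrix of \cref{def: kikuchi-odd} as ``rectangular between subsets of adjacent sizes'' and then invoke linearity to write $M_\ell(T)=M_\ell(G)+\lambda\,M_\ell(v^{\otimes r})$. Neither is correct for the construction used here: the odd-$r$ Kikuchi matrix in this paper is \emph{square}, indexed by $\binom{[n]}{\ell}\times\binom{[n]}{\ell}$, and each entry is a degree-$2$ polynomial in the input tensor (a sum of products $G_{S_1\cup T_1\cup\{t\}}\,G_{S_2\cup T_2\cup\{t\}}$). Consequently the map $G\mapsto M_\ell(G)$ is quadratic, $M_\ell(G+\lambda v^{\otimes r})$ picks up cross terms, and the pure-signal piece scales like $\lambda^2$ rather than $\lambda$. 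The paper's odd-$r$ detection claim handles this by computing $\tilde v^\top M_\ell(\lambda v^{\otimes r})\tilde v=\Theta_r(\lambda^2)\cdot n^{r}\ell^{r-1}$ (up to constants), comparing it to the noise bound $(\sqrt{n\ell})^{r}$ from \cref{thm:main-thm-odd}, and then taking a square root to extract the $\lambda$-threshold. Your proposed signal value $\Theta_r(n^{r/2}\ell^{r/2})$ and the ensuing substitution would not recover the stated threshold; if you carried your numbers through you would get a constant threshold for $\lambda$, which is off. The fix is not hard once you use the correct definition, but the argument as written does not go through for odd $r$.
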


\paragraph{Downstream Algorithmic Application From Detection.}
Our sharp threshold result for the detection problem also gives rise to the following extensions by combining with the known techniques from \cite{WAM19}: 1) extension to more general prior; and 2) a recovery algorithm for the same regime.

\begin{theorem}[Extension to General Prior]
	Similar to Thmeorem B.1 in  \cite{WAM19}, our result extends to the general prior $v$ with i.i.d. entries drawn from some distribution $\pi $  on $\R$ (that does not depend on $n$) and normalized such that $\E[\pi^2]=1$ with some constant overhead in the range of the signal-noise ratio $\lambda$ as above. \end{theorem}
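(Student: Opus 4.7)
The plan is to keep the Kikuchi spectral distinguishing algorithm of \Cref{thm:tensor-pca-application} unchanged, and show that its analysis extends from the Boolean prior $\{\pm 1\}^n$ to any prior $\pi$ with $\E[\pi^2] = 1$ at the cost of only a constant factor in $\lambda$. Since the noise-side bound on $\|M_\ell(G)\|_{sp}$ depends only on the Gaussian tensor $G$ and is insensitive to the prior, it remains intact; only the planted-signal lower bound on $\|M_\ell(v^{\otimes r})\|_{sp}$ needs to be re-established, with the same scaling $\gtrsim n^{r/2} \ell^{r/2}$ as in the Boolean case up to an absolute constant.

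To lower bound $\|M_\ell(v^{\otimes r})\|_{sp}$ for general $v$, I would use the Rayleigh quotient against the explicit test vector $u \in \R^{\binom{n}{\ell}}$ defined by $u_I := \prod_{i \in I} v_i$. A direct expansion gives
\[
\langle u, M_\ell(v^{\otimes r}) u\rangle \;=\; \sum_{I,J \,:\, |I \Delta J| = r} \prod_{i \in I \cup J} v_i^2, \qquad \|u\|_2^2 \;=\; \sum_{I} \prod_{i \in I} v_i^2.
\]
Since $\E[v_i^2] = 1$, both quantities match their Boolean-case expectations exactly, so the Rayleigh quotient equals in expectation the same $N_\ell/\binom{n}{\ell}$ that appears in the Boolean analysis. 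To upgrade from expectation to high probability, I would first truncate: write $v = v^{(\leq K)} + v^{(>K)}$ for an absolute constant $K$ depending only on $\pi$. The truncated entries satisfy $\E[(v_i^{(\leq K)})^2] \geq 1 - o_K(1)$, while the tail part $v^{(>K)}$ contributes only a lower order perturbation to both the signal tensor and the test vector. For the bounded remainder, standard concentration inequalities for low-degree polynomials of independent bounded random variables (e.g., a Bernstein or Hanson-Wright type bound, using $\ell \leq n^\delta$) show that both the numerator and denominator lie within a constant factor of their expectations with probability $1 - o_n(1)$.

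The main technical obstacle is controlling the heavy tails of $\pi$: without truncation, a single large outlier $v_i$ could inflate numerator and denominator in uncontrolled ways. Because $\pi$ is a fixed distribution independent of $n$, the truncation step handles this cleanly by taking $K$ to be a large absolute constant, paying only an $o_K(1)$ loss in second moment and a constant multiplicative loss in the Rayleigh quotient. Once these pieces are in place, the proof of \Cref{thm:tensor-pca-application} proceeds verbatim, with the constant $C_r$ replaced by a new constant $C_r'$ that additionally depends on the fixed distribution $\pi$ through its truncation threshold and low-order moments. This mirrors the extension template of Theorem~B.1 in \cite{WAM19}.
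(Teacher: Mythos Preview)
The paper does not actually prove this theorem; it is stated and immediately deferred to Theorem~B.1 of \cite{WAM19}. Your proposal follows exactly the template that reference uses: keep the noise bound on $\|M_\ell(G)\|_{sp}$ intact, and redo only the signal lower bound via the Rayleigh quotient at the test vector $u_I=\prod_{i\in I}v_i$, combined with a truncation of $\pi$ at a large constant level. So at the level of strategy you are aligned with what the paper invokes.

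One technical point deserves more care than your sketch gives it. You write that concentration of the numerator $\sum_{|I\Delta J|=r}\prod_{i\in I\cup J}v_i^2$ and denominator $\sum_I\prod_{i\in I}v_i^2$ follows from ``standard concentration inequalities for low-degree polynomials \dots\ (e.g., a Bernstein or Hanson--Wright type bound).'' These are polynomials of degree $\ell+r/2$ and $\ell$ in the $v_i$'s, and $\ell$ may be as large as $n^\delta$, so Hanson--Wright (degree~$2$) or fixed-degree hypercontractivity does not apply directly. The clean fix, which is the one \cite{WAM19} uses, is to observe that the \emph{ratio} factors nicely: for each $I$ the inner sum $\sum_{J:|I\Delta J|=r}\prod_{j\in J\setminus I}v_j^2$ is a degree-$r/2$ symmetric polynomial in $\{v_j^2\}_{j\notin I}$, hence a constant-degree object to which genuine low-degree concentration applies uniformly in $I$; alternatively one can treat $\|u\|_2^2$ as an order-$\ell$ U-statistic and use that $\ell\le n^\delta$ with small $\delta$. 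Either route works, but your current phrasing would not survive as written.
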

\begin{remark}
Though not explicitly carried out, our technique extends to settings in which the underlying random tensor of i.i.d. entries with mean-$0$ and variance-$1$ have weak-yet-polynomial dependence in $n$ in the higher-order moments in a fashion similar to norm bounds for adjacency matrix of sparse random graphs of at least polylogarithmic average degree \cite{xu2024switchinggraphmatrixnorm} .
\end{remark}

\begin{theorem}[Application for Recovery ] \label{thm:recovery} For  $r>2$, and $\ell \in N$,
	let $v$ be the planted signal in Tensor PCA instance as defined above. Let $c>0$ be some absolute constant, and $C_r>0$ be some constant depending solely on $r$ from our norm bounds. There is an algorithm that outputs a vector $\widetilde{v}$ in time $n^{O(\ell)}$  for all $\eps>0, \delta\in (0,1)$,  and $\lambda\geq c \cdot C_r \cdot \epsilon^{-4} \cdot  n^{-r/4 } \cdot \ell^{1/2-r/4}$ such that \[ 
	corr(v,\widetilde{v}) \geq 1- c\cdot   \eps
	\]   
	with probability $1-\delta$ 
	provided $\ell \leq n\eps^2$.
\end{theorem}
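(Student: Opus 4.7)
The plan is to instantiate the detection-to-recovery reduction from Wein–Alaoui–Moore \cite{WAM19} using the sharp spectral bounds of \cref{thm:main-thm-even} and \cref{thm:main-thm-odd} in place of the lossy black-box bounds available to them. The proof has three standard steps.

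First, decompose $M_\ell(T) = M_\ell(G) + \lambda M_\ell(v^{\otimes r})$, and define the Kikuchi lift $w_v \in \R^{\binom{n}{\ell}}$ by $w_v(I) = \prod_{i \in I} v_i$. Since $v_i^2 = 1$, for every pair with $|I\Delta J|=r$ one has $(v^{\otimes r})_{I\Delta J} = w_v(I)w_v(J)$, and a direct count yields $M_\ell(v^{\otimes r})\,w_v = N_\ell \cdot w_v$ with $N_\ell = \binom{\ell}{r/2}\binom{n-\ell}{r/2} = \Theta_r\bigl((n\ell)^{r/2}\bigr)$; the odd-$r$ analogue via \cref{def: kikuchi-odd} has the same scaling. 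Thus $M_\ell(T)$ is a rank-one planted perturbation of operator norm $\lambda N_\ell$ atop $M_\ell(G)$ in the direction $w_v/\|w_v\|$.

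Second, let $u$ be the leading unit eigenvector of $M_\ell(T)$. Combining the planted eigenvalue with our noise bound $\|M_\ell(G)\| \leq O_r\bigl((n\ell)^{r/4}\sqrt{\ell}\bigr)$, the hypothesis $\lambda \geq c\, C_r\, \epsilon^{-4}\, n^{-r/4}\, \ell^{1/2-r/4}$ yields signal-to-noise ratio $\lambda N_\ell / \|M_\ell(G)\| = \Omega(\epsilon^{-4})$. Davis–Kahan then gives $\sin \angle(u, w_v) = O(\epsilon^{4})$, so $|\langle u,\, w_v/\|w_v\|\rangle| \geq 1 - O(\epsilon^{8})$ after flipping the sign of $u$ if necessary.

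Third, round $u$ back to $\R^n$ using the swap estimator of \cite{WAM19}: define the symmetric matrix
\[
\hat V_{ij} \;=\; \frac{\binom{n}{\ell}}{\binom{n-2}{\ell-1}} \sum_{I : i \in I,\ j \notin I} u(I)\, u\bigl((I \setminus \{i\}) \cup \{j\}\bigr).
\]
A direct check gives $\hat V = vv^\top$ identically when $u = w_v/\|w_v\|$. Writing $u = \alpha\, w_v/\|w_v\| + \beta z$ with $\alpha^2 + \beta^2 = 1$ and $z \perp w_v$, the cross and quadratic terms in $\hat V$ are governed by $|\beta| = O(\epsilon^{4})$ and by the action of the swap bilinear form on the orthogonal component $z$; a second-moment calculation identical to the one in \cite{WAM19} bounds these perturbations by $O(\epsilon)\cdot n$ in operator norm, provided the side constraint $\ell \leq n\epsilon^2$ holds so that the swap form incurs at most a $\sqrt{\ell/n}$ blow-up on the orthogonal subspace. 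Taking the top eigenvector of $\hat V$ and applying entrywise sign rounding then produces $\tilde v \in \{\pm 1\}^n$ with $\mathrm{corr}(v, \tilde v) \geq 1 - c\epsilon$.

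The recovery portion of the argument introduces no essentially new technique; its blueprint is the tensor-PCA recovery analysis of \cite{WAM19}, which was already tight modulo a sharp spectral bound on $M_\ell(G)$. That bound — previously unavailable in the regime $\ell \gtrsim r$ — is exactly what our main theorems supply. The only step that needs genuine care is the second-moment control of the swap estimator on the noise component of $u$, and it is precisely the $\ell \leq n\epsilon^2$ hypothesis that ensures the Kikuchi lift does not amplify angular noise past the $\epsilon$-scale at which the final correlation is being measured.
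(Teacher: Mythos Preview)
Your proposal is correct and follows essentially the same approach as the paper: both plug the sharp Kikuchi spectral bounds of \cref{thm:main-thm-even} and \cref{thm:main-thm-odd} into the recovery framework of \cite{WAM19} (their Appendix~A.3), with the voting/swap matrix analysis carried over verbatim. The paper's own proof is in fact just a one-paragraph pointer to \cite{WAM19}, so your write-up is more detailed than, but entirely consistent with, what appears there.
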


\paragraph{Tightness of Our Bounds: Kikuchi Matrix is Not Free for $\ell\gg r$.} 

Besides the algorithmic applications, we also give a lower bound for the trace calculation to demonstrate the tightness of our spectral norm bounds.  
\begin{lemma} For any even $r>2$,  there exists some constant $\delta>0$ such that for any $\ell, q \in \N$ and $\ell, q \leq n^\delta $, and $G$ is a random symmetric tensor of order $r$ with normalization prescribed as above, 
	\[ 
	\E \Tr[ (M_\ell)^{2q}]  \geq   \binom{n}{\ell}\cdot \left( \Theta_r(1)  \cdot  \sqrt{n^{r/2} \cdot \ell^{r/2}} \cdot \sqrt{\ell }\right)^{2q} \,.
	\]
	with probability at least $1-o_n(1)$ where $\Theta_r(1) $ hides constants solely depending on $r$ but independent of $n, q$ and $\ell$.
\end{lemma}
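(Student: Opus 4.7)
The natural approach is the moment/trace method. By Isserlis' theorem for products of centered Gaussians,
\[
\E\Tr M_\ell^{2q}
=\sum_{I_0,\ldots,I_{2q-1}\in\binom{[n]}{\ell}}\;\sum_{\pi\in\mathcal P_2([2q])}\;\prod_{(t,s)\in\pi}\mathbf{1}\!\bigl[I_t\Delta I_{t+1}=I_s\Delta I_{s+1}\bigr],
\]
where the outer sum ranges over closed walks of length $2q$ on $\binom{[n]}{\ell}$ with $|I_t\Delta I_{t+1}|=r$ at every step and $\mathcal P_2([2q])$ denotes perfect matchings of $[2q]$. Every valid (walk, matching) pair contributes $1$, so it suffices to exhibit a family of valid pairs of cardinality at least the claimed right-hand side. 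The first move is to fix the starting vertex $I_0$, which yields the $\binom{n}{\ell}$ factor, and the remaining task is to lower bound the number of valid (walk, pairing) pairs emanating from a fixed $I_0$ by $(\Theta_r(1)\cdot n^{r/4}\ell^{r/4+1/2})^{2q}$.

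The key step is identifying a family of pairings and walks whose contribution beats the naive Wigner count by the required factor of $\ell^q$. The Wigner family --- non-crossing matchings combined with tree-like backtracking walks whose labels are drawn independently from the $\Theta((n\ell)^{r/2})$ available neighbors at each step --- already realises the $(n\ell)^{rq/2}$ factor, matching exactly the bound predicted by intrinsic freeness in \cite{bandeira2024matrixconcentrationinequalitiesfree}. The extra $\ell^q$ is precisely the obstruction to intrinsic freeness when $\ell \gg r$. I would build the enrichment by allowing crossing matchings in which the two paired occurrences of a shared label $U\in\binom{[n]}{r}$ are realised with \emph{different} splittings $U=A\sqcup B$ into ``removed'' and ``added'' halves at the two paired steps. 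Parametrising the extra freedom by a pivot element inside the bulk of the current $\ell$-subset (that is, in $I_u$ but outside the Wigner-skeleton support) contributes a factor of $\Theta(\ell)$ per crossing pair, and iterating over the $q$ pairs yields the missing $\ell^q$.

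The hard part will be the combinatorial bookkeeping: one must check that (a) distinct pivot decorations really do produce distinct nonzero Wick terms, (b) every intermediate subset $I_u$ remains in $\binom{[n]}{\ell}$, and (c) the enriched family is not double-counting configurations already charged to the Wigner baseline. I would handle this by establishing a canonical decomposition of each contributing tuple into a Wigner skeleton (a non-crossing pairing plus a base label tuple) and a pivot decoration vector in $[\Theta(\ell)]^q$, verifying injectivity region-by-region on the partition of $[n]$ induced by the characteristic functions of $I_0,I_1,\ldots,I_{2q-1}$ --- the same eight-region calculus that arises in the upper bound proof, but run in reverse to produce a matching lower bound. Finally, promoting the expectation bound to the high-probability statement is standard: $\Tr M_\ell^{2q}$ is a polynomial of degree $2q$ in the independent Gaussian entries of $G$, and Gaussian polynomial concentration together with the spectral upper bound of \cref{thm:main-thm-even} keeps the trace within subexponential factors of its expectation with probability $1-o_n(1)$.
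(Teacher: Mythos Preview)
Your high-level plan is right: restrict to walks in which every hyperedge is used exactly twice so that each term contributes $+1$, and then exhibit a family of such walks of the required size. Your diagnosis that the gain of $\ell^{q}$ over the free/Wigner baseline must come from \emph{crossing} pairings is also the correct intuition. But the mechanism you propose for realising that gain is off target, and the paper's construction is considerably more concrete and simpler than what you outline.

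The extra $\ell$ per matched pair does \emph{not} come from ``different splittings $U=A\sqcup B$'' of a fixed hyperedge: once the current boundary $I_t$ and the hyperedge $U=I_t\Delta I_{t+1}$ are fixed, the removed half is forced to be $I_t\cap U$ and the added half is $U\setminus I_t$, so there is no splitting freedom to parametrise by a ``pivot element.'' The genuine source of the $\ell$ is the freedom in choosing \emph{which currently open edge to close} at an $L$-step --- exactly the ``return confusion'' pictured in the overview. The paper implements this directly: set $m:=2\ell/r$ and break the length-$2q$ walk into $q/m$ independent chunks of length $2m$, each returning to its starting vertex. Within a chunk one (i) partitions the $\ell$ vertices of the start into $m$ disjoint buckets of size $r/2$, in $\binom{\ell}{r/2,\ldots,r/2}$ ways; (ii) assigns to each bucket a fresh $r/2$-set of incoming vertices, in $\sim\binom{n-\ell}{r/2}^{m}$ ways; and (iii) chooses any word of length $2m$ on the $m$ buckets with each letter appearing twice (first occurrence is the $F$-step, second the $L$-step), in $\binom{2m}{2,\ldots,2}=(2m)!/2^{m}$ ways. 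Stirling on (i) and (iii) followed by the $2m$-th root gives the per-step factor $\Theta_r(1)\cdot(2\ell/r)^{r/4}\cdot(2\ell/r)\cdot\sqrt{\binom{n-\ell}{r/2}}$, which is the claim. No decoration of a Wigner skeleton, no injectivity bookkeeping, and no ``eight-region calculus'' is needed: the chunks are self-contained and the three choices (i)--(iii) are visibly independent.

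Finally, $\E\Tr[(M_\ell)^{2q}]$ is a deterministic number, so the ``with probability $1-o_n(1)$'' clause in the statement is vacuous; your last paragraph on Gaussian polynomial concentration is addressing a non-issue. (Concentration enters only in the separate step that upgrades the trace lower bound to a spectral-norm lower bound, and the paper carries that out for Rademacher input via a bounded-differences argument, not Gaussian hypercontractivity.)
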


\begin{remark}
	 Since we allow $\Theta_r(1)$ slack in the above, this does not pose a conflict for the bound $B_{fp}(G)$ for the regime $\ell < r$. Moreover, 
	  this bound is most interesting for first picking $r$ and then choosing $\ell > C \cdot r$ for some large constant $C$.
\end{remark}

As a by-product of our techniques, by ruling out the semicircle spectrum, this lower bound confirms the suggestion of remark 3.4 in \cite{bandeira2024matrixconcentrationinequalitiesfree}  that Kikuchi matrix of larger-$\ell$ is not intrinsically free. To highlight this discrepancy, observe that $B_{fp}$ does not contain the extra factor of $\sqrt{\ell}$ in our final norm bound, while as we will show in the technical sections, this extra factor in fact admits an intuitive combinatorial interpretation. Moreover, our technique also explains why the additional factor of $\sqrt{\ell}$ is absent for the small-$\ell$ regime of $\ell <r$. 

Furthermore, we showcase that in the case the underlying tensor drawn from i.i.d. Rademacher distribution, the above lower bound on the expected trace is sufficient to imply a lower tail bound for spectral norm.
\begin{lemma} For any even $r$, and for $G$ a random symmetric tensor with each entry being i.i.d. from $G_S\sim \{\pm 1\}$ for $S\in \binom{n}{r}$, with probability at least $1-o_n(1)$, \[
\|M\|_{sp} \geq \Theta_r(1) \cdot \sqrt{n^{r/2} \cdot \ell^{r/2}} \cdot \sqrt{\ell }\,.
 \]
	
\end{lemma}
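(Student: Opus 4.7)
The plan is to convert the expected high-trace lower bound of the preceding lemma into a high-probability spectral norm lower bound by combining it with two further ingredients: a matching spectral norm upper bound and Talagrand's convex concentration for Lipschitz functions on the hypercube. The first step is to observe that the preceding lemma's trace lower bound carries over verbatim to Rademacher $G$: its proof isolates a dominant family of pair-matched closed walks of length $2q$, and each such walk's contribution depends only on $\E[G_S^2]=1$, which matches in both the Gaussian and Rademacher settings. Writing $L := \Theta_r(1)\cdot \sqrt{n^{r/2}\ell^{r/2}}\cdot \sqrt{\ell}$ and $N := \binom{n}{\ell}$, this yields $\E \Tr(M^{2q})\geq N\cdot L^{2q}$ for any constant $q \in \N$, which in turn gives $\E \|M\|_{sp}^{2q}\geq L^{2q}$ via the inequality $\|M\|_{sp}^{2q}\geq \Tr(M^{2q})/N$.

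The next step is to establish the Rademacher analog of \cref{thm:main-thm-even}, namely $\|M\|_{sp}\leq U := C_r \cdot L$ with probability $1-o_n(1)$, by adapting the trace-method framework developed in the paper (those moment estimates only rely on $\E[G_S^2]=1$ and crude bounds on higher even moments, both satisfied by Rademacher). Given these two inputs, I would apply a Paley-Zygmund style partitioning of $\E \|M\|_{sp}^{2q}$ over the event $\{\|M\|_{sp}\geq L/2\}$ and its complement: on the complement the contribution is at most $(L/2)^{2q}$; on $\{L/2 \leq \|M\|_{sp}\leq U\}$ it is at most $U^{2q}\cdot \Pr[\|M\|_{sp}\geq L/2]$; and the remaining tail beyond $U$ is negligible thanks to the spectral norm upper bound paired with the deterministic Frobenius-norm fallback $\|M\|_{sp}\leq \|M\|_F$. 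This yields $\Pr[\|M\|_{sp}\geq L/2]\geq (1-2^{-2q})/C_r^{2q} =: p_0(r,q) > 0$, a constant depending only on $r$ and on the chosen constant $q$.

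To boost this constant-probability bound to $1-o_n(1)$, I would invoke Talagrand's convex concentration on the hypercube. The map $\epsilon \mapsto \|M(\epsilon)\|_{sp}$ is convex in the Rademacher vector and Lipschitz with respect to the Euclidean metric on $\{\pm 1\}^{\binom{n}{r}}$, with constant at most $\sqrt{\max_{\|u\|=1}\sum_S (u^\top M_S u)^2}\leq \sqrt{\binom{\ell}{r/2}\binom{n-\ell}{r/2}} = O_r((n\ell)^{r/4})$, since $\sum_S M_S^\top M_S$ is a multiple of the identity. Crucially, this Lipschitz constant is smaller than the target $L$ by a factor of $\sqrt{\ell}$, so Talagrand's inequality yields $\Pr[|\|M\|_{sp}-m|\geq t]\leq 4 \exp(-\Omega(t^2/(n\ell)^{r/2}))$, where $m$ is the median. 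The constant-probability bound from the previous step forces $m\geq L/2 - O_{r,p_0}((n\ell)^{r/4})\geq L/3$ once $\ell$ exceeds an $r$-dependent constant; taking $t = L/12$ then gives $\|M\|_{sp}\geq L/4$ with probability $1-\exp(-\Omega_r(\ell))$. The remaining regime of $\ell$ bounded by an $r$-dependent constant can be handled via standard Gaussian-Rademacher universality for the spectral norm of such structured random matrices.

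The main technical obstacle is the small-$\ell$ regime in the last step, where Talagrand's concentration width is comparable to the target $L$ and the boost from constant to high probability is not immediate. The cleanest resolution is likely a companion variance estimate $\Var \Tr(M^{2q}) = o((\E \Tr(M^{2q}))^2)$, reducing to a bound on cross-term contributions from walk pairs sharing an edge, which is tractable using the combinatorial walk-counting techniques used elsewhere in the paper; Chebyshev then gives the high-probability lower bound directly without appealing to universality. Verifying the Rademacher analog of \cref{thm:main-thm-even} is a secondary obstacle but extends by inspection of the Gaussian argument.
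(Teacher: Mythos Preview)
Your proposal is correct and uses the same two pillars as the paper: the expected high-trace lower bound of the preceding lemma, and Talagrand-type subgaussian concentration for $\|M_\ell\|_{sp}$ on the Rademacher cube. The paper appeals to exactly these (see the text around the corollary and the appendix Proposition bounding $\sum_S (D_S^- \lambda_{\max})^2$, together with the cited Theorem of van Handel). Your computation of the Euclidean Lipschitz constant via $\sum_S M_S^2 = \binom{\ell}{r/2}\binom{n-\ell}{r/2}\cdot I$ is the clean way to justify the subgaussian input, and it matches what the paper is implicitly using.

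Where you differ is in the bridge from ``$\E\|M\|^{2q}\ge L^{2q}$'' to a high-probability lower bound. You go through a Paley--Zygmund step that additionally consumes the matching upper bound (the Rademacher analog of the main theorem), and only afterwards boost via concentration. The paper's route is shorter: once $\|M\|$ is $\sigma$-subgaussian, the standard moment comparison $(\E\|M\|^{2q})^{1/2q}\le \E\|M\|+C\sqrt{q}\,\sigma$ turns $\E\|M\|^{2q}\ge L^{2q}$ directly into $\E\|M\|\ge L-C\sqrt{q}\,\sigma$, after which a single application of concentration gives the $1-o_n(1)$ lower tail. This avoids the upper bound entirely and eliminates your ``secondary obstacle.'' Your Paley--Zygmund argument is not wrong, just unnecessary given the subgaussian structure you already established.

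The small-$\ell$ caveat you flag is genuine and is present in both routes once one uses the (correct) Lipschitz scale $\sigma=\Theta_r((n\ell)^{r/4})=L/\sqrt{\ell}$: the concentration window is comparable to $L$ when $\ell=O_r(1)$. The paper's write-up asserts a stronger $O(1)$ bound on $\sum_S (D_S^-)^2$ whose proof is terse; your $\sigma=(n\ell)^{r/4}$ is the robust bound and is enough for all $\ell$ exceeding an $r$-dependent constant, which is the regime of interest for the ``not intrinsically free'' conclusion. Your suggested fallback via $\Var\,\Tr(M^{2q})=o((\E\Tr M^{2q})^2)$ is a reasonable way to handle bounded $\ell$ if one insists on covering it.
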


	\paragraph{Organization of Our Work} In the next section, we give a technical overview of our techniques by featuring the case of even-$r$, and we present the formal proof in the section 3. We present the lower bounds in section 4.

\begin{remark}
    Between the submission of our manuscript and its posting on arXiv, we were made aware of a concurrent work by ~\cite{li2025smoothcomputationaltransitiontensor} that solves the distinguishing variant of the question. We would like to point out that our result for refutation also implies a similar one for the distinguishing question.
\end{remark}

\section{Technical Overview}
In this section, we start with some technical preliminaries and set the stage for the second half of this section that serves as a technical overview of our work. We will be primarily focusing on the even-$r$ case to showcase our techniques. As Kikuchi matrices are the primary focus of our study, let's begin by formally introducing them. To make our technical analysis intuitive, we will also introduce the diagram perspective of the Kikuchi matrices.
\subsection{Kikuchi Matrices: Definitions and Diagrams.} For completeness, we reiterate the definition for Kikuchi matrix for even-$r$. 
\begin{definition}[Kikuchi Matrix for  Even-$r$] 
	For even $r$, and any $\ell \in \N$, and given tensor $G$ of order-$r$, $M_\ell$ is the Kikuchi matrix of size $n^{\binom{n}{\ell}} \times n^{\binom{n}{\ell}} $ for level-$\ell$ indexed by subsets  $I,J \subseteq \binom{n}{\ell}$ with entry $M_\ell(G) [I,J] \coloneqq G_{I\Delta J}  $ for $|I \Delta J| = r$, and $0$ otherwise..
	
	We usually drop the dependence on $G$ when it is clear. For the bulk of our work, unless otherwise specified, we will be working with $G$ being a random symmetric Gaussian tensor of order-$r$.
\end{definition}	

For odd $r$, albeit less straightforwardly, one may still define Kikuchi matrix in the following way as inspired by the "Cauchy-Schwarz" trick that is standard in this line of works.
\begin{definition}[Kikuchi Matrix for Odd-$r$] \label{def: kikuchi-odd}
	For odd $r$, and any $\ell \in \N$, and given tensor $G$ of order-$r$, we let $M_\ell$ denote the corresponding Kikuchi matrix indexed by subsets $I,J \subseteq \binom{n}{\ell}$ with entry \[ 
	M_\ell [I,J ] = \sum_{t\in [n]\setminus (S\cup T)} \sum_{\substack{S_1, S_2 \\ \text{partition of } S\setminus T \\ |S_1|=|S_2| } }\sum_{\substack{T_1, T_2 \\ \text{partition of } T\setminus S \\ |T_1|=|T_2| } } G_{S_1 \cup T_1 \cup \{t\}} \cdot  G_{S_2 \cup T_2 \cup \{t\}} 
	\] for $|I\Delta J| = 2r-1$, and $ 0$ otherwise.  
\end{definition}
As a sanity check, observe that in the above definition, each matrix entry is a degree-$2$ polynomial of the underlying input. To facilitate our technical analysis, we will borrow the diagram representations from the study of \emph{Graph Matrices} - a family of random matrices with entries being polynomial functions of the underlying input.  We refer interested readers to \cite{AMP20, HKPX23, KPX24} for a more involved introduction, while we briefly review some terminologies in minimality  necessary for the discussion of our work. 


 In short, one can view the two oval on the left and right as row and column index for the corresponding matrix entry, and each circle inside the oval as an index in $[n]$. In this walk, we will refer to them as \emph{step-boundaries} in the walk as they can be viewed as start or destination of some step in the walk.  Moreover, we represent the underlying (hyper-)edge \footnote{With some abuse of notation, we will not distinguish between hyperedge and edge in our technical analysis.} (random variable) via the orange rectangle (in the case $r=4$), and the dotted line connecting two vertices in the row and column index that share the same index in $[n]$, i.e. any gray vertex is a vertex from the intersection of the row and column indices. 
%
\begin{figure}[h!]
\centering
\begin{subfigure}{.5\textwidth}
  \centering
  \includegraphics[width=.8\linewidth]{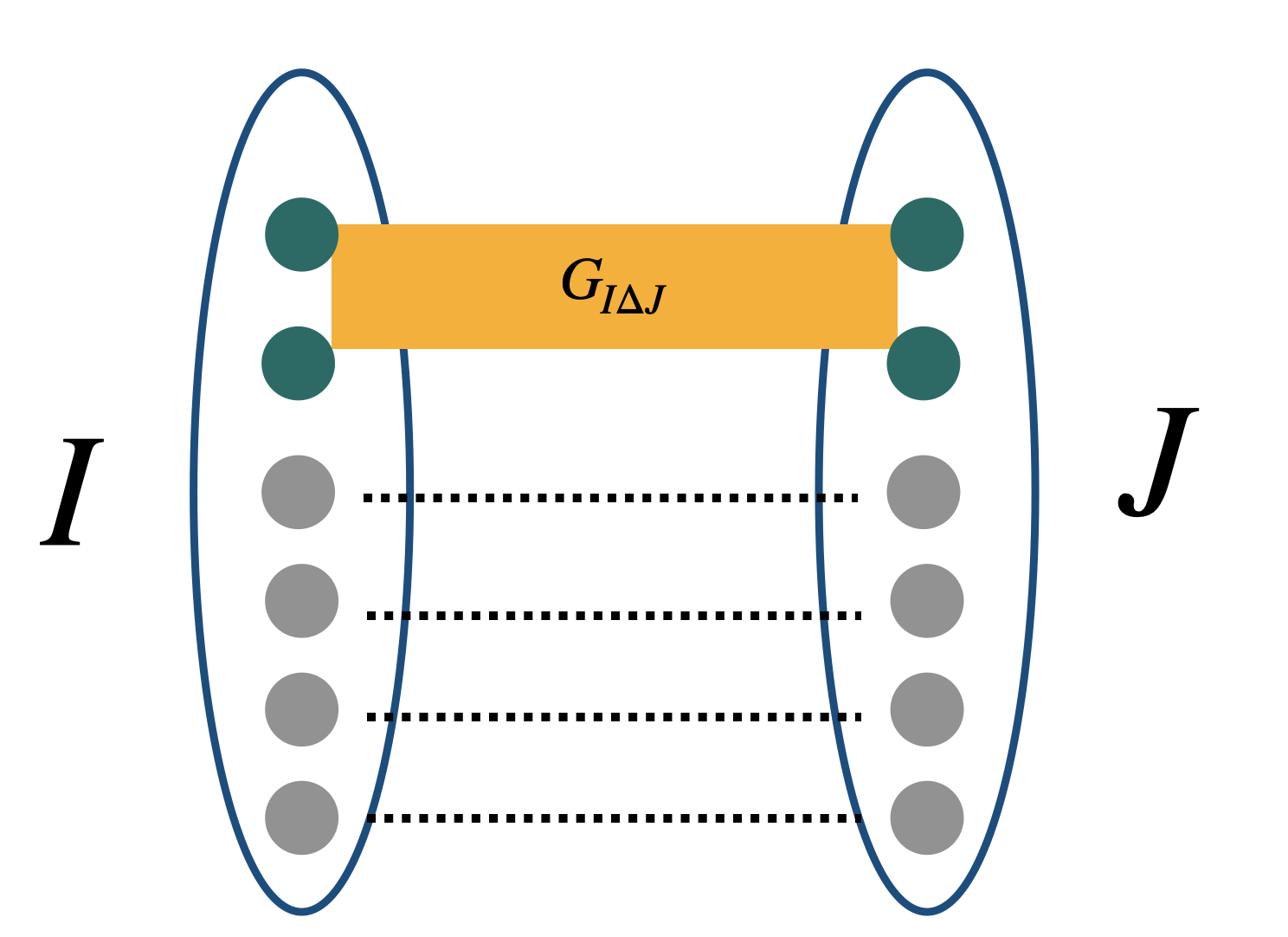}
  \caption{Kikuchi Matrix for $(r=4, \ell = 6)$}
  \label{fig:sub1}
\end{subfigure}%
\begin{subfigure}{.5\textwidth}
  \centering
  \includegraphics[width=.8\linewidth]{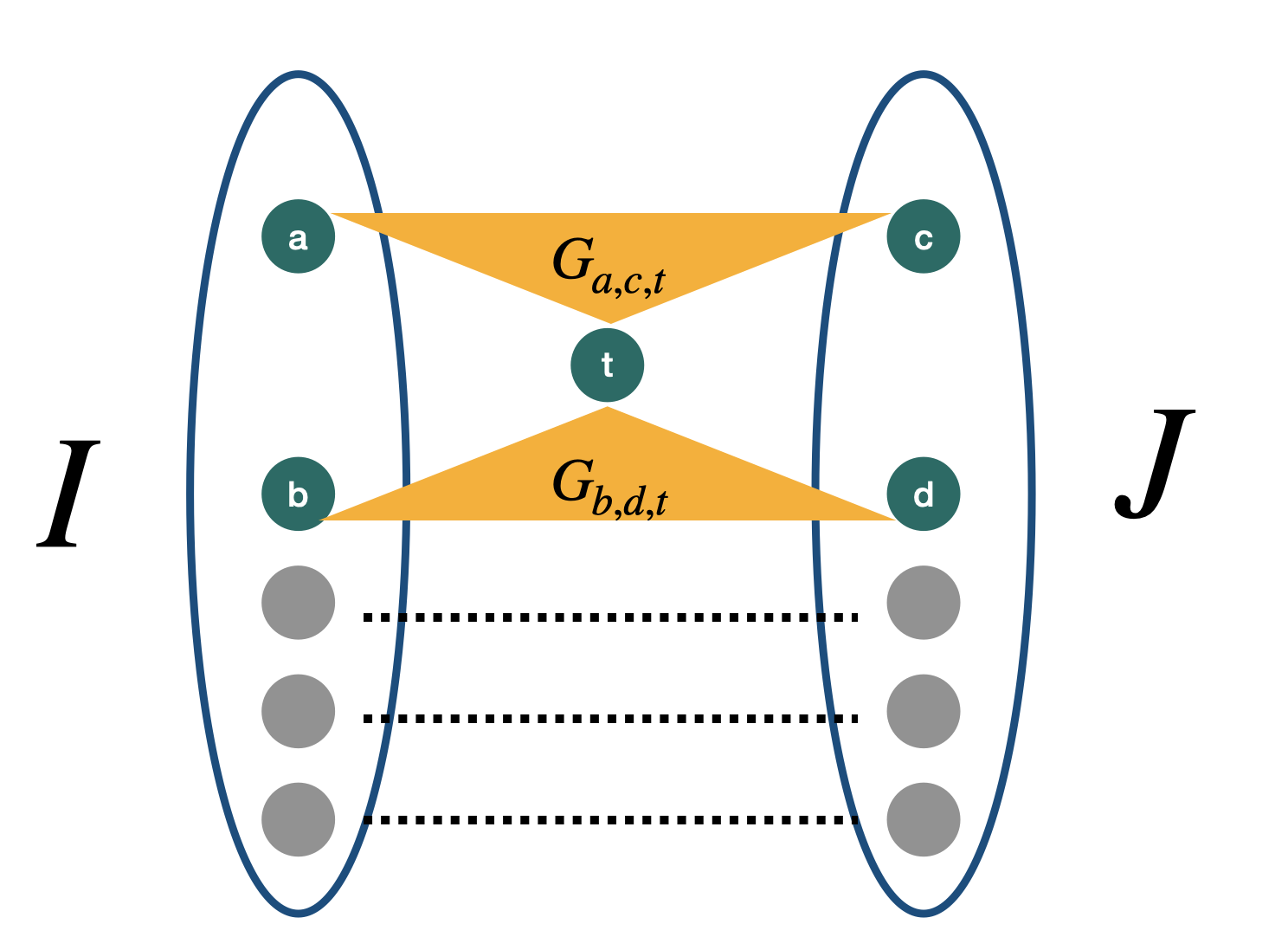}
  \caption{Kikuchi Matrix for $(r=3, \ell = 5)$}
  \label{fig:sub2}
\end{subfigure}
\end{figure}

\subsection{Trace Moment Method and Spectral Norm Bounds}
Our starting point is the trace moment method, the classic technique from random matrix theory for understanding the spectrum of random matrices. 
Despite being a rather elementary technique as we shall see, it has been particularly successful in yielding tight asymptotic bounds, usually as the first successful approach, for random matrices. Its versatility and effectiveness span across classical ensembles such as Gaussian Orthogonal Ensembles and (centered) adjacency matrices of sparse random graphs \cite{Fri08, https://doi.org/10.1002/rsa.20406, BLM15, specgapdensegraph, Bor19} , as well as structured random matrices with intricate dependencies and significant entry-wise correlations that arise in modern algorithmic applications \cite{HSSS16, RRS17, Bhattiprolu2017SumofSquaresCF, DMOSS19, AMP20, MOP19ExplicitRamanujan, JPRTX, KPX24, HKPX23, xu2024switchinggraphmatrixnorm}.

 Trace moment method generally proceeds as following, for any symmetric matrix $M$, and $q\in N$, \[
\|M\|_{sp}^q \leq  \sum_{i} \lambda_i(M)^{2q} =   \Tr(M^{2q}) \,.
 \]
When $M$ is a random matrix, it suffices for one to bound $
\E[\Tr(M)^{2q}]$ by taking a sufficiently high power of the matrix with $q =\omega(\log n)$.  At this point, one can observe that the RHS $\Tr(M^{2q})$ admits a purely combinatorial interpretation, as \[ 
 \Tr(M^{2q}) = \sum_{\substack{P:\text{closed-walk of length-2q} \\P= (v_1,v_2,...,v_{2q}=v_1)} } \prod_{i\in [2q]} M[v_i, v_{i+1}]\,.
\]
This is simply the sum of (weighted) closed walks of length-$2q$ on matrix $M$.  Specializing to our setting of Kikuchi matrix for even-$r$ and taking expectation on both sides, we have \[ 
 \E_G \Tr(M_\ell^{2q}) = \sum_{\substack{P:\text{closed-walk of length-2q} \\P= (S_1,S_2,...,S_{2q}=v_1) \\ S_i\in \binom{n}{\ell} }  } \E_G \left[ \prod_{i\in [2q]} M_\ell [S_i, S_{i+1}] \right]\,.
\]
Since each hyper-edge corresponds to a Gaussian random variable with odd-moment being $0$, one can also infer that each hyper-edge must appear for an even number of times in the walk ( i.e. used by an even number of steps).
Therefore, any closed walk with non-zero contribution in the summand admits the following combinatorial description.

\begin{definition}[Trace-walk for even-$r$] \label{def:simple-trace-walk-even}
	A trace-walk $P$ for Kikuchi matrix of level-$\ell$ at length-$2q+1$ is a sequence of $\ell$-sized subsets  $S =  \{S_t\}_{t\in [2q+1]} $  such that \begin{enumerate}
		\item $S_i \in \binom{n}{\ell}$ for any $i\in [2q]$;
		\item $|S_i \Delta S_{i+1}| = r$ for any $i\in [2q]$, and the $i$-th step is along the hyperedge (equivalently viewed as the random variable) $G_{S_i \Delta S_{i+1}}$ ;
		\item (Closed-walk) $S_1 = S_{2q+1}$;
		\item (Evenness) Each hyper-edge appears in an even number of steps. 
	\end{enumerate}
	Equivalently, we view it as a walk of length-$(2q)$ such that at each step-$t$, we move from the left boundary $U_t = S_t$ to the right boundary $V_{t} = S_{t+1} = U_{t+1}$.
\end{definition}

For intuition, the following example is a snippet of a trace-walk restricted to the first $3$ steps for $r=4$. This example also highlights a property of working with matrix indices being subsets as opposed to order-tuples-adjacent steps may not be incident to any common vertex. It prompts us to make the following definition for the each step of the walk.

 \begin{figure}[h!]
\centering
\centerline{\includegraphics[totalheight=6cm]{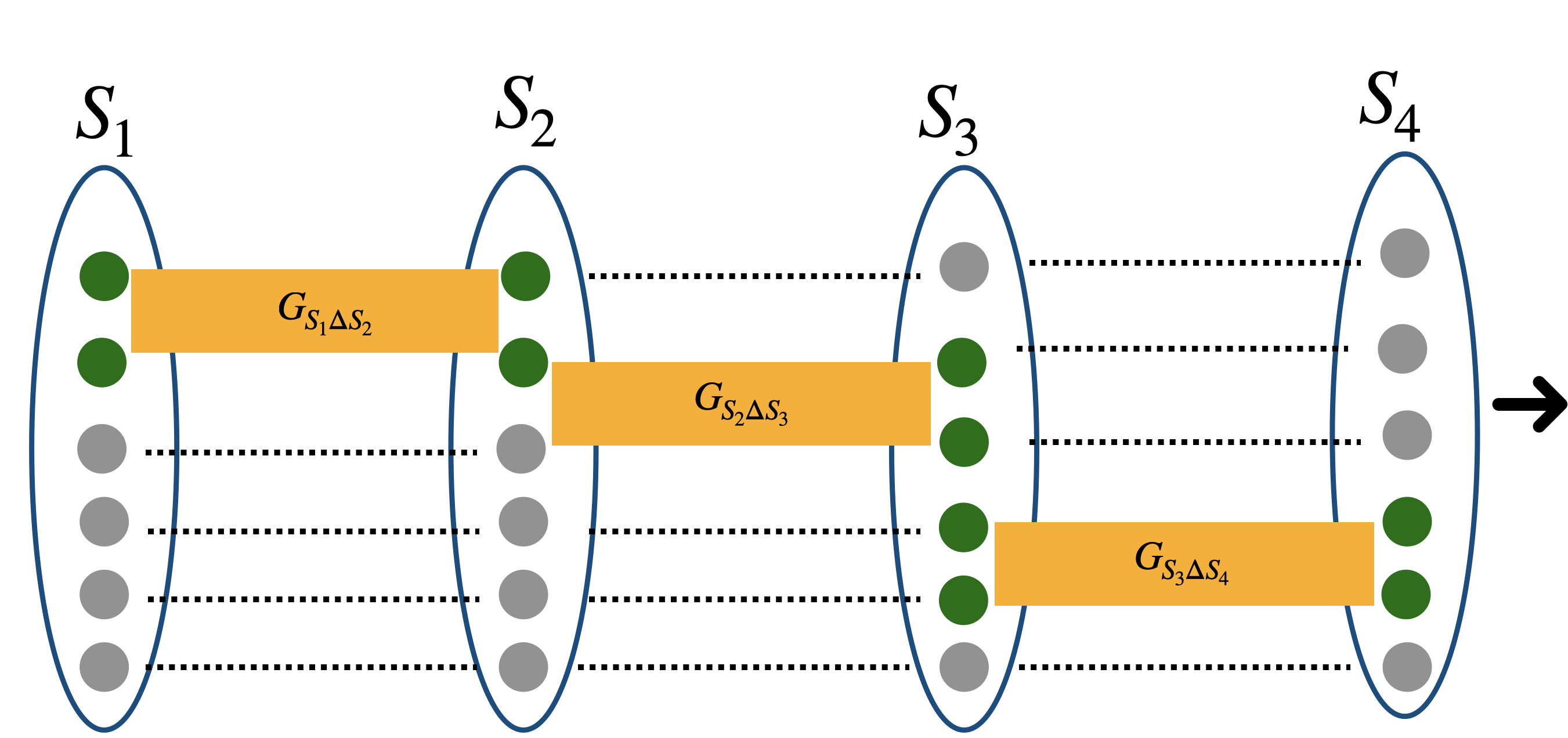}}
    \caption{Snippet of a Trace-Walk}
    \label{fig:shape-walk1}
\end{figure}

\begin{definition}[Vertex Status for a Given Step]
	For any step in the trace-walk in which we move from $U_t = S_t \in \binom{n}{\ell}$ to $V_{t} = S_{t+1} \in \binom{n}{\ell}$, we call \begin{enumerate}
		\item each vertex in $S_{t+1} \cap S_t = U_t\cap V_t$ \emph{dormant};
		\item each vertex in $S_{t+1}\Delta S_t = U_t\Delta V_t$ \emph{active}.
	\end{enumerate}
	In particular, we additionally call each active vertex in $S_{t+1}\setminus S_t$ \emph{incoming-active}, and each vertex $S_{t}\setminus S_{t+1}$ \emph{outgoing-active}. In short, we also refer to each such vertex as "incoming" and "outgoing". 
\end{definition}

In the above diagram, each gray vertex is a "dormant" vertex in both of the adjacent steps, while a vertex is green if it is "active" in either of the adjacent steps.

Additionally, for each walk $P$, we additionally refer to its corresponding expected value as  walk-value as the following,
\begin{definition}[Walk-value]
	For each walk $P = \{S_1, S_2, \dots, S_{2q-1}, S_{2q}=S_1\}$, we define its walk-value as \[ 
	\val(P) \coloneqq \E_G \left[ \prod_{i\in [2q]} M_\ell [S_i, S_{i+1}] \right] = \prod_{e = (I,J)\in E(P)} \E_G[(G_e)^{\mul_P(e)} ]  \,.
	\] 
\end{definition}

And a crucial observation at this point is that there are two kinds of factors one needs to control when trying to upper bound the expected trace, \begin{enumerate}
	\item Combinatorial (Vertex) Factor: this keeps track of the factor that arises from counting the walks (i.e., the cost of identifying each vertex's label in $[n]$ throughout the trace-walk).
	\item Analytical (Edge-value) Factor: this corresponds to the walk-value, i.e., the expectation of random variables traversed through the walk. 
\end{enumerate}

\paragraph{Tight Bounds from Local Assignment Schemes.} Traditionally, trace moment analyses typically rely on global structural constraints of the contributing walks. For instance, in the case of Gaussian random matrices, the walks that contribute most to the trace moment correspond exactly to Dyck paths, where half the steps visit “new” vertices and the other half return to previously visited ones. While effective in simple settings, this kind of global reasoning becomes cumbersome when dealing with more complex graph matrices, and one major source of complication and intimidation is arguably the requirement to keep track of long walks. Hypothetically, the analysis would feel much more intuitive if, like the case of Frobenius norm, it suffices work with length-$2$ walks as opposed to typically polylog or even longer walks that are usually required for spectral norm bounds.

This insight lies at the heart of recent advances that aim to streamline trace moment calculation, making it more intuitive to use, more amenable to complex dependencies, and still capable of delivering sharp bounds. Put simply, the core idea of the developed scheme is to break the correlations across long walks by using a factor-assignment method that distributes the global counting factors to individual steps. This enables a local, step-by-step analysis without needing to track the complex dependencies spanning the entire walk.
 That said, most of the technical challenge of the calculation arises in designing a valid factor-assignment scheme that strikes a delicate balance: \begin{enumerate}
	\item Completeness: it must account for all global contributions;
	\item Sharpness: the induced local bound for each step must match candidate global upper bounds.
\end{enumerate}

 We are now ready to introduce the up-shot of a given factor-assignment scheme,
\begin{definition}[Factor-assignment scheme and Step-Bound Function $B_q(\cdot)$] \label{def:factor-assignment-scheme}
	For any matrix $M$ and $q \in \N$, for any factor assignment scheme, we
call $B_q(M)$ a valid step-bound function of the given scheme if
\[ 
\E[\Tr(M^{2q})] \leq \mathsf{MatrixDimension} \cdot B_q(M)^{2q}
\]
and for each step of the walk, the factor-assignment schemes assigns to the individual step a factor at most $ B_q(M)$. The dependence on $q$ and $M$ is usually dropped when clear. 
\label{def:step-bound-function}
\end{definition}

Once such a scheme is specified, finding a spectral norm bound is then fairly straightforward: it reduces to verifying that each step of the walk individually satisfies the corresponding "global" upper bound $B_q(\cdot)$, which is simply the candidate spectral norm bound one is shooting for. This local approach extends naturally to more complex matrix structures and has proven quite effective in prior work when the underlying shape corresponds to a graph in various settings\cite{JPRTX, KPX24, HKPX23, xu2024switchinggraphmatrixnorm} . 
 
 However, the success of the this approach so far has been restricted to random matrices with underling input being a matrix or graph , and does not apply well to tensor inputs. It is not clear whether the particular approach would yield interesting bounds for Kikuchi matrices with underlying input being a tensor or a hypergraph, as the design of the local factor-assignment scheme in the previous works is highly combinatorial and graph-theoretic. As we shall see, the fine-grained techniques is indeed sensitive to the change of the underlying input from matrix to tensor,  and unique phenomena pop up in the case of tensor input!

\paragraph{Recovering the Folklore Bounds for Rademacher Tensor.}

To shed light on the local factor-assignment scheme that we will be working with, we start with a lightweight application in this fashion that allows us to recover the vanilla bound of $B_{folkdlore}(G) =  O_r(1)  \cdot n^{r/4}\cdot \ell^{r/4+1/2}\cdot   \sqrt{\log n} $. Undoubtedly, this bound can be readily obtained by a straightforward application of anyone's favorite matrix concentration bound, while we resort to the heavy machinery of the trace moment method with the hope that we may identify what needs to be improved when we seek a tighter bound that shaves off the log factor.

\begin{claim} \label{claim:naive-bound-log}
	For any even-$r>3$ and $\ell\in \N$ , and $G$ a random symmetric tensor such that $G_S \sim \{\pm 1\}$ i.i.d., let $M\coloneqq M_\ell(G)$ be the associated Kikuchi matrix, and any $q = \Theta(\ell \log n )$, with probability at least $1-o_n(1)$,  \[ 
	\|M\|_{sp} \leq  O(1) \cdot  \sqrt{\binom{n-\ell }{r/2} \cdot \binom{\ell}{r/2} }  \cdot \sqrt{\ell } \cdot \sqrt{\log n}\,.
	\]  
	
\end{claim}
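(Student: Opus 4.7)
The plan is to apply the trace moment method via the local factor-assignment scheme of \cref{def:factor-assignment-scheme}. Since each entry of $G$ is Rademacher, $\E[G_e^{2k+1}]=0$ and $\E[G_e^{2k}]=1$, so when one expands $\E\Tr(M^{2q})$ as a sum over closed length-$2q$ trace-walks as in \cref{def:simple-trace-walk-even}, each walk contributes $\val(P)=1$ iff every hyperedge is used an even number of times and $0$ otherwise. The task thus reduces to counting even closed trace-walks of length $2q$.

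For the counting I would assign factors step-by-step, declaring step $t$ \emph{new-edge} if $e_t := S_t \Delta S_{t+1}$ has not appeared at any earlier step $t' < t$, and \emph{repeat-edge} otherwise. Evenness forces the number of distinct hyperedges along the walk to be at most $q$, so at most $q$ of the $2q$ steps are new-edge. For each new-edge step I would use the trivial row-degree bound: freely specify the $r/2$ outgoing vertices $O_t \subseteq S_t$ (cost $\binom{\ell}{r/2}$) and the $r/2$ incoming vertices $I_t \subseteq [n]\setminus S_t$ (cost $\binom{n-\ell}{r/2}$). For each repeat-edge step, since $e_t$ must be one of the at most $q$ previously-seen hyperedges and since $e_t$ together with $S_t$ already determines $O_t = S_t \cap e_t$, $I_t = e_t \setminus S_t$, and hence $S_{t+1} = S_t \Delta e_t$, the local factor is just $q$. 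Accounting for $\binom{n}{\ell}$ choices for the basepoint $S_1$ and $\binom{2q}{q} \leq 4^q$ choices for the new/repeat pattern across the $2q$ steps, one obtains
\[
\E\Tr(M^{2q}) \;\leq\; \binom{n}{\ell}\cdot \Bigl(C \cdot \tbinom{\ell}{r/2}\tbinom{n-\ell}{r/2}\cdot q\Bigr)^{q}
\]
for an absolute constant $C$.

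To extract the spectral bound, I would apply Markov's inequality at the $2q$-th power. Choosing $q = \Theta(\ell \log n)$ with a sufficiently large hidden constant ensures that $\binom{n}{\ell} \leq n^{\ell}$ is absorbed by a $(\Theta(1))^q$ slack, yielding $\Pr[\|M\|_{sp} > t] = o_n(1)$ for $t = C'\sqrt{\tbinom{\ell}{r/2}\tbinom{n-\ell}{r/2}\cdot q}$; since $\sqrt{q} = \Theta(\sqrt{\ell \log n})$, this is exactly the claimed bound.

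The delicate point is the repeat-edge local factor of $q$ rather than some $\binom{\ell}{r/2}$-type blowup. It is essential that committing to a previously-seen $e_t$ together with the current row index $S_t$ already pins down both halves of the symmetric difference and thus $S_{t+1}$; any looser accounting on the $\geq q$ repeat-edge steps would inflate the dependence on $\ell$ beyond the targeted $\sqrt{\ell \log n}$. This tight ``closing'' of every edge-pair is precisely what makes the naive trace-moment calculation recover the Matrix Bernstein bound, and it foreshadows the more refined treatment (splitting incoming vertices into truly-new versus returning, and exploiting structure in the pairing of new- and repeat-edge steps) that will be needed to eventually shave the spurious $\sqrt{\log n}$ factor in the main theorem.
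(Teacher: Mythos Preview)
Your proposal is correct and follows essentially the same trace-moment strategy as the paper: reduce to counting even closed walks (since Rademacher moments kill odd multiplicities), charge a row-degree factor $\binom{n-\ell}{r/2}\binom{\ell}{r/2}$ to each first-appearance edge and an $O(q)$ factor to each subsequent appearance, and finish with Markov at $q=\Theta(\ell\log n)$.

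The only organizational difference is that the paper packages the bookkeeping via an $F/H/L$ step-status and \emph{redistributes} the product $D\cdot(2q)$ as $\sqrt{D\cdot 2q}$ to each of the first and last appearances, so that every step carries the same local budget $B_q\approx\sqrt{Dq}$. Your version instead keeps the asymmetric new-edge/repeat-edge costs $D$ and $q$, invokes evenness to cap the number of new-edge steps at $q$, and lets the $k=q$ term dominate the sum. Both routes yield $\E\Tr(M^{2q})\le n^{\ell}\cdot(O(1)\cdot Dq)^q$ and hence the same bound; the paper's symmetrized $F/L$ split is not needed for this warm-up claim but is the scaffolding on which the later $\sqrt{\log n}$-removal (replacing the last-cost $2q$ by $\ell$) is built.
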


	Before we describe the precise factor-assignment scheme, we make the following observation. Since we are working with Rademacher random variables as opposed to Gaussians, one convenient simplification is that we can effective ignore the analytical factor that comes from the walk-value: for any walk $P$ in which each hyperedge is traversed an even number of times, we have \[\E_G [\val(P)] = \E_G[ \prod_{e \in E(P)} (G_e)^{mul_P(e)}] =  1\,,
	\]
	and $0$ otherwise where $mul_P(e)$ is the number of times edge $e$ appears walk $P$, i.e., the number of times the walk traverses along the edge $e$. From now on, it suffices for us to focus on the combinatorial factor that arises from walk-counting.
	
	We next observe the following. Given any index $S\in \binom{n}{\ell}$, it has at most $\binom{n-\ell}{r/2} \cdot \binom{\ell}{r/2} $ neighbors in the Kikuchi matrices. Therefore, at time-$t$ in the walk in which we move from $S_t$ to $S_{t+1}$, we have at most $\binom{n-\ell}{r/2} \cdot \binom{\ell}{r/2} $  choices for identifying the hyper-edge, and equivalently, the destination $S_{t+1}$. Moreover, in the case that the hyper-edge has been traversed in some previous step of the walk, such an (hyper-)edge can be more efficiently specified at a cost of at most $2q$. This culminates in the following bound that accounts for the "global" contribution of a hyper-edge to the combinatorial counting,
	
	\begin{proposition}
		In the global combinatorial factor of the walk, for a hyper-edge  $e$ appearing in the walk for $mul_P(e)$ times, it  incurs a total cost of at most $B_{global}(G_e) \leq \binom{n-\ell}{r/2} \cdot \binom{\ell}{r/2}  \cdot (2q)^{mul_P(e)-1}$.
	\end{proposition}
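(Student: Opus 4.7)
The plan is to establish the bound by designing a concrete encoding procedure for each trace-walk $P$ in which every hyper-edge's total combinatorial contribution is split between its first appearance (charged at the cost of freshly specifying it) and its subsequent reappearances (charged at a cheaper ``pointer'' cost). I will process the $2q$ steps of $P$ in temporal order, maintaining the invariant that by the time step $t$ is read, the current left boundary $S_t$ is already known; thus it suffices to specify at each step enough information to recover the hyper-edge $e_t = S_t \Delta S_{t+1}$.

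First, I will compute the cost of freshly specifying a hyper-edge at its first occurrence. Given the current boundary $S_t \in \binom{n}{\ell}$, any valid successor $S_{t+1} \in \binom{n}{\ell}$ with $|S_t \Delta S_{t+1}| = r$ is determined by the hyper-edge $e = S_t \Delta S_{t+1}$, which must satisfy $|e \cap S_t| = r/2$ and $|e \setminus S_t| = r/2$ since both endpoints have size $\ell$. Counting such $e$, I choose the $r/2$ elements of $e$ inside $S_t$ in $\binom{\ell}{r/2}$ ways and the $r/2$ elements of $e$ in $[n]\setminus S_t$ in $\binom{n-\ell}{r/2}$ ways, so this product upper bounds the cost of identifying $e$ (and hence $S_{t+1}$) when $e$ first appears in the walk.

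Next, I will handle the $\mul_P(e) - 1$ reappearances of $e$. At every subsequent step that traverses $e$, the edge has already been ``named'' by an earlier step of the walk, so I may identify it by pointing back to an earlier step index in $\{1, 2, \dots, 2q\}$ at which the same edge was used; this costs at most $2q$ choices per reappearance. Multiplying the per-occurrence costs for $e$ yields
\[
B_{global}(G_e) \;\leq\; \binom{\ell}{r/2} \binom{n-\ell}{r/2} \cdot (2q)^{\mul_P(e)-1},
\]
as desired.

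I do not expect any genuine obstacle: the encoding is a standard walk-counting device, and since I only claim an upper bound, I need not verify that every encoded string corresponds to a legitimate walk. The only subtlety is bookkeeping --- ensuring that the ``first'' and ``subsequent'' occurrences of each hyper-edge are processed consistently --- and this follows from scanning steps in temporal order and declaring each $e$ fresh at its earliest index. Once the per-edge bounds are multiplied across all distinct hyper-edges of $P$, they plug directly into the subsequent proof of \cref{claim:naive-bound-log}.
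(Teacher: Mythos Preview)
Your proposal is correct and follows essentially the same approach as the paper: the paper states this proposition as the culmination of the preceding paragraph, which observes that from any current boundary $S_t$ there are at most $\binom{n-\ell}{r/2}\binom{\ell}{r/2}$ neighbors (the fresh cost) and that any previously-seen hyper-edge can be specified at cost $2q$ by pointing to an earlier step. You have simply written out this encoding argument more explicitly, but the substance is identical.
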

		
	With the above observation, we are now ready to describe the factor-assignment scheme that assigns the factors of each hyper-edge to individual steps that traverse along the particular hyper-edge.
\begin{mdframed}[frametitle = (Combinatorial) Factor Assignment Scheme for Hyper-edges]
Each hyper-edge appears at least twice in the walk, and the first time it appears, a cost of $\binom{n-\ell}{r/2} \cdot \binom{\ell}{r/2} $ is sufficient, and a cost of $2q$ is sufficient for any subsequent appearance. 	\begin{enumerate}
		\item Assign a factor of $\sqrt{\binom{n-\ell}{r/2} \cdot \binom{\ell}{r/2} } \cdot \sqrt{2q}$ for the first and the last\footnote{equivalently, the final} step in which the  hyper-edge appears in the walk;
	\item Assign a factor of $2q$ for any step in which the edge is making a middle (non-first/last) appearance.
\end{enumerate}
\end{mdframed}	

We start by showing that this is a complete scheme, and each (combinatorial counting) factor in the global bound is accounted for via the local assignment. \begin{proposition}
	For any hyper-edge $e$ that appears in the walk, let $B_{local}(e)$ be the total factor assigned \emph{by the above scheme} to this hyper-edge  $e$ throughout the walk across its various appearances, we have 
	\[
	B_{local}(e) \geq \ B_{global}(e)	\,. \]
	\end{proposition}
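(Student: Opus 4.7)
The plan is to directly multiply out the factors the scheme assigns to a fixed hyper-edge $e$ and verify that the product matches (hence exceeds) the global cost $B_{global}(e) \leq \binom{n-\ell}{r/2}\binom{\ell}{r/2}\cdot(2q)^{\mul_P(e)-1}$. Because $G_e$ is Rademacher and any trace-walk with nonzero contribution must traverse every edge an even number of times, we have $m := \mul_P(e) \geq 2$, so the notions of ``first'' and ``last'' appearance of $e$ are well defined and distinct from one another (or, in the boundary case $m=2$, simply exhaust all appearances).

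First I would enumerate the factors the scheme contributes to each of the $m$ steps that traverse $e$. By construction, the first and the last appearance each contribute $\sqrt{\binom{n-\ell}{r/2}\binom{\ell}{r/2}}\cdot\sqrt{2q}$, while each of the remaining $m-2$ middle appearances contributes $2q$.

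Multiplying these factors gives
\[
B_{local}(e) \;=\; \Bigl(\sqrt{\tbinom{n-\ell}{r/2}\tbinom{\ell}{r/2}}\cdot\sqrt{2q}\Bigr)^{\!2}\cdot (2q)^{m-2} \;=\; \binom{n-\ell}{r/2}\binom{\ell}{r/2}\cdot (2q)^{m-1},
\]
which equals the stated upper bound on $B_{global}(e)$, so $B_{local}(e) \geq B_{global}(e)$ as claimed.

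I expect no genuine obstacle: the scheme has been deliberately calibrated so that the one-time identification cost $\binom{n-\ell}{r/2}\binom{\ell}{r/2}$ (for choosing which hyper-edge to traverse out of a given $\ell$-subset) and a single per-reuse factor $2q$ (for specifying a previously-seen edge by pointing back to one of the $\leq 2q$ earlier steps) have been symmetrically split as a square-root between the first and last steps. The only point worth flagging is the $m=2$ boundary: here there are no middle appearances, and the product collapses cleanly to $\binom{n-\ell}{r/2}\binom{\ell}{r/2}\cdot(2q)$, which still matches $B_{global}(e)$ termwise. Everything else is bookkeeping.
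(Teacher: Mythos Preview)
Your proposal is correct and essentially identical to the paper's own proof: both write out $B_{local}(e)$ as the product of the two $\sqrt{\binom{n-\ell}{r/2}\binom{\ell}{r/2}\cdot 2q}$ factors from the first and last appearances times $(2q)^{\mul_P(e)-2}$ from the middle ones, simplify to $\binom{n-\ell}{r/2}\binom{\ell}{r/2}\cdot(2q)^{\mul_P(e)-1}$, and compare against $B_{global}(e)$ after noting $\mul_P(e)\geq 2$. The only cosmetic difference is that the paper phrases the general case as ``any increment in $\mul_P(e)$ gives a factor of $2q$ to both sides,'' whereas you compute the closed form directly.
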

\begin{proof}
	By our assignment-scehme, we have \[ 
	B_{local}(e) = \left(\underbrace{\sqrt{\binom{n-\ell}{r/2} \cdot \binom{\ell}{r/2} \cdot 2q}}_{\text{First} } \cdot  \underbrace{\sqrt{\binom{n-\ell}{r/2} \cdot \binom{\ell}{r/2} \cdot 2q }}_{\text{Last}} \right)\cdot \underbrace{ (2q)^{mul_P(e)-2}}_{\text{ Middle } }
	\,.\]
	Since each hyper-edge appears at least twice in a contributing walk, we have $\mul_P(e)\geq 2$. And notice that in the base case of $mul_P(e)=2$, the inequality holds as $B_{global}(e) \leq \binom{n-\ell}{r/2} \cdot \binom{\ell}{r/2} \cdot (2q)$.
Moreover, any increment in $mul_P(e)$ gives a factor of $2q$ to the both sides. This completes the proof to our proposition. 
\end{proof}
This then allows us to complete the naive bound with $\sqrt{\ell \log n}$-factor for Kikuchi matrix of random Rademacher tensor.

\begin{proof}[Proof to \cref{claim:naive-bound-log} ]
Since each step uses exactly $1$ hyper-edge, for $\al\in \{F,H, L\}$, let $B_q(\al)$ be the local bound for each individual step using a hyper-edge that appears for first (F), middle (H), and last\footnote{equivalently, the final time} time (L). By construction of the above scheme, we have \[ 
B_q(F) = B_q(L) =  \sqrt{\binom{n-\ell}{r/2} \cdot \binom{\ell}{r/2} \cdot 2q }\,,
\] 
and \[  
B_q(H) = 2q = o_n(1) \cdot  (B(F) + B(L))
\,.\]
provided $q \ll \binom{n-\ell}{r/2} \cdot \binom{\ell}{r/2}$. 

Summing over all possible choices for a given step gives us a local bound of \[
B_q(M_\ell) \leq B_q(F)+B_q(H)+B_q(L) =  (2+o_n(1))  \sqrt{\binom{n-\ell}{r/2} \cdot \binom{\ell}{r/2} \cdot 2q }\,.
 \]
 
 By construction of our factor assignment scheme, this yields \[ 
 \E[\Tr(M_\ell^{2q})] \leq n^\ell \cdot (B_q)^{2q} \,.
 \]
 Finally, plugging the chosen value of $q= \Theta(\ell \cdot \log n) $,
  taking $1/2q$-th root of the above and applying Markov's then gives us a norm bound that holds with probability $1-o_n(1) $ of \begin{align*}
 \|M_\ell\|_{sp} \leq (1 + \eps ) \cdot B_q  &= O(1) \cdot \sqrt{\binom{n-\ell}{r/2} \cdot \binom{\ell}{r/2} \cdot 2q } \\&= O(1) \sqrt{\binom{n-\ell}{r/2} \cdot \binom{\ell}{r/2} }\cdot \sqrt{\ell} \cdot \sqrt{\log n} 	\,.
  \end{align*} 
   \end{proof}

Let us now pause to examine a key bottleneck in the above scheme: where exactly does the extra
$\polylog$ factor arise in the global accounting? The culprit is the factor of 
$2q$ incurred when an edge appears for the second (and the last) time. For ease of the discussion, we call it the \emph{last-cost}.
 To understand why it suffices to focus on cases where the second and the last appearances coincide, observe that any other non-first appearance is treated as a middle appearance in the scheme—and such steps are assigned only negligible local contributions, recalling $B(H)=o(1) \cdot B(F)$ (and analogously $B(L)$).
  
For readers familiar with the trace moment calculation for G.O.E. matrix, it is well-anticipated that the last-cost can be improved: in fact, it may be just $1$ as in the case of G.O.E. or graph adjacency matrix. This is almost correct except that is is also slightly over-optimistic, and it turns out that we are in an intriguing in-between regime:  the cost can be improved while the last-cost would not just be $1$ either! 

\paragraph{Identifying the Target Upper Bound from a Lower Bound.}

To start with, notice that suppose last-cost is $O_r(1)$,  we would obtain final bound of $\|M_\ell\|_{sp}\leq  O_r(1)\cdot  \sqrt{\binom{n-\ell}{r/2} \cdot \binom{\ell}{r/2} } $.   To rule out this hypothetical bound in the framework of trace moment method, it is instructive to take a detour into the  lower bound and consider the corresponding lower bound for the expected high trace. 
\begin{lemma}[Lower Bounding the High Trace] \label{lem:lower-bound}  A factor of $\Theta_r( \ell )$ is necessary for the Last-Cost. As a result, 
\[ 
\E \Tr(M_\ell^{2q}) \geq \textsf{Matrix-Dimension} \cdot \left(  \Theta_r(1) \cdot \sqrt{\binom{n-\ell}{r/2} \binom{\ell}{r/2  } } \cdot \sqrt{\ell } \right)^{2q } 
\]	
\end{lemma}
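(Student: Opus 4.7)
The approach is to explicitly construct a family of contributing trace walks whose total count matches the claimed lower bound, exploiting the fact that for Gaussian $G$, every walk in which each hyper-edge appears exactly twice contributes $+1$ to $\E \Tr(M_\ell^{2q})$. The matrix-dimension factor $\binom{n}{\ell}$ will come from the free choice of a starting index $S_0 \in \binom{[n]}{\ell}$, so it suffices to produce, from each fixed $S_0$, at least
\[
\left(\Theta_r(1) \cdot \sqrt{\binom{n-\ell}{r/2}\binom{\ell}{r/2}}\cdot \sqrt{\ell}\right)^{2q} \;=\; \Theta_r(1)^{2q}\cdot \left(\binom{n-\ell}{r/2}\binom{\ell}{r/2}\right)^{q}\cdot \ell^{q}
\]
valid closed walks of length $2q$ in which every hyper-edge is traversed exactly twice.

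Each such walk can be split into $q$ opening steps (the first appearance of a hyper-edge) and $q$ closing steps (its second appearance). At an opening step from current index $S_t$, the full $\binom{\ell}{r/2}\binom{n-\ell}{r/2}$ neighbors in the Kikuchi graph are available, supplying the baseline factor $(\binom{n-\ell}{r/2}\binom{\ell}{r/2})^{q}$. The whole game is therefore to engineer walks where at each closing step one has $\Theta_r(\ell)$ simultaneously admissible previously-opened edges. That extra $\ell$ per closing is exactly what gives the additional $\ell^{q} = (\sqrt{\ell})^{2q}$ in the claim and, by contrapositive in the factor-assignment framework of \cref{def:factor-assignment-scheme}, shows that the Last-Cost is genuinely $\Theta_r(\ell)$ rather than $O_r(1)$.

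The key combinatorial idea will be a pivot construction: reserve inside $S_0$ a small "core" of $r/2-1$ vertices that is shared by a family of edges opened earlier in the walk, and let the remaining slot of the outgoing half of each such edge be occupied by a freely chosen "pivot vertex" $v \in S_t$. Each of the $\Theta_r(\ell)$ pivot choices designates a different previously-opened edge whose outgoing $r/2$-set aligns with $S_t$ at that moment, and hence a different legal closing. By arranging the walks in a Dyck-type skeleton of openings and closings and choosing the pivots independently at each closing step, we multiply the $\Theta_r(\ell)$ closing choices across the $q$ closing positions. A standard injectivity argument (distinct sequences of pivots yield distinct walks, because the pivot is recoverable as the unique vertex in $S_t \cap e$ outside the reserved core) prevents double-counting, and multiplying through gives the claimed lower bound.

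The main obstacle is designing the walk family so that the full $\binom{\ell}{r/2}\binom{n-\ell}{r/2}$ freedom at every opening step coexists with the $\Theta_r(\ell)$ freedom at every closing step: these pull in opposite directions, since richer shared structure among opened edges (needed to enable many simultaneous closings) tends to shrink the set of edges that can be freshly opened. Resolving this tension — essentially, showing that reserving a single pivot slot leaves a constant fraction of the opening-step degree intact — is the heart of the argument, and is what makes the constant hidden in $\Theta_r(1)$ depend on $r$ but not on $n$, $\ell$, or $q$. Once this is in place, summing over $S_0$ multiplies by $\binom{n}{\ell}$ and yields the stated bound on $\E\Tr(M_\ell^{2q})$.
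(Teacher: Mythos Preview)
Your high-level strategy matches the paper's: exhibit a large family of closed walks in which each hyper-edge appears exactly twice (so each contributes $+1$), with the matrix-dimension factor coming from the choice of $S_0$. You also correctly locate the crux --- the extra $\sqrt{\ell}$ must come from $\Theta_r(\ell)$ choices at closing steps.

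The gap is in the pivot/core construction. You want a family of opened edges whose closing-time outgoing halves are all of the form $\text{core}\cup\{p_i\}$ for a fixed $(r/2-1)$-set and varying pivots $p_i\in S_t$. But this shared core cannot be maintained across multiple openings: if the core sits in the outgoing half at an opening, it leaves $S_t$ immediately and is unavailable for the next opening; if it sits in the incoming half, it enters $S_t$ and then cannot be an incoming vertex of the next opening. Either way you cannot open a second edge sharing the core without first spending a step to undo the first --- which is exactly the step whose freedom you were trying to harvest. Your claim that ``reserving a single pivot slot leaves a constant fraction of the opening-step degree intact'' does not touch this obstruction; the problem is not the size of the degree but the impossibility of keeping the core present across consecutive openings. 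Relatedly, insisting on the full $\binom{\ell}{r/2}\binom{n-\ell}{r/2}$ freedom at \emph{every} $F$-step is stronger than what the lower bound needs and is what forces you into this corner.

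The paper's construction avoids any shared core. It partitions $S_0$ into $m=2\ell/r$ \emph{disjoint} buckets of size $r/2$, and within a chunk of $2m$ steps uses each bucket once as the outgoing half of an $F$-step (bringing in a fresh $A_i\in\binom{[n]\setminus S_0}{r/2}$) and once in reverse as an $L$-step. Because the buckets are disjoint, any not-yet-opened bucket can be opened and any opened-but-not-closed bucket can be closed at every step, so all $\binom{2m}{2,\dots,2}$ schedules are valid walks returning to $S_0$. The $\Theta_r(\ell)$ per-step factor comes from this scheduling freedom, and the $\sqrt{\binom{\ell}{r/2}}$ factor is supplied globally by the bucketing multinomial $\binom{\ell}{r/2,\dots,r/2}$ rather than locally at each $F$-step (which individually carries only $\binom{n-\ell}{r/2}$). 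So the paper trades your per-step $\binom{\ell}{r/2}$ for a one-time partition count --- and this is exactly what makes the disjointness, and hence the simultaneous closability, possible.
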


 \begin{figure}[h!]
\centering
\centerline{\includegraphics[totalheight=6cm]{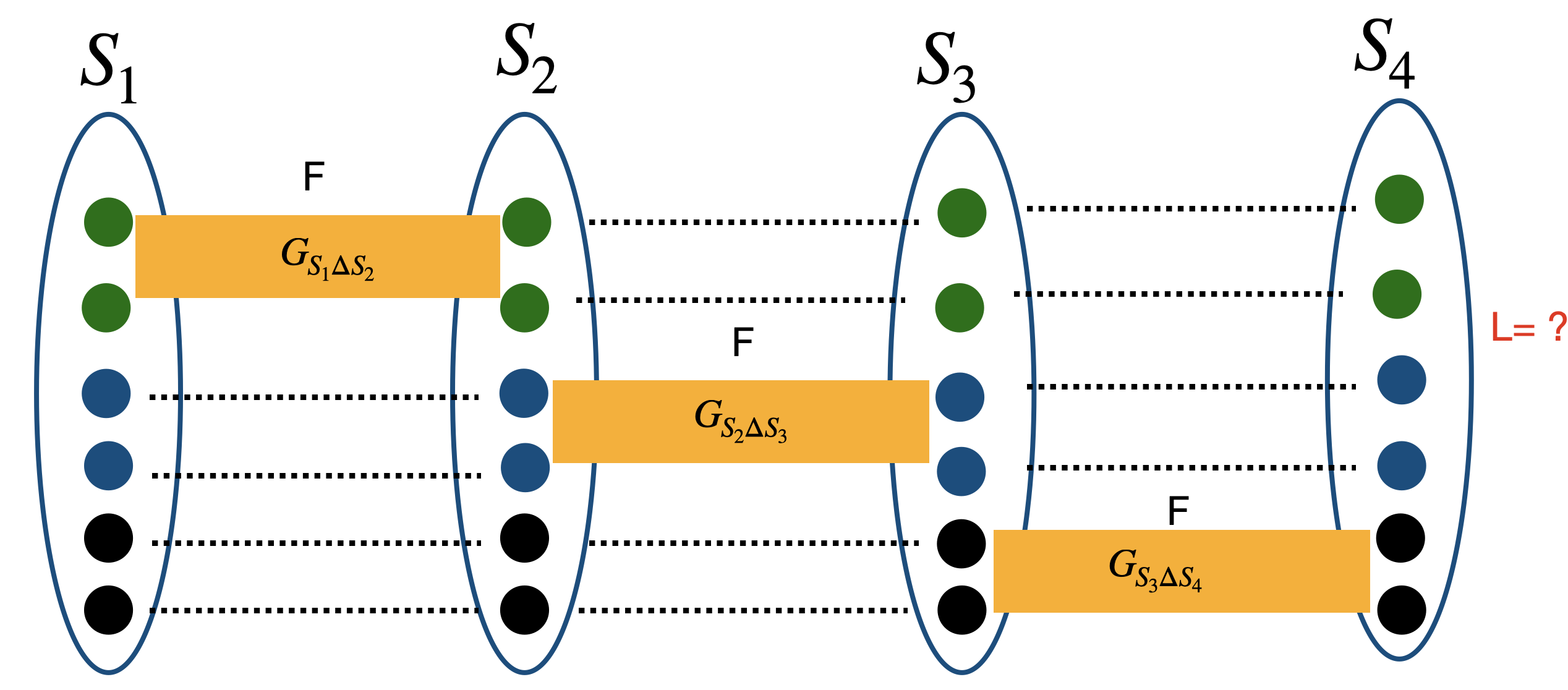}}
    \caption{Confusion about the Upcoming $L$-Step}
    \label{fig:return_confusion}
\end{figure}

We now give a diagram explanation for the extra factor  $\ell$ in the last-cost. For simplicity, assume $r$ is even and $r/2$ divides $\ell$. 
  Lett us consider the first $3$-steps of some walk as described in the above diagram, \begin{question}
	Suppose that we are at the index $S_4$ and we are to use an edge for the second (and last) time in the next step, which edge could we be using?  Equivalently, what could $S_5$ be if this is an $L$-step?
\end{question}
	It's fairly straightforward to see that in the above example, any edge among the first three edges can be making a second (and last) appearance in the next step. And more generally, for larger $\ell$, this reveals that we would have a last-cost of $\Theta(\frac{\ell}{r/2}  ) = \Theta_r(\ell )$ at some particular step.	Moreover, this bound in fact holds for a typical step along edges appearing for the second (and the last) time. Observe that we have a total last-cost for $t\coloneqq \frac{\ell}{r/2}$ edges being $\frac{\ell}{r/2}!$ if we arrange them in the manner of 1) first using $\frac{\ell}{r/2}$ new edges, and 2) subsequently re-use all of the $\frac{\ell}{r/2}$ new edges to return to the start yet in a potentially different ordering. By Stirling approximation, this gives an individual bound of$
	(t!)^{1/t} = \Theta(1)  \cdot t = \Theta_r(\ell)
	$. We defer the formal verification of the lower bound to ~\cref{sec:lower-bound} .

Zooming back, the lower bound has identified a clear target for the upper bound, that is showing a cost of $\ell$ for each step we are using an edge appearing for the second-and-last time in the walk.

\paragraph{When is $\ell$ Sufficient for last-cost?} It'd be straightforward if one can establish a worst-case bound of last-cost being $\ell$ for any step of any walk. However, that is false. However, it turns out that at least in the ideal situation below, one can show that $\ell$ is indeed sufficient.

\begin{definition}[Ideal Last-Step]
	Suppose at some step-$t$ in the walk, we are walking from $S_t$ to $S_{t+1}$ along some edge that is promised to be making its last-appearance in the walk, we call the step from $S_t$ to $S_{t+1}$ an ideal last-step if \emph{some out-going vertex in $S_{t}\setminus S_{t+1} \subseteq [n]$ is appearing for the final time in the walk, i.e., it is not to-be-revisited upon the departure from $S_t$}.
\end{definition} 

Let's now see why this is sufficient to guarantee a cost $\ell$ for specifying the destination of $S_{t+1}$ from $S_t$. Note that throughout the walk, we may can easily keep track of which of the "seen" edges so far in the walk have made their final appearances by incurring a $O(1)$ bound per-step, 
\begin{observation}
	As the walk proceeds, it is a cost of $2$ per-step to identify whether the edge being traveled along at the given step is appearing for the final time.
\end{observation}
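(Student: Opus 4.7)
The plan is to treat this as an information-theoretic overhead: the ``final-appearance'' status of the edge used at step $t$ is a single bit, so recording it costs a factor of $2$ per step and slots cleanly into the multiplicative step-bound function of \cref{def:step-bound-function}. Concretely, I would enlarge the object being counted from a trace-walk $P = (S_1,\dots,S_{2q+1})$ to an annotated walk $(P,\chi)$, where $\chi : [2q] \to \{0,1\}$ with $\chi(t) = 1$ iff the edge $G_{S_t \Delta S_{t+1}}$ does not reappear at any step $t' > t$. Since $\chi$ is a function of $P$, the map $P \mapsto (P,\val(P))$ factors through $(P,\chi) \mapsto (P,\val(P))$, and hence any bound proved for the sum over $(P,\chi)$ with the correct annotation upper-bounds $\E\,\Tr(M_\ell^{2q})$.

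The key step is then the bookkeeping: when we walk through $P$ from left to right and arrive at step $t$, we no longer need to ``discover'' whether the current edge is appearing for the last time by looking at the future of the walk; we may simply read off $\chi(t)$. The price of having $\chi(t)$ precomputed and locally accessible is exactly $2$ choices at step $t$, contributing a total multiplicative overhead of $2^{2q}$, i.e.\ a factor of $2$ per step in $B_q(\cdot)$.

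Verifying completeness of the enlarged scheme is immediate since $\chi$ is a deterministic function of $P$, so no true walk is lost and the $2^{2q}$ overhead is an overcount at worst. Verifying sharpness (that this extra factor of $2$ per step does not overwhelm the savings the paper is about to extract from knowing the last-appearance label) is also harmless: the labeling contributes only an absolute constant per step, which gets absorbed into the $O_r(1)$ in the target bound $O_r(1)\cdot \sqrt{\binom{n-\ell}{r/2}\binom{\ell}{r/2}}\cdot\sqrt{\ell}$.

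I do not expect a real obstacle here; the observation is essentially ``one bit per step costs $2$.'' The only thing to double-check is that the factor-assignment scheme tolerates arbitrary independent per-step binary annotations, which follows directly from the product form of $B_q(M)^{2q}$ in \cref{def:factor-assignment-scheme}. The substantive work lies in the next step of the paper---leveraging the annotation to actually pay only $\Theta(\ell)$ instead of $\Theta(q)$ on ideal last-steps---rather than in this observation itself.
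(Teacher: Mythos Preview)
Your proposal is correct and matches the paper's intent exactly. The paper treats this observation as self-evident and does not give a separate proof; when it is invoked later (in the proof of \cref{claim:retrun-fixed-given-vertex}) the justification given is precisely your one-bit-per-step annotation idea: ``for each subsequent appearance of an edge, one may use a cost of $2$ to identify whether this is the final appearance.''
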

Therefore, in an ideal last-step, it suffices for us to identify one of the outgoing vertices from $S_t$ that is not-to-be revisited as guaranteed by our assumption. This incurs a cost of $\ell$ since $S_t$ is an $\ell$-sized subset. Once such a vertex is identified, the edge used in the next step to move to $S_{t+1}$ is also identified since any such vertex must be incident to a unique edge that has not made their final appearance.


We have now shown that the cost of $\ell$ is sufficient under the ideal condition. However, a crucial question remains: why is it sufficient to bound the contribution in the ideal case alone? Recall that, in the worst case, the length of the walk can lead to a bound of $2q$, which we aim to avoid. 

\paragraph{Final Strategy.}

We begin by revisiting the factor-assignment scheme used in the vanilla analysis, which assigns a $2q$-factor to steps along edges appearing for the middle time. However, such middle-time edges can be effectively ignored, as they contribute only lower-order terms compared to edges appearing for the first or final time. This observation motivates our final strategy: to design a new factor-assignment scheme that penalizes last-steps failing to meet the ideal condition. To this end, we go beyond the earlier edge-based approach—which uniformly assigned factors to all $L$-steps regardless of the underlying edge—and instead decompose the edge-based cost into vertex-level contributions. This shift to a vertex-based analysis provides the flexibility to prioritize vertices appearing for the final time, guiding the scheme to favor steps aligned with the ideal condition. We formally introduce this new scheme in the next section.%

\section{Sharp  Norm Bounds from Factor Assignment Scheme}
\subsection{Preliminaries for Trace-Walk}
To facilitate the discussion, we begin by reintroducing several key definitions. While some of these were previously presented in the technical overview for the special case of even $r$, we now restate them in a more general and formal form to ensure that our arguments can be swiftly extended to the case of odd $r$ as well. For convenience, we will highlight the distinction each time a definition is reintroduced and generalized.

For starters, let's reintroduce the definition of trace-walks which would immediately  apply to trace walks of Kikuchi matrices for odd-$r$ as well to incorporate vertices at each step of the walk getting summed over while outside the left/right boundaries. For readability purpose, we suggest the reader to focus on the even-$r$ case in the first pass and simply adhere to the simplified definition introduced in \cref{def:simple-trace-walk-even}.

\begin{definition}[Generalized Definitions of Trace-walk] For any $r>2$,
	A trace-walk for Kikuchi matrix of level-$\ell$ at length-$2q$ is a sequence of sets $(S_1, W_1, S_2, W_2, \dots,S_{2q}, W_{2q},S_{2q+1})$ such that   \begin{enumerate}
		\item Walk-boundary sets: $S_i \in \binom{n}{\ell}$ for any $i\in [2q+1]$;
		\item Intermediate sets : $W_i \subseteq [n]$ for any $i\in [2q]$, and additionally, we assume $W_i$ to come with an two equal-partitions $U_A, U_B$ for $S_i\setminus S_{i+1}$ and $V_A, V_B$ for $S_i\setminus S_{i+1}$;		
		\item  For even-$r$,  $|S_i \Delta S_{i+1}| = r$ for any $i\in [2q]$  and $W_i = \emptyset$ for any $i\in [2q]$;
		\item For odd-$r$, $|S_i \Delta S_{i+1}| = 2r-1$, and $|W_i|=1$ for any $i\in [2q]$; 
		\item (Closed-walk) $S_1 = S_{2q+1}$;
		\item  (Evenness) Each hyper-edge appears in even number of steps. 
	\end{enumerate}
	Equivalently, we view it as a walk of length-$(2q)$ such that
	\begin{enumerate}
		\item  at each step-$t \in [2q]$, we move from the left boundary $U_t = S_t$ to the right boundary $V_{t} = S_{t+1} = U_{t+1}$ with potential intermediate-set $W_t$;
		\item For even-$r$, each step travers along exactly $1$ hyper-edge given by $S_t\Delta S_{t+1} $;
		\item for the odd-$r$, each step travers along $2$ hyper-edges given by  $U_A \cup V_A \cup W_t$ and $U_B\cup V_B\cup W_t$ where we assume that $U_A, U_B$ are equal partitions for $S_{t}\setminus S_{t+1}$ and $V_A, V_B$ for $S_{t+1}\setminus S_t$ as prescribed by the intermediate set.  
        \item Each hyper-edge appears for an even number of times.
	\end{enumerate} 
\end{definition}

With the generalized definition of trace-walk, we may now relate the expected trace of the Kikuchi matrix to the weighted sum of trace-walks as defined above.
\begin{proposition} For any $r\geq 3, \ell \in \N $ and any $q\in \N$, 
	\[ 
	\E[\Tr(M_\ell)^{2q}]  \leq \sum_{P:\text{trace-walk of length }2q} \val(P) \,.
	\]
    where we recall that \[ 
    \val(P) \coloneqq \E_G \left[ \prod_{i\in [2q]} M_\ell [S_i, S_{i+1}] \right] = \prod_{e = (I,J)\in E(P)} \E_G[(G_e)^{\mul_P(e)} ]
    \]
\end{proposition}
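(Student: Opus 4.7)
The plan is to unfold the trace definition directly and then push the expectation inside, exploiting independence of the Gaussian entries of $G$ across distinct hyper-edges. The proposition is essentially a bookkeeping statement: it says that after expansion, every term in $\E[\Tr(M_\ell^{2q})]$ can be indexed by a trace-walk in the generalized sense defined above, and its contribution is precisely $\val(P)$.

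First, I would write
\[
\Tr(M_\ell^{2q}) \;=\; \sum_{S_1,\ldots,S_{2q}\in\binom{[n]}{\ell}} \;\prod_{i=1}^{2q} M_\ell[S_i, S_{i+1}],
\]
with the convention $S_{2q+1}=S_1$, so that the sequence $(S_1,\ldots,S_{2q+1})$ already realizes the closed-walk condition of the trace-walk definition. For the even-$r$ case this is essentially the end of the combinatorial setup: each factor $M_\ell[S_i,S_{i+1}]$ is either $0$ (if $|S_i\Delta S_{i+1}|\ne r$) or a single Gaussian $G_{S_i\Delta S_{i+1}}$, so the intermediate sets $W_i$ are empty and the sequence is exactly a trace-walk for even $r$.

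Second, for odd $r$ I would substitute the definition of $M_\ell[S_i,S_{i+1}]$ from Definition~\ref{def: kikuchi-odd}, which rewrites each entry as a sum over an intermediate vertex $t\in[n]\setminus(S_i\cup S_{i+1})$ and equal partitions of $S_i\setminus S_{i+1}$ and $S_{i+1}\setminus S_i$ into $(U_A,U_B)$ and $(V_A,V_B)$, times a product of two Gaussian entries. Pushing this sum outside the product over $i$ expands $\prod_i M_\ell[S_i,S_{i+1}]$ into a sum indexed by the additional data $(W_i,U_{A,i},U_{B,i},V_{A,i},V_{B,i})$ at every step, which is exactly the extra data recorded in the generalized trace-walk definition. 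Every resulting term is a product of $4q$ Gaussian entries $G_e$ over hyper-edges $e$ determined by the tuple.

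Third, I would take the expectation inside the sum (justified since it is a finite sum), and use independence of distinct entries $\{G_S\}_{S\in\binom{[n]}{r}}$ together with $G_S=G_{S'}$ when $S=S'$, to obtain
\[
\E\!\left[\prod_{i=1}^{2q} M_\ell[S_i,S_{i+1}]\right] \;=\; \prod_{e\in E(P)} \E\!\left[G_e^{\mul_P(e)}\right],
\]
which is exactly $\val(P)$. Since odd Gaussian moments vanish, the only surviving terms correspond to tuples in which each hyper-edge appears an even number of times, matching the evenness condition in the trace-walk definition. All other tuples contribute $0$, which is why the equality in the original expansion becomes an inequality upon restricting to trace-walks (in fact it is an equality, but stating it as $\leq$ is harmless and is what we use downstream). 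Summing these per-walk contributions then gives the claimed bound.

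The only real point of care is the odd-$r$ expansion: one must verify that the data $(W_i,U_{A,i},U_{B,i},V_{A,i},V_{B,i})$ introduced by the Cauchy--Schwarz style definition of $M_\ell$ lines up cleanly with the intermediate-set-plus-partition data of a trace-walk, and that the two hyper-edges produced at each step are precisely $U_{A,i}\cup V_{A,i}\cup W_i$ and $U_{B,i}\cup V_{B,i}\cup W_i$. This is a straightforward but notationally heavy matching exercise; once it is done, the factorization of the expectation over hyper-edges and the identification with $\val(P)$ is immediate from independence and the vanishing of odd Gaussian moments.
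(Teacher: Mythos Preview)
Your proposal is correct and matches the paper's (implicit) approach: the paper does not give a separate proof of this proposition, treating it as immediate from expanding $\Tr(M_\ell^{2q})$ as a sum over closed walks, taking expectation termwise, and using independence of the entries $G_S$ together with the vanishing of odd Gaussian moments. Your write-up simply spells out these steps, including the odd-$r$ expansion of each entry into the intermediate-vertex-plus-partition data, which is exactly the bookkeeping encoded in the generalized trace-walk definition.
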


We consider the following factor-assignment scheme. For starters, we focus on the "combinatorial" factor that arises from counting of the walk, and a crucial observation to get us started is to split the cost of specifying random variables (viewed as hyper-edges) among vertices, in particular, among the incoming active vertices.

\begin{definition}[Generalized Vertex Status]
	For any step in the trace-walk in which we move from $U_t = S_t \in \binom{n}{\ell}$ to $V_{t} = S_{t+1} \in \binom{n}{\ell}$ potentially through intermediate vertex $W_t$, we call \begin{enumerate}
		\item each vertex in $S_{t+1} \cap S_t = U_t\cap V_t$ \emph{dormant};
		\item each vertex in $(S_{t+1}\Delta S_t) \cup W_t = (U_t\Delta V_t) \cup W_t$ \emph{active}.
	\end{enumerate}
	In particular, we call each active vertex outside $S_t$ \emph{incoming-active} (i.e. vertices from $(W_t\cup V_t)\setminus U_t$ , and each active vertex from $S_{t}$ \emph{outgoing-active} (i.e., vertices from $U_t\setminus V_t$). In short, we also refer to each such vertex as "incoming" and "outgoing". 
\end{definition}
As a remark,  each vertex from the intermediate set $W_t$ is considered active at the corresponding step by the above definition. With this definition in hand,  one can observe that each random variable (viewed as a hyper-edge) only depends on the active-vertices while independent of the dormant vertices. And this is precisely why we call them "active" and "dormant".

\begin{definition}[Vertex-appearance at a given step]
For any step-$t$ in a trace-walk in which we move from $U_t \coloneqq S_t \in \binom{n}{\ell}$ to $V_t \coloneqq S_{t+1} \in \binom{n}{\ell}$, we call a vertex in $S_t \cup S_{t+1} \cup W_t $ making an \emph{appearance} at step-$t$. Moreover,\begin{enumerate}
	\item a vertex $v\in S_t\cup S_{t+1} \cup W_t $ with label in $[n]$ is making a \emph{first} appearance at step-$t$ if there is no $t'<t$ such that any vertex that makes appearance at step-$t'$ has the same label in $[n]$;
	\item  a vertex $v\in S_t\cup S_{t+1}\cup W_t$ with label in $[n]$ is making a \emph{last} appearance if there is no $t'>t$ such that any vertex that makes appearance at step-$t'$ has the same label in $[n]$;
	\item 	 a vertex $v\in S_t\cup S_{t+1}\cup W_t$ with label in $[n]$  is making a \emph{middle} appearance at step-$t$ if it is making neither a first nor middle appearance. 
	\end{enumerate}
	When the step-$t$ is clear, we ignore the dependence of the appearance at $t$ and simply call it a vertex making appearance.
\end{definition}

As a sanity check, we observe that following the above definition, at each step-$t>1$ from $S_t = U_t$ to $S_{t+1} =  V_t$ (possibly through some intermediate set $W_t$, a vertex in the (left) boundary-set $S_t$ makes an appearance in step-$t$ as a vertex from $U_t$, and also an appearance in the previous step-$(t-1)$ as a vertex from $V_{t-1} = S_{t}$. Analogously, vertices in the right-boundary set of step-$t$ $S_{t+1} = V_t$ also make appearances in both step $t$ and step $t+1$.

\begin{proposition} By paying a one-time cost of $n^\ell$ throughout the walk, we can assume the following:
	at each step-$t$ from $S_t=U_t$ to $V_t= S_{t+1}$ , no vertex in $U_t$ is making a first appearance, and no vertex in $V_t$ is making a last appearance.
\end{proposition}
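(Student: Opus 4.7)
The plan is to spend the one-time factor of $n^\ell$ up front to enumerate the starting boundary $S_1\in\binom{n}{\ell}$ (using $\binom{n}{\ell}\le n^\ell$), and then to derive both conclusions from the two structural identities that are hard-wired into the definition of a trace-walk: the adjacency relation $V_t=U_{t+1}$ between consecutive steps, and the closed-walk condition $S_{2q+1}=S_1$. The conceptual content is that, once $S_1$ has been purchased, the $\ell$ labels carried by $S_1$ are ``already known'' throughout the entire walk, so we may by convention regard them as having been first-seen before step~$1$ and last-seen after step~$2q$; the two claims of the proposition then become purely bookkeeping statements.

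For the first claim, that no vertex of $U_t$ makes a first appearance at step~$t$, I would split on whether $t=1$ or $t\ge 2$. When $t\ge 2$, the adjacency identity $U_t=S_t=V_{t-1}$ shows that every vertex of $U_t$ already appears at step~$t-1$ as a vertex of $V_{t-1}$, so its first-appearance index is strictly less than $t$. When $t=1$, the vertices of $U_1=S_1$ are precisely those paid for by the up-front $n^\ell$ enumeration, so by the convention above they are not counted as first at step~$1$. The argument for the second claim is essentially symmetric: for $t\le 2q-1$ the identity $V_t=S_{t+1}=U_{t+1}$ guarantees that every vertex of $V_t$ reappears at step~$t+1$, while for $t=2q$ the closure $V_{2q}=S_{2q+1}=S_1$ together with the pre-specification of $S_1$ handles the wrap-around, so those vertices are declared to ``last appear'' outside the walk.

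The main (essentially the only) thing to check is that the $n^\ell$ prefactor produced by this maneuver lands in the right place: it must sit outside the $(2q)$-th power in the final expected-trace bound, matching the $\mathsf{MatrixDimension}$ slot of Definition~\ref{def:step-bound-function} rather than inflating the per-step bound $B_q(M_\ell)$. This is automatic because the enumeration is performed once, before the walk begins, and does not participate in the per-step factor-assignment scheme. I expect no real obstacle beyond verifying that treating $S_1$-vertices as first-appearing at a fictitious step~$0$ and, via closure, last-appearing at a fictitious step~$2q+1$ is consistent with how first/last appearances are used in the downstream scheme; this consistency is immediate because the scheme only charges a cost at steps that actually witness a new or vanishing label.
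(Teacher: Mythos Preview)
Your proposal is correct and follows essentially the same approach as the paper: the paper's brief justification also observes that the claim is immediate for all steps except the left boundary at $t=1$ and the right boundary at $t=2q$ (via $U_t=V_{t-1}$ and $V_t=U_{t+1}$), and handles those two endpoints by paying $n^\ell$ up front to specify $S_1=S_{2q+1}$. Your write-up is simply a more explicit unpacking of this argument, including the placement of the $n^\ell$ factor in the $\mathsf{MatrixDimension}$ slot.
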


\begin{remark}
	This is immediate for any step except the left-boundary of the first and the right-boundary of the final step of the walk following the above discussion. For the special set $S_1$ being the start and final destination of the walk, we pay a cost of $n^\ell$ up front which would then allow us to assume they are not making "first" appearance in step-$1$ for local accounting purpose (and similarly "last" appearance in step-$2q$).
\end{remark}

Analogously, we extend the definition of "appearance" for vertices to hyper-edges, and thereby steps. The partition of steps according to the step-status will later on be the cornerstone of our local perspective in the factor-assignment scheme. 
\begin{definition}[Step-Status/Edge-Appearance]
	We consider the hyper-edge (or hyoeredges in the case of odd-$r$) being traversed along at step-$t$ as making an appearance at the particular step. Moreover, a hyper-edge is making \emph{first} appearance if it does not appear in previous steps; and analogously it is making a \emph{final} appearance if it does not appear in subsequent steps.

For each step, we define its status based on the edge-appearance of the edge(s) in the current step. In particular, for even-$r$, a step-status is equivalent to the edge-appearance of the edge it travers along, and takes a label in $\{F,H,L\}$. For odd-$r$ in which each step is composed of $2$ edges, a step-status is a label in $\{F,H,L\}\times \{F,H,L\} $.
\end{definition}

With these perliminaries settled, we are ready to describe our final factor-assignment scheme. In the subsequent subsection, we describe the more lightweight component concerning edge-values, and then we describe the most technical component of our work concerning combinatorial-factor in the subsection after.

%
\subsection{Edge-Value Assignment}
Recall that in our toy-example showcased in the previous section for $\{\pm1\}$ random input, each walk gives expected value $0$ or $1$ through the defined quantity of walk-value $\val(P)$ . However, this is no longer true when we consider Gaussian input, more generally, inputs with higher moments. That said, it can be readily handled via an additional component - edge-value assignment - in our factor assignment scheme.  

\begin{fact}
	For $g \in \N(0,1)$ (and $\{\pm1\}$), we have $E[g]=0 $, and $E[g^2] =1$, and moreover, for $t> 2$, \[ 
	|\E[g^t]| \leq (2t)^{t/2}\,.
	\] 
\end{fact}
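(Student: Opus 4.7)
The plan is to split into odd vs.\ even $t$ and, within the even case, handle the Rademacher and Gaussian cases separately by direct computation. First I would note that both distributions are symmetric about zero, so $\E[g^t] = 0$ for every odd $t \geq 1$; in particular $\E[g] = 0$ is immediate, and for odd $t \geq 3$ the tail bound $|\E[g^t]| \leq (2t)^{t/2}$ holds trivially since the left-hand side is zero.

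Next, for even $t = 2k$: in the Rademacher case $g^t \equiv 1$, so $\E[g^t] = 1$, which is at most $(2t)^{t/2}$ for all $t \geq 2$. For the Gaussian case, I would invoke the standard moment identity $\E[g^{2k}] = (2k-1)!!$, derivable in one line by integration by parts against the density $\tfrac{1}{\sqrt{2\pi}} e^{-x^2/2}$ (yielding the recursion $\E[g^{2k}] = (2k-1)\,\E[g^{2k-2}]$ with base case $\E[g^2] = 1$). To match the form in the statement, I would then use the crude estimate
\[
(2k-1)!! \;=\; \prod_{j=1}^{k}(2j-1) \;\leq\; (2k)^{k} \;=\; t^{t/2} \;\leq\; (2t)^{t/2},
\]
since each of the $k$ factors is at most $2k-1 < 2k$. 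Combining the three cases gives the claim.

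There is essentially no obstacle: the fact is a standard moment estimate and the whole argument is elementary. The only care needed is to confirm that the unified upper bound $(2t)^{t/2}$ is loose enough to absorb the constant hidden in the Gaussian double-factorial and to cover the borderline $t = 2$ case for $\{\pm 1\}$, both of which the computation above handles with room to spare. One could in fact sharpen to $t^{t/2}$, but since the statement only demands $(2t)^{t/2}$, the crude product bound on the double factorial is preferable to Stirling's approximation for readability, and the extra factor of $2^{t/2}$ will be absorbed harmlessly into the edge-value component of the factor-assignment scheme to follow.
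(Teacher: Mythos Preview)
Your proof is correct and complete; the paper itself states this as a standard fact without proof, so there is nothing to compare against. Your elementary case split (odd $t$ by symmetry, Rademacher even $t$ trivially, Gaussian even $t$ via the double-factorial identity and the crude product bound $(2k-1)!! \leq (2k)^k$) is exactly the kind of verification one would expect, and even yields the slightly sharper $t^{t/2}$ as you note.
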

Given the higher-moment bound for edges, we consider the following edge-value assignment to steps,

\begin{mdframed}[frametitle = Edge -Value Assignment] \label{prop: edge-value-assignment}
Each edge (random variable $g_e$ that appears for $t$ times in the walks contributes an analytical value of $|\E[(g_e)^t]| \leq (2t)^{t/2}$, and we assign this value to each individual step that traverses along edge $e$ via factor $B_{ev}(i,e)$-the factor assigned to step-$i$ from edge-$e$- as following.
\begin{enumerate}
	\item For each step $i$ in which the edge $e$ is making the first or final appearance $(F/L)$ , we assign a factor $B_{ev}(i,e) = 1$;
	\item For each step  $i$ in which the  edge $e$ is making a middle appearance $(H)$, we assign a factor of $B_{ev}(i,e) = 2\cdot 2q = 4q$.
\end{enumerate}	
\end{mdframed}

\begin{definition}[Edge-value Bound for Steps]
For each step-$i$, we define \[B_{ev}(i) \coloneqq \prod_{\substack{ e \\ \text{edge appearing in step-}i  } } B_{ev}(i, e)\]
where $B_{ev}(i,e)$ is the factor assigned to step-$i$ from edge-$e$ in the above scheme.
\end{definition}

We can now verify that all the analytical factors throughout the global walk is accounted locally across steps in the walk from the perspective of each edge $e$ that appears in the walk.
\begin{claim}[Edge-value assignment is complete] For each edge $e \in \binom{n}{r}$, let $\tilde{B}_{ev}(e)$ be the factor assigned to it across its appearances throughout the walk in the above scheme, i.e., $B_{ev}(e) \coloneqq  \prod_{ \substack{ i\in [2q] \\ e \text{ appears in step-}i }} B(i,e) $ ,
we have \[ 
\tilde{B}_{ev}( e) \geq  |\E[(g_e)^t]|
\,.\]
	
\end{claim}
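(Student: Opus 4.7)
The plan is to verify this claim by a direct arithmetic comparison of the assigned factors against the higher-moment bound stated in the preceding fact, splitting into the cases $t = 2$ and $t \geq 4$ (the only cases allowed by the evenness constraint on trace-walks).

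First, I would unpack $\tilde{B}_{ev}(e)$ using the scheme. Every edge that contributes a nonzero walk-value must appear an even number of times, so $t \geq 2$. Exactly one appearance of $e$ is its first appearance and exactly one is its last; both are assigned a factor of $1$. The remaining $t-2$ appearances are middle appearances, each assigned a factor of $4q$. Hence
\[
\tilde{B}_{ev}(e) \;=\; 1\cdot 1 \cdot (4q)^{t-2} \;=\; (4q)^{t-2}.
\]

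Next, I would handle $t=2$ as a base case: here $\tilde{B}_{ev}(e) = 1 = \E[g_e^2]$, and the inequality is tight. The only real content is $t \geq 4$. In that regime I use two simple observations: (i) since $e$ appears in a length-$2q$ walk, $t \leq 2q$, which gives $2t \leq 4q$; and (ii) $t \geq 4$ implies $t/2 \leq t-2$. Combining these with the given higher-moment bound,
\[
|\E[(g_e)^t]| \;\leq\; (2t)^{t/2} \;\leq\; (4q)^{t/2} \;\leq\; (4q)^{t-2} \;=\; \tilde{B}_{ev}(e),
\]
which is exactly what is claimed.

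I expect no serious obstacle here. The only subtlety is remembering that the edge-value bound $|\E[g^t]| \leq (2t)^{t/2}$ in the preceding fact is stated for $t > 2$, so the $t = 2$ case must be checked directly (it is trivial because $\E[g^2] = 1$), and that the assumption $t \leq 2q$ is what allows the factor $2t$ inside the Gaussian moment bound to be absorbed into $4q$. Everything else is immediate from the definitions of the scheme, so the whole proof reduces to the two-line inequality displayed above.
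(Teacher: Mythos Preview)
Your proposal is correct and essentially identical to the paper's own proof: both compute $\tilde{B}_{ev}(e) = (4q)^{t-2}$, dispatch $t=2$ directly, and for larger $t$ use $t \leq 2q$ together with $t/2 \leq t-2$ (for $t\geq 4$) to bound $(2t)^{t/2}$ by $(4q)^{t-2}$. The only cosmetic difference is that the paper also remarks on odd $t$ (where the moment vanishes), whereas you invoke the evenness constraint up front to exclude those cases.
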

\begin{proof}
	This is immediate for edges that make $2$ appearances, and edges appearing only once (or more generally odd number of times) is zero. For number of appearances $t>2$ with, we have \[ 
	B_{ev}(e)  = (2 \cdot 2q)^{t-2} \geq (2t)^{t/2}
	\]
	as we observe that $t\leq 2q$ trivially, and verify that $x^{t-2} \geq x^{t/2}$ for $t\geq 4$. Finally, the claim is proven as we recall that \[ 
	|\E[g^{t}]| \leq (2t)^{t/2}
	\]
	for $t>2$ for $g\in \N(0,1)$ and $\{\pm 1\}$, and noting that for $t=3$ the RHS is trivially $0$ while the LHS is $2q$.
	\end{proof}
    From the above, 
    it is straightforward to verify that we can upper bound the global walk-value as following 
    once we switch to a step-perspective. This concludes our edge-value component of the factor-assignment scheme.
    \begin{proposition}[Verification of Global Value Bound from Steps]
    For any walk $P$, recall that $\val(P) \coloneqq \E_G \left[ \prod_{i\in [2q]} M_\ell [S_i, S_{i+1}] \right] = \prod_{e = (I,J)\in E(P)} \E_G[(G_e)^{\mul_P(e)} ]$, we have \[ 
    \val(P) \leq \prod_{e\in E(P)} \tilde{B}_{ev}(e)  = \prod_{\substack{ \text{step-}i \\ i\in [2q] \\ }} B_{ev}(i)\,.
    \]
    where we emphasize $\tilde{B}_{ev}(e)$ is the accounting from edge-perspective for edge $e$, and $B_{ev}(i)$ is the accounting from the step-perspective for step-$i$.
\end{proposition}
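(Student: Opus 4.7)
The plan is to establish the two parts of the proposition—namely the inequality $\val(P) \leq \prod_{e\in E(P)} \tilde{B}_{ev}(e)$ and the equality $\prod_{e\in E(P)} \tilde{B}_{ev}(e) = \prod_{i\in [2q]} B_{ev}(i)$—by direct manipulation, since both amount to bookkeeping on top of the preceding claim. First, starting from the identity $\val(P) = \prod_{e\in E(P)} \E_G[(G_e)^{\mul_P(e)}]$ already stated in the proposition, I would pass to absolute values via the triangle inequality, giving $|\val(P)| \leq \prod_{e} |\E_G[(G_e)^{\mul_P(e)}]|$. The preceding claim asserts precisely that $|\E_G[(G_e)^{\mul_P(e)}]| \leq \tilde{B}_{ev}(e)$ for every hyperedge $e$ appearing in the walk, so applying this termwise—together with the nonnegativity of each $\tilde{B}_{ev}(e)$—yields the first inequality.

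For the equality, the key observation is that the edge-perspective quantity is already defined as a product over the steps at which the edge appears:
\[
\tilde{B}_{ev}(e) \;=\; \prod_{\substack{i \in [2q] \\ e \text{ appears in step } i}} B_{ev}(i,e).
\]
Plugging this into $\prod_{e\in E(P)} \tilde{B}_{ev}(e)$ and swapping the order of the resulting double product gives
\[
\prod_{e\in E(P)} \prod_{\substack{i \in [2q] \\ e \text{ appears in step } i}} B_{ev}(i,e) \;=\; \prod_{i\in [2q]} \prod_{\substack{e\in E(P) \\ e \text{ appears in step } i}} B_{ev}(i,e) \;=\; \prod_{i\in [2q]} B_{ev}(i),
\]
where the final equality is the definition of $B_{ev}(i)$. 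Each pair $(i,e)$ with $e$ appearing in step $i$ is enumerated exactly once on both sides, so the swap is a straightforward finite Fubini-type reindexing.

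I do not expect any genuine obstacle in this proposition, as it is essentially a verification that the edge-value assignment scheme is complete and compatible with the step-perspective accounting. The only subtlety worth flagging explicitly is that for odd $r$ each step traverses two hyperedges rather than one, but the definition $B_{ev}(i) = \prod_{e \text{ in step } i} B_{ev}(i,e)$ already accommodates this multiplicity, so the Fubini swap above applies uniformly in both the even and odd cases without modification.
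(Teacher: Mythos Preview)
Your proposal is correct and matches the paper's intent: the paper does not write out a proof for this proposition, merely noting that it is ``straightforward to verify'' once the preceding claim is in hand. Your argument---applying the edge-wise bound $|\E[(G_e)^{\mul_P(e)}]| \leq \tilde{B}_{ev}(e)$ termwise and then reindexing the double product via the definitions of $\tilde{B}_{ev}(e)$ and $B_{ev}(i)$---is exactly that straightforward verification, and your remark that the odd-$r$ case is automatically handled by the product-over-edges definition of $B_{ev}(i)$ is a nice clarification.
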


\subsection{Combinatorial-Factor Assignment via Vertices}
To identify each step, it is sufficient to identify the following for each edge appearing for the first time,\begin{enumerate}
	\item What are the \emph{outgoing} active vertices from the current boundary? This is a cost of $\ell$ for each such vertex. 
	\item What are the \emph{incoming} active vertices? Each of them could take a cost of $[n]$ if they have not been seen, or a cost of $O(q\cdot r)$ otherwise. 
\end{enumerate}
On the other hand, for any edge that has been seen, at most a cost of $O_r(q)$ is sufficient while we note that it can be improved in the ideal-last-step situation to be $O_r(\ell)$. Motivated by the above observation, we design a factor-assignment for each of the above factors. 

\subsubsection{Factor of Outgoing Active Vertices}

For starters, we note that the factor regarding \emph{outgoing} active vertices can be rerouted and split over the edge's first and last appearance by the following scheme,

%

\begin{mdframed}[frametitle = Combinatorial Factor Assignment for Factors of Outgoing Active Vertex in $\ell$ ]
\label{prop:out-going-cost}
For each edge appearing for the first time, it incurs a factor of $\ell^{ \lfloor r/2 \rfloor}$ from identifying the (incidental) outgoing active vertices. We assign them via the following,
\begin{enumerate}
	\item For the step in which the edge first appears ($F$), assign a factor of $\sqrt{\ell}$ for each of its incident \emph{incoming} active vertex on the step-boundary;
	\item  For the step in which the edge appears for the last time  ($L$), assign a factor of $\sqrt{\ell}$ for each of its incident \emph{outgoing} active vertex on the step-boundary;
\end{enumerate} 

\begin{remark}
	Crucially, notice that no $\sqrt{\ell}$ factor is assigned to the intermediate vertex $W$ for the odd-$r$ case as the vertex is not on the step boundary!
\end{remark}

%
%
\end{mdframed}

Since each edge is incident to the same number $\lfloor r/2\rfloor$ of incoming and outgoing active vertices on the boundary by definition of our matrices, it is straightforward to observe that we indeed pick up a total  of $2 \cdot \lfloor r/2\rfloor$ many $\sqrt{\ell}$ factors for each hyperedge across its first and last appearance. That is a total of $ 
\sqrt{\ell}^{2 \cdot \lfloor r/2\rfloor} = \ell^{\lfloor r/2\rfloor}
$ which is precisely the combinatorial factor incurred for identifying the outgoing active vertices for each edge when it appears for the first time!
\subsubsection{Factor of Incoming Active Vertices}

Next, we focus on the factor from incoming vertex for each edge that appears for the first time. In fact, our assignment will be more general that allows us to handle factor for edges making a middle (non-first/last) appearance and edges making last appearance but not in ideal-condition altogether , as they both involved factors in $O_r(q)$ as well.

\begin{mdframed}[frametitle = Combinatorial Factor Assignment for Vertex Factor in $n$ and $O(q)$ ] \label{prop:vtx-assignment}
 Each vertex requires a factor $n$ to be specified when it first appears in the walk, and a subsequent factor of $O(q)$ when it appears as an incoming active-vertex of some $F$ edge in the walk. We assign its corresponding factor  as the following,
\begin{enumerate}
	\item Assign a factor of $\sqrt{n } $ for the step if the vertex is appearing for the first time;
	\item Assign a factor of $\sqrt{n}$ for the step if the vertex is appearing for the last (final) time;
	\item Assign a factor of $O(q\cdot r )$ if the vertex is making a middle appearance as an \emph{incoming} active vertex.
	\end{enumerate}
Moreover, for completeness, for vertices incident to some $H$-edge, i.e. edge appearing for the middle time, or an $L$-edge but not in ideal-step condition, assign a total cost of $O_r(q)$ to all the (incidental) incoming active vertices.
\end{mdframed}

The crux of the above scheme is that we redistribute the factor of $n$ for a vertex incurred up front when it first appears among its first and last appearance, so that we attain the usual square-root saving that we anticipate for the spectral norm bound. 

\paragraph{Ideal last-cost, and Redistribution} With the bounds for vertices incident to some edge making the first appearances, we may now focus on the last-cost, the cost of specifying an edge appearing for the last time. We formally restate the ideal last-cost bound as highlighted in the overview, and show that it can be analogously extended to the odd-$r$ case in which we use a \emph{single} factor of $\ell$ to identify $2$ edges simultaneously in the analogous ideal-condition.

For convenience, we recall the condition for an ideal last-step.\begin{definition}[Ideal Last-Step for even-$r$]
	Suppose at some step-$t$ in the walk, we are walking from $S_t$ to $S_{t+1}$ along some edge that is promised to be making the last-appearance in the walk, we call the step from $S_t$ to $S_{t+1}$ an ideal last-step if \emph{some out-going vertex in $S_{t}\setminus S_{t+1} \subseteq [n]$ is appearing for the final time in the walk, i.e., it is not to-be-revisited upon the departure from $S_t$}.
\end{definition} 
We now re-state our observation earlier, that for any fixed vertex, if it is appearing for the last time in the boundary, its incident edge can be effortlessly specified. 

\begin{claim}[Final-Appearance Edges is fixed for any \emph{given} Final-Appearance Vertex] \label{claim:retrun-fixed-given-vertex}
Given any vertex with label in $[n]$ at step-$t$ via some edge making its final appearance, if the vertex is additionally making its final-appearance, the edge is fixed (i.e., there is a unique edge).
\end{claim}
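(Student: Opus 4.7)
The plan is to combine two structural constraints on trace-walks---evenness of hyperedge multiplicities, and the fact that a label making its final appearance at step-$t$ is absent from every later step---to pin down $e$ uniquely given $v$ together with the walk history through step $t-1$.

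The first observation I would establish is a label-containment lemma: whenever a hyperedge $e'$ that contains the label $v$ is traversed at some step $t'$, the vertex $v$ must be an \emph{active} vertex at step $t'$. This is immediate from the Kikuchi structure, since dormant labels at step $t'$ lie in $S_{t'}\cap S_{t'+1}$ while the labels of the traversed hyperedge lie in $S_{t'}\Delta S_{t'+1}$ (or in the intermediate set $W_{t'}$ for odd $r$), and these sets are disjoint. Since $v$ makes its final appearance at step-$t$, $v$ is absent from the active set of every step $t'>t$, so no hyperedge of the walk containing $v$ is traversed at any step after $t$.

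The second step is a parity argument. For a hyperedge $e'$ containing label $v$, let $m_{\leq t-1}(e')$ denote the number of appearances of $e'$ in steps $1,\ldots,t-1$. By the evenness clause in the definition of a trace-walk, the total multiplicity of $e'$ is even; combined with the first step, any remaining appearance of $e'$ can occur only at step-$t$. Hence $e'$ must appear at step-$t$ precisely when $m_{\leq t-1}(e')$ is odd, and cannot appear at step-$t$ when it is even (otherwise another appearance would be forced after $t$, contradicting the first step).

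Finally, for the even-$r$ case step-$t$ traverses exactly one hyperedge, so at most one $v$-incident hyperedge can have odd $m_{\leq t-1}$. The edge $e$ is, by hypothesis, making its final appearance at step-$t$, so it has appeared at least once before $t$; by the parity/evenness reasoning above, $m_{\leq t-1}(e)$ must be odd. Therefore $e$ is the unique $v$-incident hyperedge with odd $m_{\leq t-1}$, and in particular is determined by $v$ together with the walk history through step $t-1$. The same parity bookkeeping over $v$-incident hyperedges handles the odd-$r$ case, where step-$t$ traverses two hyperedges simultaneously. The argument is elementary once the label-containment lemma is in hand; the only place where the Kikuchi structure is genuinely used is that initial observation.
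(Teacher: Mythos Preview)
Your argument is correct and actually supplies more detail than the paper's own proof. The paper proceeds by maintaining an explicit record of which previously-seen edges have not yet made their final appearance (at a bookkeeping cost of $2$ per step, absorbed into $O_r(1)$), and then simply \emph{asserts} that a final-appearance outgoing vertex is incident to a unique such ``open'' edge. You instead give a self-contained combinatorial justification: the label-containment observation forces every $v$-incident hyperedge to have all of its appearances in steps $\leq t$, and then the evenness constraint pins down the step-$t$ edge as the unique $v$-incident hyperedge with odd $m_{\leq t-1}$. The two criteria (``open'' versus ``odd multiplicity so far'') coincide for $v$-incident edges in this situation, so the approaches are closely related, but your parity formulation avoids the extra bookkeeping device and makes the uniqueness transparent. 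The paper's framing has the minor advantage that the ``open edge'' record is phrased in the same encoding language used elsewhere in the factor-assignment scheme; your framing has the advantage of being a direct proof of the combinatorial claim as stated. One small caveat: your final sentence about odd $r$ is a bit quick, since the intermediate vertex $W_t$ is incident to \emph{both} step-$t$ edges and so two $v$-incident hyperedges have odd parity there; but this matches how the paper actually uses the claim (first identifying one edge via a boundary vertex, then the second via $W_t$ once the first is known), so no real gap.
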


\begin{proof} Throughout the walk, we can keep track of the edges that have appeared yet not made its final appearance. To do this, for each subsequent appearance of an edge, one may use a cost of $2$ to identify whether this is the final appearance.

	With this record, suppose there is any "outgoing"-active vertex making its final appearance in $U_t$, we observe that such candidate "final"-appearance edge is unique given the "outgoing"-active vertex.\end{proof}

\begin{corollary}
For even-$r$, from the $U_t$-boundary at step-$t$, a cost of $\ell$ is sufficient to specify a final-appearance edge provided some "out-going" vertex of the edge is making its final appearance.
\end{corollary}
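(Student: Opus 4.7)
The plan is to combine a trivial enumeration of the boundary set $U_t$ with the uniqueness statement of Claim~\ref{claim:retrun-fixed-given-vertex}. First, since $|U_t| = \ell$ by definition of the Kikuchi matrix indices, spending a factor of exactly $\ell$ lets me name one vertex of $U_t$. The hypothesis of the corollary guarantees that at least one outgoing-active vertex of $U_t$ is making its final appearance in the walk, so after this factor of $\ell$ is paid, I can assume the named vertex is such a final-appearance outgoing vertex.

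Second, having pinned down a specific outgoing vertex that is itself making its final appearance, I would invoke Claim~\ref{claim:retrun-fixed-given-vertex} directly. That claim already builds in the running bookkeeping which, at every step, maintains a record of which previously-seen edges have not yet made their final appearance; this bookkeeping is absorbed elsewhere in the scheme as an $O(1)$ per-step overhead, so it does not re-enter the accounting here. Given this record, the edge incident to the named vertex that is currently making its final appearance is uniquely determined, since for even $r$ there is a single edge traversed per step and it is forced by its incidence with the named vertex together with the constraint that it is presently a final-appearance edge.

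Multiplying the two factors gives a total cost of $\ell \cdot 1 = \ell$, as claimed. There is essentially no obstacle: the corollary is a direct consequence of the preceding claim together with the size of the boundary set, and its role is simply to package the observation into a form that slots into the vertex-based factor-assignment scheme. The substantive content, which the corollary makes concrete, is that the $\ell$-factor here is a genuine improvement over the naive $O(q)$ cost incurred at non-ideal last-steps, and it is precisely the savings that the refined scheme is designed to extract whenever the ideal-last-step condition holds.
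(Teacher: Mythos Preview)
Your proposal is correct and matches the paper's intended reasoning: the corollary is stated without proof precisely because it follows immediately by paying a factor of $\ell$ to name one vertex of $U_t$ and then invoking Claim~\ref{claim:retrun-fixed-given-vertex} to pin down the unique final-appearance edge incident to it. There is nothing to add.
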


Let's now extend the above to the odd-$r$ case in which each step now contains $2$-edges by definition of our Kikuchi matrix. We first modify the ideal last-step condition as the following.

\begin{definition}[Ideal Last-Step for odd-$r$] For any odd $r$,
	suppose at some step-$t$ in the walk, we are walking from $S_t$ to $S_{t+1}$ along some edges via some intermediate vertex $W_t$, we call the step from $S_t$ to $S_{t+1}$ an ideal last-step if
	\begin{enumerate}
		\item \emph{some out-going vertex in $S_{t}\setminus S_{t+1} \subseteq [n]$ is appearing for the final time in the walk, i.e., it is not to-be-revisited upon the departure from $S_t$};
		\item the intermediate vertex $W_t$ (incident to both edges) is appearing for the final time in the current step-$t$, i.e., it is not to be revisited once we arrive at $S_{t+1}$.
	\end{enumerate} \end{definition}

Note that for odd-$r$, the only distinction in the definition of an ideal last-step is that we additionally impose the constraint that the intermediate vertex $W_t$ is also making a final appearance. As shown below, this condition is crucial for us to identify both edges of a single-step via a single factor of $O_r(\ell)$.

\begin{proposition} For odd-$r$,
	a cost of $\ell \cdot r = O_r(1) \cdot \ell$ is sufficient to identify both edges in an ideal last-step.
\end{proposition}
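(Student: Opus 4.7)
The plan is to decompose the specification of both hyperedges at an ideal last-step by applying the two promised final-appearance conditions in sequence, leveraging the already-established Claim~\ref{claim:retrun-fixed-given-vertex} about unique final-appearance edges incident to final-appearance boundary vertices. First I would spend a cost of $\ell$ to identify the outgoing active vertex $v \in S_t \setminus S_{t+1}$ promised by condition~1 of the ideal last-step to be making its final appearance; since $|S_t| = \ell$, picking $v$ out of the step-boundary $S_t$ costs $\ell$. The vertex $v$ lies in exactly one of the two incident hyperedges at step-$t$; call that edge $E_A$. Because $v$ is a boundary vertex making its final appearance and $E_A$ is a final-appearance edge containing $v$, Claim~\ref{claim:retrun-fixed-given-vertex} pins down $E_A$ uniquely from the running record of ``seen-but-not-yet-finally-appeared'' edges. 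Consequently $U_A = E_A \cap S_t$, $V_A = E_A \setminus (S_t \cup \{W_t\})$, and the intermediate vertex $w := W_t$ are all determined at no further cost beyond $\ell$.

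Next I would identify the second edge $E_B$ at only $O_r(1)$ additional overhead. Observe that $E_B$ shares the intermediate vertex $w$ with $E_A$, and $w$ is itself making its final appearance by condition~2 of the ideal last-step. Hence $E_B$ must also be a final-appearance edge, because any later traversal of $E_B$ would force a later appearance of $w$. Applying the same record-keeping argument to $w$, the only candidate final-appearance edges through $w$ are precisely the two edges traversed at step-$t$, namely $E_A$ and $E_B$. Since $E_A$ is already identified, $E_B$ is the unique remaining candidate. A constant overhead of at most $r$ suffices to absorb bookkeeping such as distinguishing which of $\{E_A, E_B\}$ contains $v$ and locating $w$ within the $r$ vertices of $E_A$, giving the stated total cost $\ell \cdot r$. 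Once $E_B$ is identified, $U_B = E_B \cap S_t$ and $V_B = E_B \setminus (S_t \cup \{w\})$ are fixed, and therefore $S_{t+1} = (S_t \setminus (U_A \cup U_B)) \cup V_A \cup V_B$ is determined, so the entire step is specified.

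The main subtlety, and the only place the odd-$r$ setting departs from the even-$r$ analysis, is that $w$ is an intermediate vertex rather than a boundary vertex, so the statement of Claim~\ref{claim:retrun-fixed-given-vertex} does not apply verbatim. However, the underlying record-keeping argument in the claim's proof — maintaining an $O(1)$-per-step log of edges that have already appeared but not yet made their final appearance — extends naturally: regardless of whether $w$ previously appeared on a step-boundary or only as an intermediate vertex of some earlier step, the record tracks all final-appearance edges incident to $w$, and condition~2 of the ideal last-step forces the list of final-appearance edges through $w$ at step-$t$ to consist of exactly $\{E_A, E_B\}$. This is the only new combinatorial input needed; the factor $r$ in the bound is just a constant slack, while the essential cost $\ell$ comes, as in the even-$r$ case, from pinning down the outgoing final-appearance vertex on $S_t$.
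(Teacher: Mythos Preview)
Your proposal is correct and follows essentially the same route as the paper: spend $\ell$ to pick the final-appearance outgoing vertex $v$, invoke Claim~\ref{claim:retrun-fixed-given-vertex} to pin the first edge $E_A$, spend $r$ to locate the intermediate vertex $w$ among the vertices of $E_A$, and re-invoke the claim at $w$ to pin the second edge $E_B$. One small presentational wrinkle: in your first paragraph you assert that $w$ is determined ``at no further cost beyond $\ell$,'' but this is not quite right---knowing $E_A$ and $S_t$ only tells you $E_A\setminus S_t = V_A\cup\{w\}$, leaving $\lceil r/2\rceil$ candidates for $w$ since $S_{t+1}$ is not yet known; you correctly fold this into the factor-$r$ bookkeeping in the following paragraph, so the overall argument stands.
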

\begin{proof}
	This follows by a double application of \cref{claim:retrun-fixed-given-vertex} as the following.
	Notice that by construction of our matrix, and thereby definition of our walks, the intermediate vertex is incident to both edges. Therefore, it suffices for us to specify the first edge via a cost of $\ell$ as in the even-$r$ case via applying \cref{claim:retrun-fixed-given-vertex} on some out-going vertex making its final appearance. 
	Once the edge is identified, the intermediate vertex can be specified again by a cost of $r$. Finally, apply \cref{claim:retrun-fixed-given-vertex} again but on the specified intermediate vertex identifies to us the second edge that is making its final appearance. This incurs a total cost of $r\cdot \ell = O_r(\ell)$. 
\end{proof}

Finally, analogously to splitting the factor of $q = \ell \log n$  equally to the first and last step in which an edge appears, we redistribute the factor of $O_r(\ell)$ from the ideal last-step. Towards this end, we first observe the following bound for number of ideal steps,

\begin{proposition}
	There can be at most $q$ ideal last-steps in a length-$2q$ walk.
\end{proposition}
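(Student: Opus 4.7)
The plan is to count ``last-appearance events'' of edges in the walk and leverage the evenness constraint from the definition of a trace-walk. First, I would establish that every ideal last-step forces its incident edge(s) to be making their last appearance in the walk. For even $r$ this is immediate from the definition of an ideal last-step. For odd $r$, the condition that the intermediate vertex $W_t$ is appearing for the final time in step $t$ means that no subsequent step contains the label of $W_t$ among its boundary or intermediate vertices; since $W_t$ is incident to \emph{both} edges used at step $t$, any later appearance of either of these edges would force the label of $W_t$ to reappear, a contradiction. Hence in the odd-$r$ case, both edges at an ideal last-step are simultaneously making their last appearance.

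Next, I would apply evenness to bound the number of distinct edges in the walk. Let $m$ denote the number of distinct edges that ever appear across the $2q$ steps. By the evenness clause of the trace-walk definition, $\mathsf{mul}_P(e) \geq 2$ for every edge $e \in E(P)$. In the even-$r$ case each step uses exactly one edge, so $\sum_{e \in E(P)} \mathsf{mul}_P(e) = 2q$ and therefore $m \leq q$. Since each ideal last-step consumes exactly one ``last-appearance'' token (there is one such token per distinct edge), the number of ideal last-steps is at most $m \leq q$. In the odd-$r$ case each step uses two edges, so the total slot count becomes $4q$, giving $m \leq 2q$; but by the paragraph above each ideal last-step now consumes \emph{two} last-appearance tokens (one per incident edge), so the number of ideal last-steps is at most $m/2 \leq q$.

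The argument is essentially a counting argument, and I do not anticipate any serious technical obstacle. The only subtle point is the odd-$r$ verification that the finality of $W_t$ simultaneously forces both incident edges to be at their last appearance, which follows directly from $W_t$ being incident to both edges together with the fact that any later appearance of an edge must reintroduce all of its vertex labels into the walk.
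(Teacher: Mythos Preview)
Your proposal is correct and matches the paper's own argument essentially line for line: for even $r$ both you and the paper use that each edge appears at least twice so at most $q$ of the $2q$ steps can be last steps; for odd $r$ both arguments observe that an ideal last-step forces both incident edges to be at their final appearance (you via the finality of $W_t$, the paper implicitly), and then count $4q$ edge-slots with multiplicity $\geq 2$ to cap distinct edges at $2q$, hence ideal last-steps at $q$. Your write-up is slightly more explicit in justifying why both edges must be final in the odd case, but the underlying counting is identical.
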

 \begin{proof}
 	This bound is immediate for even-$r$ as each edge needs to appear at least twice in the walk, therefore, at most half of the steps can be last-steps. To extend to the odd case, observe that we would have used at least $2q+2$ distinct edges if there are $\geq q+1$ ideal last steps. However, we use at most $4q$ edges (counting multiplicities) in a length-$2q$ walk, and each distinct edge needs to appear at least twice, leading to a contradiction.
 	 \end{proof}
 We may now "naively" split the factor of $\ell$ as following. It should be emphasized that the cost of $O_r(\sqrt{\ell})$ is assigned to the whole step as opposed to an edge alone. This does not make a distinction for the even-$r$ case, while importantly for odd-$r$, each step of \emph{two edges} gets assigned a total cost of $O_r(\sqrt{\ell})$.
 
\begin{mdframed}[frametitle = Last-Step Cost Assignment] \label{prop: last-cost-assignment}
For each step that uses some edge making the first or final appearance $(F/L)$ , we assign a factor $O_r(\sqrt{\ell})$ to the whole-step.
\end{mdframed}

Finally, we wrap up this section by recapping our factor assignments and verify that this is a complete factor-assignment scheme for the combinatorial factor. 
\begin{proposition}
    Let $B_{global}(P)$ be the sufficient cost to identify the whole walk, and $B_{local}(P)$ be the cost assigned by our scheme to each step across the walk, \[ 
    B_{global}(P) \geq B_{local}(P)\,,
    \]
    for \[ 
    B_{local}(P)\coloneqq \text{MatrixDimension} \cdot  \prod_{i}B_{local}(i)\,.
    \]
    In other words, each combinatorial factor for counting the walk is assigned to some local step. 
\end{proposition}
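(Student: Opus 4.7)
The plan is to prove the inequality by regrouping the local assignments according to the global combinatorial object they are targeting (edge or vertex), and then matching each regrouped local product against the corresponding piece of the natural global counting cost.

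First, I would make $B_{global}(P)$ precise by enumerating the natural combinatorial cost of identifying an arbitrary trace-walk $P$ of length $2q$ with starting set $S_1$: the initial $\ell$-subset contributes the $\text{MatrixDimension}$ factor $\binom{n}{\ell}$; for each distinct hyperedge $e \in E(P)$ with multiplicity $\mu_P(e)$, the first appearance contributes $n^{\lfloor r/2\rfloor}\cdot \ell^{\lfloor r/2\rfloor}$ for choosing the incoming and outgoing active vertices (minus previously-seen reductions), and each subsequent appearance contributes at most $O_r(q)$ in general or $O_r(\ell)$ if it is an ideal last appearance, using the reasoning from Claim~3.8 and its odd-$r$ analog. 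The product of these pieces is the target $B_{global}(P)$.

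Next, I would regroup the local assignments across the three frames (Outgoing, Incoming, Last-Step) according to the component of $B_{global}(P)$ they subdivide. The $\sqrt{\ell}$ assignments from the Outgoing frame, placed at the $F$ and $L$ steps of each edge and attached to its $\lfloor r/2 \rfloor$ boundary active vertices on each side, multiply to exactly $\ell^{\lfloor r/2\rfloor}$ per edge, matching the outgoing-vertex piece. The $\sqrt{n}$ assignments from the Incoming frame, placed at each vertex's first and last appearance, multiply to $n$ per distinct vertex, which matches the $n^{\lfloor r/2\rfloor}$ piece once summed across the fresh incoming vertices of each edge's first appearance; previously-seen incoming vertices are instead charged $O(qr)$ per middle appearance, which is exactly the natural global cost of re-specifying an already-labelled vertex. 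The $O_r(\sqrt{\ell})$ Last-Step factor applied to each of the two endpoints (the $F$ and $L$ steps) of an ideal-last edge multiplies to $O_r(\ell)$, matching the ideal-last savings. Non-ideal $L$ steps and $H$ edges are locally charged $O_r(q)$, which never exceeds the generic global $O_r(q)$ per re-appearance.

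Combining these matchings step by step, the product of local factors telescopes into a product of per-edge and per-vertex quantities that each individually lie at or below the corresponding factor in $B_{global}(P)$:
\[
\prod_{i\in [2q]} B_{local}(i) \;\leq\; \prod_{e\in E(P)} B_{global}^{\text{edge}}(e)\cdot \prod_{v\in V(P)} B_{global}^{\text{vtx}}(v)\,.
\]
Multiplying both sides by $\text{MatrixDimension}$ yields $B_{local}(P)\leq B_{global}(P)$, which is the claim.

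The main obstacle will be the bookkeeping at steps where multiple pieces of the scheme fire simultaneously---in particular an ideal $L$ step at which the Outgoing $\sqrt{\ell}$, the Incoming $\sqrt{n}$, and the Last-Step $\sqrt{\ell}$ factors all apply, and where an active vertex may also be tagged as ``final appearance'' and thus be part of the ideal-last identification in Claim~3.8. The cleanest way to handle this is to first commit to a grouping of local factors by the global object (edge or vertex) they pay for, independently of which step they sit on, and only then verify step-by-step that nothing is double-counted. A case split on the edge's status ($F$, $H$, $L$-ideal, $L$-non-ideal) together with a vertex-first/vertex-last indicator then reduces the verification to a finite number of concrete arithmetic checks.
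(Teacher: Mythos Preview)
Your approach is essentially the same as the paper's: both decompose the global counting cost into per-edge and per-vertex pieces (the outgoing $\ell^{\lfloor r/2\rfloor}$ factor, the incoming $n$- and $O(qr)$-factors handled via Lemma~3.10, the $O_r(q)$ for $H$ and non-ideal $L$ steps, and the ideal-last $\ell$ factor split as two $\sqrt{\ell}$'s), and then verify each piece is matched by the corresponding local assignment from the three mdframed schemes.

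One caveat worth flagging: the inequality direction you derive, $B_{local}(P)\leq B_{global}(P)$, matches the displayed inequality in the proposition but is the \emph{opposite} of what the paper's own proof and the ``in other words'' clause actually establish. The point of completeness is that every global factor is \emph{accounted for} by some local assignment, i.e.\ $B_{local}(P)\geq B_{global}(P)$, so that $\sum_P \val(P)\leq \text{MatrixDimension}\cdot\prod_i B_{local}(i)$. This is consistent with the earlier warm-up (Proposition~2.6, where $B_{local}(e)\geq B_{global}(e)$) and with Lemma~3.10 ($B_F(v)\geq n\cdot(2qr)^{s(v)}$). The displayed inequality in the proposition appears to be a typo; your regrouping argument is correct in structure but the final conclusion should read $\prod_i B_{local}(i)\geq \prod_e B_{global}^{\text{edge}}(e)\cdot\prod_v B_{global}^{\text{vtx}}(v)$.
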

\begin{proof}
    To start with, notice that we use a cost of $n^\ell$ to identify the walk-start in both the global and local accounting. Next, we prove the the combinatorial factor regarding edges that make the first appearance, the cost in $\ell$ for specifying the out-going active vertex is accounted in \cref{prop:out-going-cost}, and the cost of specifying incoming active vertices is accounted by  lemma~\cref{lem:first-appearance-accouning}. For combinatorial cost regarding edges that make $H$ appearance, or non-ideal $L$-step, we assign a factor of $O_r(q)$ to each such edge in \cref{prop:vtx-assignment}. Finally, for each edge making last appearance in the step in ideal-step, a total factor of $\ell$ is sufficient and we pick up two $\sqrt{\ell}$-factors from the first and last appearance according to \cref{prop: last-cost-assignment}.
    
\end{proof}

\begin{lemma} \label{lem:first-appearance-accouning}
	This is a complete vertex-assignment scheme regarding combinatorial factor of edges appearing for the first time. In other words, for any vertex $v$, let $B_{F}(v)$ be the total factor assigned to this vertex throughout the walk across its various appearances as an incoming active vertex of some edge appearing for the first time, we have \[ 
	B_{F}(v) \geq 	(n  )\cdot  ( 2q \cdot r)^{s(v)} 
	\]
	where we recall that $s(v)$ is the number of steps in which $v$ is specified as an incoming vertex of some edge beyond the vertex's first appearance. 
\end{lemma}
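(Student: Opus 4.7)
The plan is to verify the completeness bound directly by enumerating the factors the scheme in \cref{prop:vtx-assignment} assigns to $v$ across its lifetime in the walk, multiplying them, and comparing against the target $n\cdot (2qr)^{s(v)}$. Let $t_1 < t_2 < \cdots < t_k$ be the steps at which $v$ appears (in $U_t\cup V_t\cup W_t$). By rules (1) and (2) of the scheme, step $t_1$ contributes a factor $\sqrt{n}$ (first appearance) and step $t_k$ contributes a factor $\sqrt{n}$ (last appearance). By rule (3), each intermediate step $t_i$ with $1<i<k$ at which $v$ plays the role of an incoming–active vertex of some F-edge contributes a factor $\Theta(qr)$. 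Denote the count of such intermediate incoming-F steps by $m_F(v)$. Multiplying gives
\[
B_F(v) \;=\; \sqrt{n}\cdot \sqrt{n}\cdot \bigl(\Theta(qr)\bigr)^{m_F(v)} \;=\; n\cdot \bigl(\Theta(qr)\bigr)^{m_F(v)}.
\]

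The proof then reduces to comparing $m_F(v)$ with $s(v)$. For even $r$ there is no intermediate vertex, so at the last step $t_k$ we have $v\in U_{t_k}\setminus V_{t_k}$ (otherwise $v\in V_{t_k}=U_{t_k+1}$ would force an appearance at step $t_k+1$, contradicting that $t_k$ is the last step). Hence $v$ is outgoing-active at $t_k$ and is never counted in $s(v)$; likewise $v$ is never first-appearing as an incoming vertex beyond $t_1$. This yields $s(v)=m_F(v)$ and the bound is immediate after choosing the constant in the $\Theta(qr)$ at least $2$. The edge case $k=1$ (vertex appears exactly once) collapses the two $\sqrt{n}$ contributions into a single step and gives $n\cdot(2qr)^0$, matching $s(v)=0$.

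The main obstacle is the odd-$r$ case, where the intermediate set $W_t$ permits $v$'s last appearance to occur as an incoming-active vertex (namely when $v\in W_{t_k}$ but $v\notin U_{t_k}\cup V_{t_k}$). In this case $s(v)=m_F(v)+1$ and a naive count falls short by one factor of $2qr$. I would close this gap by invoking the running regime hypothesis $\ell\le n^{\delta}$ and $q=O(\ell\log n)$, which forces $\sqrt{n}\ge (2qr)^{2}$ for small enough $\delta$. This slack lets the $\sqrt{n}$ assigned by rule (2) absorb an additional factor of $2qr$ at the last step while still leaving enough to complete the $\sqrt{n}\cdot\sqrt{n}=n$ pairing with the first appearance; concretely, rewrite the last-step $\sqrt{n}$ as $\sqrt{n}=(2qr)\cdot \sqrt{n}/(2qr)$ and absorb the residual $\sqrt{n}/(2qr)\ge 1$ into the first/last pairing. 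The resulting inequality $B_F(v)\ge n\cdot (2qr)^{s(v)}$ follows uniformly in the even and odd cases, completing the proof.
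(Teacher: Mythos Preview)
Your overall strategy matches the paper's: collect the two $\sqrt{n}$ factors from the first and last appearance, collect a $2qr$ for each middle appearance where $v$ is incoming-active for an $F$-edge, and compare the product to $n\cdot(2qr)^{s(v)}$. The paper's proof is exactly this two-line computation.

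Where you diverge is in the odd-$r$ ``obstacle'', and here you have manufactured a difficulty that does not exist. You worry that $v$'s last appearance could be as the intermediate vertex $W_{t_k}$ of an $F$-edge, which would bump $s(v)$ to $m_F(v)+1$. But this cannot happen: if $v$ is incident to an $F$-edge at step $t_k$ (whether as a boundary vertex or as the intermediate vertex), that edge must appear again at some later step by evenness, and every vertex of a hyperedge is active at every step the edge is used. Hence $v$ reappears after $t_k$, contradicting that $t_k$ is its final appearance. This is precisely the mechanism behind \cref{claim: meta-appearance}, and it applies to \emph{all} active vertices of an $F$-edge, not just the outgoing ones on the boundary. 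Consequently $s(v)=m_F(v)$ in both parities, and the direct product $n\cdot(2qr)^{m_F(v)}$ already meets the target without invoking the slack $\sqrt{n}\ge (2qr)^2$.

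The same observation disposes of your $k=1$ edge case: any vertex that is active at some step is incident to an edge that appears at least twice, so the vertex itself appears at least twice; you never need to collapse two $\sqrt{n}$'s into a single step. Your slack argument is not wrong, but it is unnecessary, and it obscures the clean reason the scheme is complete.
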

\begin{proof}
    To start with, notice the RHS is the combinatorial factor incurred by $F$ edges that arise from specifying the label of $v$. In the case $s(v) = 0$ and the vertex appears only twice in the walk via first and last appearance, we have $B_{vtx}(v) = (\sqrt{n  })^2 = n$. For $s(v) \neq 0$, notice we still pick up $n$ from the first and last appearance $v$ which offset the contribution of $n$ on the RHS. For the factor depending on $s(v)$, consider
	 each time $v$ appears as an incoming active edge beyond the first time, we assign to its particular appearance step a factor of $2q\cdot  r $, which matches the cost assigned to vertex $v$ via that particular edge. Taking product over all edges that contribute to $s(v)$ gives the desired. 

\end{proof}

\subsection{Step-Bound from Factor Assignments}
We are now ready to combine the above components and deduce our local bound in the factor assignment scheme. Before that, we observe the following meta-claim about vertex-appearance that allows us to exploit the evenness assumption of our walks such that each edge (i.e. random variable) needs to appear at least twice.
\begin{claim}\label{claim: meta-appearance}
	In the case the edge is via an $F$ (or $L$)-step, any incidental out-going vertex (or any incoming vertex)  cannot be making \emph{last} (or \emph{first} appearance). 
\end{claim}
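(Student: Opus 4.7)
The plan is to argue directly from the evenness condition on the walk, which forces each edge to appear at least twice, together with the basic fact that an edge's appearance in the walk rigidly determines which vertex labels in $[n]$ must appear at that step.

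First I would fix some step-$t$ where an edge $e$ appears and some incident out-going active vertex $v \in S_t \setminus S_{t+1}$ (for even $r$) or more generally $v$ contained in the vertex label set of $e$ on the outgoing side. By the definition of the walk, the label of $v$ in $[n]$ is literally one of the coordinates of the hyperedge $e$. Hence, every single time $e$ is traversed by the walk, the vertex with that same label in $[n]$ must appear at that step (as a vertex on one of the two boundaries, or in the intermediate set in the odd case).

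Now suppose step-$t$ is an $F$-step, so $e$ is making its first appearance. By the evenness condition, $e$ must appear at least one more time, i.e.\ there exists some $t' > t$ at which $e$ is traversed again. By the previous paragraph, the vertex with label equal to $v$'s label must appear at step-$t'$ as well. Therefore $v$ is \emph{not} making its last appearance at step-$t$. The $L$-step case is symmetric: if the step is an $L$-step then $e$ has appeared at some earlier step $t'' < t$, and any incident incoming active vertex $v$ must already have appeared at $t''$ (as part of $e$'s vertex label), so $v$ cannot be making its first appearance at step-$t$.

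I do not anticipate a real obstacle here; the only thing to double-check is that the argument is clean in the odd-$r$ case, where each step carries two edges and an intermediate vertex $W_t$. But the argument is entirely edge-local: pick whichever of the two edges at step-$t$ the outgoing (respectively incoming) vertex $v$ is incident to, and apply the same evenness-plus-label-identity argument to that single edge. No case analysis on $W_t$ is needed, since the claim only concerns vertices on the step boundaries.
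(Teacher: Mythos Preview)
Your proposal is correct and takes essentially the same approach as the paper: both arguments use the evenness condition to force a second occurrence of the edge, and then observe that any vertex label incident to the edge must reappear at that second occurrence, contradicting a purported last (resp.\ first) appearance. The paper phrases this as a contrapositive while you give the direct version, and your explicit remark on the odd-$r$ case (reduce to a single incident edge) is a harmless elaboration of the same idea.
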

\begin{proof}
	This follows from the observation that edge needs to appear at least twice throughout the walk. Observe that it suffices for us to consider active vertices depending on whether they are outgoing or incoming. For an $F$-step, if any "outgoing" active vertex makes the final appearance at step $t$, the edge used by step-$t$ would only appear only once as otherwise any active vertex of this edge would make a subsequent appearance. 
	
	Similarly, for an $L$-step while any incoming vertex is making a first appearance, the edge at step-$t$ would have appeared only once throughout the walk.
\end{proof}

\begin{lemma}[Step Bound for Even-$r$]  For some fixed constant $\delta>0$ and let  $B_q(\al)$ be cost assigned to step-$t$ with step-status $\al \in \{F,H,L\} $, we have \[ 
	B_q(F) = B_q(L) \leq O_r(1) \cdot  \left(\sqrt{n\cdot \ell}\right)^{r/2} \cdot \sqrt{\ell} \,,
	\]
	and \[ 
	B_q(H) \leq o_n(1)  \left(\sqrt{n\cdot \ell}\right)^{r/2}\,.	\]
	provided $q \ll n^\delta$.
\end{lemma}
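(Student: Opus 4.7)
The plan is to compute $B_q(\alpha)$ for each step status $\alpha\in\{F,H,L\}$ by simply multiplying the factors assigned by the factor-assignment scheme at a single step, and then verify that the product matches the claimed asymptotic. A step of status $\alpha$ picks up contributions from four sources: (i) the $\sqrt{\ell}$ shares from the outgoing active-vertex assignment (only at $F$ and $L$), (ii) the vertex-identification factors $\sqrt{n}$ at first/last appearance or $O(qr)$ at middle-incoming appearance, (iii) the $O_r(\sqrt{\ell})$ share of the last-step cost (at $F$ and $L$ in the ideal case), and (iv) the edge-value factor from the edge-value assignment ($1$ at $F/L$, $4q$ at $H$). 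The main task is to constrain which of these sources actually contribute at each step, which is done by combining the meta-appearance claim with the elementary parity argument below.

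Before handling the three cases, I would record the following observation: at any internal step $t$, a vertex can be making its \emph{first} appearance only if it is incoming, and its \emph{last} appearance only if it is outgoing. Indeed, any vertex in $S_{t+1}$ automatically reappears as part of the left boundary of step $t+1$, and any vertex in $S_t$ was already present in the right boundary of step $t-1$. Combining this with the meta-appearance claim yields: at an $F$-step, outgoing vertices are never last (meta-claim) and incoming vertices are never last (since they sit in $S_{t+1}$); at an $L$-step, incoming vertices are never first (meta-claim) and outgoing vertices are never first (from $S_t\subseteq S_{\,t-1}$ boundary); and at an $H$-step, neither incoming nor outgoing can be first or last, because a past appearance of the edge witnesses the incoming vertices earlier while a future appearance of the edge witnesses the outgoing vertices later.

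With these constraints, bounding $B_q(F)$ is a short calculation. The $r/2$ incoming active vertices each contribute at most $\max(\sqrt{n},\, O(qr)) = O(\sqrt{n})$ under $q\leq n^\delta$ with $\delta<1/2$, giving $n^{r/4}$ in total; outgoing contributes nothing. Multiplying by the $\sqrt{\ell}^{\,r/2}=\ell^{r/4}$ outgoing-share assigned at $F$, by the $O_r(\sqrt{\ell})$ last-step share, and by edge-value $1$, we obtain $B_q(F)\leq O_r(1)\cdot(\sqrt{n\ell})^{r/2}\sqrt{\ell}$. For $B_q(L)$ in the ideal case, the roles of incoming and outgoing swap: the $r/2$ outgoing vertices each contribute at most $\sqrt{n}$ from their last appearance, the $\sqrt{\ell}^{\,r/2}$ outgoing share is assigned at $L$, and the middle-appearance incoming cost is absorbed by the ideal-step identification of the edge via the final-appearance outgoing vertex. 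The same upper bound results. For the non-ideal $L$-case, the assignment gives at most $O_r(q)\cdot \ell^{r/4}$, which is strictly smaller under $q\leq n^\delta$.

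For $B_q(H)$, there is no outgoing share and no last-step share, and by the observation above all active vertices are middle appearances, so no $\sqrt{n}$ factors appear. The total incoming/edge-identification cost is $O_r(q)$ (via the ``$\leq 2q$ past occurrences'' trick), and the edge-value factor is $4q$, giving $B_q(H)\leq O_r(q^2)$. Since $(\sqrt{n\ell})^{r/2}\geq n^{r/4}\geq n$ for even $r\geq 4$ and $q\leq n^\delta$ with $\delta<1/2$, this is $o_n(1)\cdot(\sqrt{n\ell})^{r/2}$. The main obstacle I anticipate is precisely the vertex-status bookkeeping in the $L$-case: one has to carefully justify that the incoming-vertex middle-appearance cost can indeed be charged to the ideal-step edge identification (rather than being double-counted), and to verify that the non-ideal $L$ clause and the general middle-appearance rule $O(qr)$ do not conflict. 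Once these subtleties are resolved, each case reduces to a one-line multiplication of factors.
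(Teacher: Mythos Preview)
Your proposal is correct and follows essentially the same approach as the paper: case on the step-status, invoke the meta-appearance claim to constrain which vertex-factors can appear, and then multiply the four contributions (outgoing $\sqrt{\ell}$ shares, vertex $\sqrt{n}$/$O(qr)$ factors, last-step $O_r(\sqrt{\ell})$, edge-value) to get the bound for each of $F$, $L$, $H$. Two minor points worth tightening: (i) in the ideal $L$-case there is no ``middle-appearance incoming cost'' to absorb at all, since the scheme only assigns the $O(qr)$ middle cost to incoming vertices of $F$-edges---so that sentence can simply be dropped; (ii) in the non-ideal $L$-case you omit the $O_r(\sqrt{\ell})$ last-step share, which is still assigned to any $F/L$ step, but this does not affect the conclusion since $O_r(q)\cdot \ell^{r/4+1/2}$ is still $o((\sqrt{n\ell})^{r/2}\sqrt{\ell})$ under $q\le n^\delta$.
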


\begin{proof}
We case on the step-status from $\{F,H,L\}$. For notational convenience, consider the step-$t$ move from boundary set $U_t = S_t$ to $V_t = S_{t+1}$.
	In the case this is an $F$-step, we observe the following, \begin{enumerate}
		\item Each dormant vertex in the intersection of $U_t\cap V_t$ not contribute any factor;
		\item By \cref{claim: meta-appearance}, any outgoing vertex in $U_t\setminus V_t$ cannot be making its last appearance. Moreover, since each such vertex appears in the previous step and thus cannot be making their first appearance. Finally, note that none of such vertex may be specified via the middle appearance cost for "incoming" active vertex, hence there is no cost for any such vertex.
		\item Any incoming vertex in $V_t\setminus U_t$ cannot be making its last appearance, while it can be making its first or middle appearance. In the case this is a first-appearance, we assign it a cost of $\sqrt{n}$; and for any subsequent middle appearance, we assign a cost of $2q \cdot r$ via \cref{prop:vtx-assignment}, 
		\item Additionally, any incoming vertex in $V_t\setminus U_t$ is assigned a factor of $\sqrt{\ell}$ via \cref{prop:out-going-cost} for the "split" cost of specifying the out-going active vertex of the edge;
		\item Therefore, each vertex contributes a cost of \[ 
		  (\sqrt{n} +  \ell \cdot 2q \cdot r) \sqrt{\ell} = O_r(1)\cdot  \sqrt{n\ell} \,.
		\]
        \item Edge-value is $1$ for each $F$ step via \cref{prop: edge-value-assignment};
        \item Each $F$ step gets assigned a factor of $O_r(\sqrt{\ell})$ from (potential) ideal last-step cost via \cref{prop: last-cost-assignment};
		\item Combining the cost for all vertices in this step gives us a bound of \[ 
		B_q(F)\leq  O_r(1) \cdot  (  \sqrt{n\ell} )^{r/2 }	 \cdot \sqrt{\ell}
		 \] as there are at most $r/2$ vertices in $V\setminus U$.
	\end{enumerate}
	
	In the case this is an $L$-step, the factor is essentially the same as the $F$-step except vertices in $U$ may now contribute as last-appearance vertex (as opposed to those in $V$ contributing as first-appearance vertex). Formally, we have \begin{enumerate}
		\item Each dormant vertex in the intersection of $U_t\cap V_t$ not contribute any factor;
		\item By \cref{claim: meta-appearance}, any incoming vertex in $V_t\setminus U_t$ cannot be making its first appearance. Moreover, since each such vertex appears in the subsequent step, and thus cannot be making their last appearance. Finally, note that none of such vertex may be specified via the middle appearance cost for "incoming" active vertex since such a cost is only incurred for edges appearing for the first time. Therefore, no cost is assigned for any vertex in $V_t\setminus U_t$.
		\item Any outgoing vertex in $U_t\setminus V_t$ cannot be making its first appearance, while it can be making its last or middle appearance. In the case this is a last-appearance, we assign it a cost of $\sqrt{n}$; and for any subsequent middle appearance, we assign a cost of $2q \cdot r$  via \cref{prop:vtx-assignment}. 
		\item Any outgoing vertex in $U_t\setminus V_t$ is on the step-boundary of an edge appearing for the last time, and therefore assigned a factor of $\sqrt{\ell}$ via \cref{prop:out-going-cost}  for the "split" cost of specifying the out-going active vertex of the edge;
		\item Therefore, each vertex  $U_t\setminus V_t$  contributes a cost of \[ 
		  (\sqrt{n} +  2q \cdot r)\cdot \sqrt{\ell } = O_r(1)\cdot  \sqrt{n\ell} \,.
		\]
		\item Edge-value is $1$ for each $L$ step via \cref{prop: edge-value-assignment};
        \item Each $L$ step gets assigned a factor of $O_r(\sqrt{\ell})$ from (potential) ideal last-step cost via \cref{prop: last-cost-assignment};
		\item Combining the above, notice that the factor of $2q$ is only needed if all vertices from  $U\setminus V$ are making middle appearance, in which case we do not pick up any final-appearance $\sqrt{n\ell }$ factor. Therefore, we have the total bound of \[ 
		B_q(L)\leq  \sqrt{\ell} \cdot ((1+o_n(1) \sqrt{n\ell })^{r/2} + \sqrt{\ell}\cdot  2q \cdot  (\sqrt{n\ell })^{r/2 -1}  = \ O_r(1)  \cdot (\sqrt{n\ell })^{r/2}  \cdot \sqrt{\ell}  \,.
		\]	
		provided $q \leq n^\delta$ for some $\delta>0$. 
		\end{enumerate}
		
		Finally, for the case of $H$-step, it is immediate to observe that all vertices must be making a middle appearance with no extra vertex-cost as the edge can be specified at a cost of $2q$. Combining with the edge-value gives us a bound of \[ 
		B_q(H)\leq \underbrace{(2q)}_{\text{edge-val}} \cdot \underbrace{(2q)}_{\text{combinatorial factor} } =   (2q)^{2} \,.
		\]

\end{proof}

\begin{lemma}[Step Bound for Odd-$r$]  For some fixed constant $\delta>0$, for any $q<n^\delta$,
	let $B_q(\al)$ be cost assigned to step-$t$ with step-status $\al \in \{F,H,L\} \times \{F, H, L\} $, we have \[ 
	B_q(F\times  F)  = B_q(L \times L) \leq O_r(1) \left(\sqrt{n\cdot \ell}\right)^{r}  \,,
	\]
	and \[ 
	B_q(\beta) \leq o_n(1)  \left(\sqrt{n\cdot \ell}\right)^{r}\,.	\]
	for any step-status $\beta \in \{F,H,L\} \times \{F, H, L\} \setminus \{F\times F, L \times L\}$. 
	In other words, we treat any step other than both edges being simultaneously $F$ or $L$ as a lower-order term. 
\end{lemma}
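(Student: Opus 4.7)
The plan is to mirror the even-$r$ argument verbatim, casing on the nine possible step-statuses $\alpha \in \{F,H,L\}^2$ determined by the appearance types of the two hyperedges traversed at each step. The key structural change relative to the even-$r$ case is that each step now involves two hyperedges joined at the intermediate vertex $W_t$, so after conditioning on the partition data in the step the step carries roughly $r-1$ incoming active vertices in $V_t \setminus U_t$, $r-1$ outgoing active vertices in $U_t \setminus V_t$, and the single intermediate vertex $W_t$, which lies off both step-boundaries and therefore receives no $\sqrt{\ell}$ from the outgoing-split scheme of \cref{prop:out-going-cost}.

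I would first handle the $F\times F$ case, in which both edges are new. By \cref{claim: meta-appearance}, no outgoing vertex in $U_t \setminus V_t$ can be making a last appearance, so these vertices contribute nothing. Each of the $r-1$ incoming vertices in $V_t \setminus U_t$ picks up $\sqrt{n}$ from its first-appearance label (dominating the middle-appearance cost $O_r(q)$ because $q \leq n^\delta$) together with a $\sqrt{\ell}$ from its incident edge's outgoing-split via \cref{prop:out-going-cost}, contributing $\sqrt{n\ell}$ per vertex. The intermediate vertex $W_t$ contributes only $\sqrt{n}$ as noted, and the step is assigned $O_r(\sqrt{\ell})$ by the ideal last-step allocation \cref{prop: last-cost-assignment}, yielding
\[
B_q(F\times F) \;\leq\; O_r(1) \cdot (\sqrt{n\ell})^{r-1} \cdot \sqrt{n} \cdot \sqrt{\ell} \;=\; O_r(1) \cdot (\sqrt{n\ell})^r.
\]
The $L\times L$ case is symmetric: the outgoing vertices in $U_t \setminus V_t$ now carry the $\sqrt{n\ell}$ factors (last-appearance label cost plus outgoing-split), while the intermediate vertex and the ideal-step allocation supply the remaining $\sqrt{n}$ and $\sqrt{\ell}$.

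For any mixed $\beta \notin \{F\times F, L\times L\}$, at least one of the two edges of the step is either an $H$ edge or an $F/L$ edge whose partner is of a different type. In either situation that edge incurs both an $O_r(q)$ combinatorial factor (via the middle-appearance scheme of \cref{prop:vtx-assignment}) and an $O_r(q)$ edge-value factor (via \cref{prop: edge-value-assignment}) in place of the $(\sqrt{n\ell})^{(r-1)/2}$ contribution it would otherwise make, which suppresses the local bound to $o_n(1) \cdot (\sqrt{n\ell})^r$ whenever $q \leq n^\delta$ for small enough $\delta$.

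The main obstacle I expect lies in the $L\times L$ analysis. The ideal last-step condition for odd-$r$ demands that \emph{both} an outgoing vertex in $U_t \setminus V_t$ and the intermediate vertex $W_t$ be making their final appearance, yet only a single $O_r(\sqrt{\ell})$ factor is being redistributed to the whole step by \cref{prop: last-cost-assignment}. The delicate point will be verifying that non-ideal $L\times L$ steps (in which the intermediate vertex or all outgoing vertices fail the final-appearance requirement) are safely absorbed into the $o_n(1)$ term via the fallback $O_r(q)$ factor that \cref{prop:vtx-assignment} assigns to $L$-edges outside the ideal condition, so that the $\sqrt{\ell}$ is claimed only when genuinely justified by the combinatorial geometry of the walk.
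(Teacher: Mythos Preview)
Your approach is essentially the paper's: case on the nine step-statuses, count $r$ incoming active vertices (only $r-1$ on the boundary) for $F\times F$, symmetrically for $L\times L$, and pick up the missing $\sqrt{\ell}$ from \cref{prop: last-cost-assignment}. The paper's own proof is even terser than yours and simply declares the mixed statuses ``follow immediately.''

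There is one genuine slip in your treatment of the mixed cases. You claim that in any $\beta\notin\{F\times F,L\times L\}$ the offending edge ``incurs both an $O_r(q)$ combinatorial factor and an $O_r(q)$ edge-value factor.'' The edge-value factor from \cref{prop: edge-value-assignment} is assigned \emph{only} to $H$-edges; an $F$ or $L$ edge always receives edge-value $1$. So for the status $F\times L$ (and $L\times F$) your stated mechanism does not apply, since neither edge is $H$. The correct reason $F\times L$ is lower order is that the shared intermediate vertex $W_t$ is forced into a \emph{middle} appearance: the $L$-edge has appeared before (so $W_t$ is not new) and the $F$-edge will appear again (so $W_t$ is not final). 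Hence $W_t$ contributes only $O_r(q)$ rather than $\sqrt{n}$, and the step-bound drops to $(\sqrt{n\ell})^{r-1}\cdot O_r(q)\cdot O_r(\sqrt{\ell}) = o_n(1)\cdot(\sqrt{n\ell})^r$ for $q\le n^\delta$.

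Your closing concern about non-ideal $L\times L$ steps is well-placed and is exactly what the fallback clause in \cref{prop:vtx-assignment} handles: when either the intermediate vertex or all outgoing vertices fail the final-appearance requirement, the step is charged $O_r(q)$ in lieu of the ideal $O_r(\ell)$, and one checks $(\sqrt{n\ell})^{r-1}\cdot O_r(q\sqrt{\ell})=o_n(1)\cdot(\sqrt{n\ell})^r$.
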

\begin{proof}
	The proof is largely identical to the even-$r$ case for the per-vertex factor of each active vertex except that the number of active vertices differs, except the factor of $\ell$ from outgoing active vertices of an $F$-edge warrants extra care as discussed in \cref{prop:out-going-cost}  .
	
	In the case of $F\times F$ status, notice the following change, \begin{enumerate}
		\item There are at most $r$ incoming active vertices that could be making first appearance, so we get assigned a factor of at most $O_r(1) \cdot \sqrt{n}^r$
		\item There are $r-1$ incoming active vertices on the boundary, each of which gets an assigned factor of $\sqrt{\ell}$ from  \cref{prop:out-going-cost}		\item Each step gets assigned a total of $O_r(\sqrt{\ell})$ from \cref{prop: last-cost-assignment}  .
	\end{enumerate} 
	Combining the above gives us  $B_q(F\times F) \leq  O_r(1) \left(\sqrt{n }\cdot \sqrt{\ell} \right)^r $.
	
	 The case of $L\times L$ status also warrants attention on its own,
	 \begin{enumerate}
	 	\item There are at most $r$ outgoingg active vertices that could be making last appearance, so we get assigned a factor of at most $O_r(1) \cdot \sqrt{n}^r$
		\item There are $r-1$ outgoing active vertices on the boundary, each of which gets an assigned factor of $\sqrt{\ell}$ from \cref{prop:out-going-cost} ;
		\item Each step gets assigned a total of $O_r(\sqrt{\ell})$ from \cref{prop: last-cost-assignment}  .
	 \end{enumerate}
	 Combining the above gives us  $B_q(L\times L) \leq O_r(1) \left(\sqrt{n }\cdot \sqrt{\ell} \right)^r $  .
	The bounds for other step-status are lower-order term, and follow immediately from the  above discussion.
\end{proof}

	\paragraph{Wrapping Up}
	We are now ready to prove our main theorem. 
	\begin{proof}[Proof to \cref{thm:main-thm-even} and  \cref{thm:main-thm-odd}]
	 We focus on the even-$r$ case while the odd-$r$ case holds verbatim. Let $B_q$ be our final step-value bound for each step by our factor-assignment scheme, summing over all possible step-status, we have \[ 
	 B_q \leq B_q(F) + B_q(H) + B_q(L) \leq O_r(1) \sqrt{n\cdot \ell }^{r/2} \cdot \sqrt{\ell }
	 \]
	 By construction of our factor-assignment scheme \cref{def:step-bound-function}, this gives an upper bound of \[ 
	 \E[\Tr(M_\ell)^{2q}] \leq \text{MatrixDimension} \cdot (B_q)^{2q}\,.
	 \]
	 Finally, this translates to a matrix norm bound immediately by Markov's as we consider for any constant $\eps >0$. \begin{align*}
	 	\Pr[ \|M_\ell\|_{sp} \geq  (1+\eps)\cdot B_q  ]&\leq \frac{\E[\Tr(M_\ell)^{2q}] }{(1+\eps)^{2q} \cdot (B_q)^{2q}  }\\
	 	&\leq \frac{n^\ell \cdot (B_q)^{2q}}{(1+\eps)^{2q} \cdot (B_q)^{2q}}\\
	 	&\leq c^{-q/\log n}
 	 \end{align*}
 	 for some constant $c>0$  since
 	 our $q$ can be taken as $q<n^\delta$ for some constant $\delta>0$.	
	\end{proof}

\section{Lower Bound}
We now complement our upper bound result by a lower bound. For this section, we restrict our attention to the even-$r$ case and note that an analogous argument extends to odd-$r$ case.
\label{sec:lower-bound}
\begin{lemma}
	\begin{align*}
		\E[\Tr(M_\ell^{2q})]& \geq  \text{matrix-dimension}\cdot \left((2\ell/r)^{r/4 }  \cdot (2\ell /r) \cdot \sqrt{\binom{n-\ell}{r/2}}\right)^{2q}\\
		&=   \text{matrix-dimension}\cdot  \left(\Theta_r(1) \cdot  \sqrt{n^{r/2} \cdot \ell^{r/2}} \cdot \sqrt{\ell}\right)^{2q} \,.
	\end{align*}
\end{lemma}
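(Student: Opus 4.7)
The plan is to exhibit a specific family of trace-walks for which we can directly compute the contribution to $\E[\Tr(M_\ell^{2q})]$, matching the claimed lower bound up to an $r$-dependent constant. The construction is exactly the ``out-and-back'' scheme sketched in the technical overview around \cref{fig:return_confusion}: cycles of length $2t$ (where $t \coloneqq 2\ell/r$) that first use $t$ fresh edges to transform the starting index $S_1$ into a disjoint set $S_{t+1} = \bigcup_i V_i$, then re-use the same $t$ edges in an arbitrary permuted order to return to $S_1$. The factor of $t!$ coming from the return orderings is the combinatorial source of the extra $\sqrt{\ell}$ per step. To build walks of full length $2q$ we concatenate $m \coloneqq q/t$ such cycles (assuming $t \mid q$; the residual is handled by rounding and absorbed into the $\Theta_r(1)$).

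To enumerate the family, first fix a starting set $S_1 \in \binom{[n]}{\ell}$; this contributes the matrix-dimension prefactor $\binom{n}{\ell}$. For each of the $m$ cycles, a distinct walk is determined by: (i) an ordered partition $(U_1, \ldots, U_t)$ of $S_1$ into blocks of size $r/2$, giving $\ell!/((r/2)!)^t$ choices; (ii) a sequence of fresh incoming blocks $V_1, \ldots, V_t \subseteq [n] \setminus S_1$ of size $r/2$ each, chosen disjoint from all previously used incoming blocks across cycles, giving $\binom{n-\ell}{r/2}^t$ choices up to a lower-order correction; and (iii) a permutation of $[t]$ specifying the order of the $t$ return steps, giving $t!$ choices. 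The forward steps traverse edges $U_i \cup V_i$ and the return steps traverse the same $t$ edges in permuted order, so every edge is traversed exactly twice; this makes the walks valid trace-walks in the sense of \cref{def:simple-trace-walk-even} with $\val(P) = \prod_e \E[g_e^2] = 1$.

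Applying Stirling's approximations $\ell!/((r/2)!)^t = \Theta_r(1) \cdot t^{\ell}$ and $t! = \Theta(1) \cdot (t/e)^t \sqrt{t}$ to the per-cycle count, and using $rt/2 = \ell$ together with $2q = 2mt$ to redistribute exponents, the total contribution organizes into
\[
\E[\Tr(M_\ell^{2q})] \ \geq \ \binom{n}{\ell} \cdot \left(\Theta_r(1) \cdot t^{r/4} \cdot \sqrt{t} \cdot \sqrt{\binom{n-\ell}{r/2}}\right)^{2q},
\]
which, after substituting $t = 2\ell/r$ and using $\binom{n-\ell}{r/2} = \Theta_r(n^{r/2})$, rearranges into the stated form $\binom{n}{\ell} \cdot (\Theta_r(1) \cdot \sqrt{n^{r/2} \ell^{r/2}} \cdot \sqrt{\ell})^{2q}$. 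The various Stirling sub-exponential factors of the shape $e^{O(t)}$ and $\sqrt{2\pi \ell}$, once raised to the $m$-th power and then to the $1/(2q)$-th root, contribute only $\Theta_r(1)$ per step and are absorbed cleanly.

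The main technical care lies in justifying the disjoint-across-cycles assumption on the incoming blocks, which is what lets us replace the iterated product $\prod_i \binom{n-\ell-\ldots}{r/2}$ by the simpler $\binom{n-\ell}{r/2}^t$ in each cycle. Across all $m$ cycles the walks consume at most $q \cdot r/2$ fresh vertices in total, so in the regime $\ell, q \leq n^\delta$ with $\delta < 1/2$ this introduces only a $(1 - O_r(q\ell/n))^{q}$ multiplicative correction, which is $1 - o(1)$ and safely absorbed into $\Theta_r(1)$. The only other bookkeeping is to check that distinct triples $(U_\bullet, V_\bullet, \pi)$ really produce distinct trace-walks, which is immediate: the forward portion of each cycle recovers the ordered partition and the fresh-block sequence, and the return portion then recovers the permutation.
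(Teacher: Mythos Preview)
Your construction is essentially the paper's: both decompose the length-$2q$ walk into chunks of length $2t$ with $t=2\ell/r$, each chunk being a closed walk at $S_1$ built from a partition of $S_1$ into $r/2$-blocks together with fresh incoming blocks, and both extract the extra $\sqrt{\ell}$ per step from the freedom in ordering the $L$-steps. The only difference is that you fix the pattern $F^{t}L^{t}$ within each chunk while the paper allows arbitrary interleavings of $F$ and $L$ steps via the multinomial $\binom{2m}{2,\ldots,2}$---a generalization that gains only an $O(1)$ factor per step and lands on the same $\Theta_r(1)\cdot\sqrt{n^{r/2}\ell^{r/2}}\cdot\sqrt{\ell}$ bound.
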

\begin{proof}
	
%
%

To give a lower bound, we observe that each term in the expected trace is non-negative, and therefore it suffices for us to focus on a specific type of walks in the Kikuchi graph (such that it uses each edge twice so that it has expected-value $1$), and then show that the combinatorial count of such walks is large. Consider the walks of the following type,
\begin{enumerate}
	\item It would be useful to consider $m \coloneqq \ell /0.5r = 2\ell/r$ buckets, and  consider splitting the length-$2q$ walk into a walk of $2q/(2m)$ chunks, with each chunk being a walk of length-$2m$.
	\item For each chunk of length-$2m$, this is a walk consist of $m$ many $F$-steps and $m$ many $L$ steps (moreover, we focus on $F$-step leading to "new" vertices only).
	\item For each chunk, there are  $\binom{\ell}{r/2, r/2, r/2 ...., r/2 }$ ways to bucket $r/2$ vertices among $\ell$ vertices, and each bucket will get to walk twice, once via an $F$-step and the other via an $L$-step.
	\item Observe that we have $\binom{2m }{2,2,2\dots,2}$ many orders to pick which bucket "walks" at each step, and regardless of the ordering, we are guaranteed to return to the start at the end of this chunk. In other words, each arbitrary ordering gives a valid walk.
	\item For each $F$-step we have the usual factor of $\binom{n-\ell}{r/2}$, while notice we no longer have the factor of $\binom{\ell}{r/2}$ since the bucket has been determined.
		\end{enumerate}
 Next, we recall the following fact about the center coefficient of multinomials,
 \begin{fact}
 There exists some constant $c_k>0$ such that
 	\[ 
 	\log \binom{kn}{n} \geq  c_k \cdot  kn\log k
 	\]
 	for sufficiently large $n$.
 \end{fact}
  Thus, we have a total count of the length-$2m$ chunk as \[ 
  \binom{\ell}{r/2, r/2, r/2 ...., r/2 } \cdot \binom{2m }{2,2,2\dots,2} =\exp(\ell \log (2\ell/r)) \cdot \exp(2m \log m) \cdot \binom{n-\ell}{r/2}^m
  \]
  Taking the $2m$-th root gives us the cost per step as \[
  \Theta(1) \cdot  2^{\frac{\ell}{2m }  \cdot \log (2\ell/r) } \cdot m \cdot \sqrt{\binom{n-\ell}{r/2}}\,,
   \]
   Recall that $m=2\ell/r$, we have \[ 
   \Theta(1) \cdot  (2\ell/r)^{r/4 } \cdot (2\ell /r) \cdot \sqrt{\binom{n-\ell}{r/2}}\,.
   \]
   Recall that our upper bound reads as (per step) \[ 
   O_r(1) \cdot  \sqrt{n^{r/2} \cdot \ell^{r/2+1}} \,,
   \]
   this is tight up to the hidden constant in $O_r(1)$.
\end{proof}
Next, analogously to lower bounding spectral norm of a Wigner matrix,  to deduce the concentration of $\|M_\ell\|$ within $\Theta_r(1) \cdot  \sqrt{n^{r/2} \cdot \ell^{r/2}} \cdot \sqrt{\ell}$, we use the usual relation between the spectral norm and trace to lower bound $ \E[ \|M_\ell\|^{2q}]$. For simplicity, we focus on the setting when $G$ is a random symmetric $\{\pm 1\}$ tensor, while note that the analogous bound can be extended to Gaussian via more complicated usage of Gaussian concentration inequality.  Crucially , since $\|M_\ell\|_{sp}$ is sub-gaussian by \cref{prop:variance-bound} and \cref{thm:talagrand}, we can further deduce
\begin{corollary}
	For $G$ an random symmetric tensor of Bernoulli input,
   \[ 
   \E[||M_\ell(G)\|_{sp}] \geq \Theta_r(1) \cdot  \sqrt{n^{r/2}\cdot \ell^{r/2}}\cdot \sqrt{\ell}\,,
   \]
   and moreover, with probability at least $1-o_n(1)$, \[ 
   \|M_\ell(G)\|_{sp} \geq \Theta_r(1) \cdot  \sqrt{n^{r/2}\cdot \ell^{r/2}}\cdot \sqrt{\ell}\,.
   \]
\end{corollary}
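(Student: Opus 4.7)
The plan is to convert the trace lower bound just established into a lower bound on $\E[\|M_\ell\|_{sp}]$ and then into the claimed high-probability pointwise statement, relying on sub-gaussian concentration of $\|M_\ell\|_{sp}$ via \cref{prop:variance-bound} and \cref{thm:talagrand}.

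First, I would apply the elementary inequality $\|M_\ell\|_{sp}^{2q} \geq \Tr(M_\ell^{2q})/N$ where $N = \binom{n}{\ell}$ is the matrix dimension. Taking expectations and substituting the previous lemma, the matrix-dimension factor is absorbed once we take $(2q)$-th roots, yielding
\[
\bigl(\E[\|M_\ell\|_{sp}^{2q}]\bigr)^{1/(2q)} \;\geq\; B \;\coloneqq\; \Theta_r(1) \cdot \sqrt{n^{r/2}\cdot \ell^{r/2}}\cdot \sqrt{\ell},
\]
provided $q$ is chosen so that $N^{1/(2q)} = O(1)$; taking $q$ on the order of $\ell \log n$ (or any $q = n^{\delta}$ with $\delta$ matching the regime in the upper bound) suffices.

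Next, I would invoke \cref{prop:variance-bound} together with \cref{thm:talagrand} to assert that $\|M_\ell\|_{sp}$ is sub-gaussian around its mean with some variance proxy $\sigma^2$. For a non-negative sub-gaussian random variable $X$ with mean $\mu$ and sub-gaussian parameter $\sigma$, Minkowski plus the standard moment estimate for sub-gaussian deviations gives $(\E[X^{2q}])^{1/(2q)} \leq \mu + C\sigma\sqrt{q}$ for an absolute constant $C$. Choosing $q$ in the above regime so that additionally $C\sigma\sqrt{q} \leq B/2$ forces $\E[\|M_\ell\|_{sp}] \geq B/2$, which, after absorbing the $1/2$ into the $\Theta_r(1)$ constant, proves the first claim. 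For the high-probability statement, I would apply the sub-gaussian tail bound once more: $\Pr[\|M_\ell\|_{sp} < \E[\|M_\ell\|_{sp}] - t\sigma] \leq 2\exp(-t^2/2)$. Any slowly growing $t = \omega(1)$ with $t\sigma = o(B)$ (e.g. $t = \log\log n$) then yields $\|M_\ell\|_{sp} \geq B/4$ with probability $1-o_n(1)$, giving the second claim after adjusting the absolute constant.

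The main obstacle is ensuring the variance proxy $\sigma$ is genuinely of smaller order than $B$, since only then does the moment-to-mean conversion in the second step lose no information. Concretely, flipping a single Rademacher entry $G_e$ perturbs $M_\ell$ only at the $\binom{r}{r/2}\binom{n-r}{\ell - r/2}$ index pairs $(I,J)$ with $I \Delta J = e$, each by $\pm 2$, so $\|M_\ell\|_{sp}$ as a function on $\pmo^{\binom{n}{r}}$ is Lipschitz in exactly the way Talagrand-type convex-Lipschitz concentration requires; this is the content expected of \cref{prop:variance-bound}. Everything else reduces to routine sub-gaussian moment and tail calculus.
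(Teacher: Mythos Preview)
Your proposal is correct and follows essentially the same route as the paper: both start from the trace lower bound, invoke sub-gaussianity of $\|M_\ell\|_{sp}$ via \cref{prop:variance-bound} and \cref{thm:talagrand}, and then perform the standard high-moment-to-mean conversion followed by a sub-gaussian lower-tail bound. Your write-up is in fact more explicit than the paper's, which simply defers the moment-to-mean step to the appendix and notes the analogy with Lipschitz concentration for matrices with i.i.d.\ entries; one minor simplification is that the division by $N$ in $\|M_\ell\|_{sp}^{2q}\geq \Tr(M_\ell^{2q})/N$ cancels exactly against the matrix-dimension factor in the trace lemma, so no constraint of the form $N^{1/(2q)}=O(1)$ is actually needed.
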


We defer the calculation for high-moment-to-mean to the appendix while noting that it is analogous to the Lipschitz concentration of random matrix of i.i.d. entries.

\section*{Acknowledgements}
The authors thank Afonso Bandeira and Petar Nizic-Nikolac for multiple illuminating discussions that posed and discussed this problem and for sharing an early version of their manuscript~\cite{bandeira_nizic_kikuchi_pc} with us. We would also like to thank Tim Hsieh for the insightful discussions, and the anonymous reviewers from SODA'26 for various helpful comments and suggestions.

\clearpage\newpage
\bibliographystyle{alpha}
\bibliography{bib}
\clearpage\newpage
\section*{Appendix}
\subsection{Deferred Proofs}
\begin{claim}[Detection from Spectarl Norm Bound: even-$r$] For even $r>3$,
		for a given upper bound $B(G)$ on a random tensor $G$, there is a spectral algorithm that solves the detection question for \[\lambda \geq \Theta(1) \cdot \frac{B(G)}{\binom{n-\ell}{r/2} \cdot \binom{\ell}{r/2}  }  = \Theta_r(1) \cdot  \frac{B(G)}{n^{r/2} \cdot \ell^{r/2}}  \,.\]
\end{claim}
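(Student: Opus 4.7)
The plan is to give the standard spectral-thresholding distinguisher: compute $\|M_\ell(T)\|$ on the input tensor $T$, and output ``planted'' if and only if this norm exceeds an appropriately chosen threshold of order $B(G)$. The correctness reduces to two ingredients: (i) a norm bound of $B(G)$ on the noise-side Kikuchi matrix $M_\ell(G)$, which is given by hypothesis, and (ii) a matching lower bound on $\|M_\ell(v^{\otimes r})\|$ for the signal tensor $v^{\otimes r}$, since $M_\ell$ is linear in its tensor argument and so $M_\ell(T) = M_\ell(G) + \lambda \cdot M_\ell(v^{\otimes r})$.

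The key computation is to exhibit an explicit eigenvector for $M_\ell(v^{\otimes r})$. Define $w \in \R^{\binom{[n]}{\ell}}$ by $w_I = \prod_{i \in I} v_i$. I would compute
\[
\bigl(M_\ell(v^{\otimes r}) w\bigr)_I \;=\; \sum_{J : |I\Delta J|=r} \Bigl(\prod_{k \in I\Delta J} v_k\Bigr) \Bigl(\prod_{j \in J} v_j\Bigr),
\]
and then use $v_j^2=1$ together with the identity $(I\Delta J) \cup J = I \cup (J\setminus I)$ (as a multiset) to collapse each summand to $\prod_{i \in I} v_i = w_I$. Counting the neighbors of $I$ in the Kikuchi graph gives exactly $\binom{\ell}{r/2}\binom{n-\ell}{r/2}$ such $J$, so $w$ is an exact eigenvector with eigenvalue $\binom{\ell}{r/2}\binom{n-\ell}{r/2}$. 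In particular, $\|M_\ell(v^{\otimes r})\|_{sp} \geq \binom{\ell}{r/2}\binom{n-\ell}{r/2}$, which for $\ell \ll n$ and constant $r$ is $\Theta_r(n^{r/2}\ell^{r/2})$.

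Combining the two bounds via triangle inequality, under the null ($\lambda=0$) one has $\|M_\ell(T)\|_{sp} \leq B(G)$ with high probability, while under the planted model
\[
\|M_\ell(T)\|_{sp} \;\geq\; \lambda \cdot \binom{\ell}{r/2}\binom{n-\ell}{r/2} \;-\; B(G).
\]
Setting the threshold at, say, $\tfrac{3}{2} B(G)$ makes the two cases separable as soon as $\lambda \cdot \binom{\ell}{r/2}\binom{n-\ell}{r/2} \geq \tfrac{5}{2} B(G)$, i.e.\ $\lambda \geq \Theta(1) \cdot B(G) \big/ \bigl(\binom{n-\ell}{r/2}\binom{\ell}{r/2}\bigr) = \Theta_r(1) \cdot B(G)/(n^{r/2}\ell^{r/2})$. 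Running time is dominated by (approximately) computing the top eigenvalue of the $\binom{n}{\ell} \times \binom{n}{\ell}$ matrix $M_\ell(T)$, which is $n^{O(\ell)}$.

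There is no real obstacle here: the main technical content of the paper is the noise bound $B(G)$ itself, while this claim is a short and essentially routine reduction. The only step requiring any care is the eigenvalue computation for $M_\ell(v^{\otimes r})$, where one must check the symmetric-difference algebra carefully; everything else is triangle inequality and choice of threshold.
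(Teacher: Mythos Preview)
Your proposal is correct and follows essentially the same approach as the paper: define the test vector $w_I = \prod_{i\in I} v_i$, compute the signal contribution $\lambda \cdot \binom{\ell}{r/2}\binom{n-\ell}{r/2}$, and separate from the noise bound $B(G)$ via the triangle inequality. The only cosmetic difference is that you show $w$ is an exact eigenvector of $M_\ell(v^{\otimes r})$, whereas the paper simply lower-bounds the spectral norm by the Rayleigh quotient $\tilde v^T M_\ell(\lambda v^{\otimes r})\tilde v / \|\tilde v\|_2^2$; both yield the same value.
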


\begin{proof}
	Consider the tensor with planted spike $\widetilde{T} = \lambda \cdot v^{\otimes r} + G$, we have \[ 
	\|M_\ell(\widetilde{T}) \|_{sp} \geq  \|M_{\ell}( \lambda \cdot v^{\otimes r}  )\|_{sp} - \|M_{\ell}(G)\|_{sp}  \geq  \|M_{\ell}( \lambda \cdot v^{\otimes r}  )\| - B(G)\,.
	\]
	Moreover, consider the vector $\tilde{v}\in \R^{\binom{n}{\ell}}$ with entries $\tilde{v}[S]  \coloneqq v_S = \prod_{i\in S} v[i] $. By Spectral Theorem, for even $r$, we have \begin{align*}
		\|M_{\ell}( \lambda \cdot v^{\otimes r}  )\|_{sp}  \geq \frac{1}{\|\tilde{v}\|_2^2} \cdot   \tilde{v}^T \cdot M_\ell( \lambda \cdot v^{\otimes r}  )  \cdot \tilde{v}^T  &= \frac{1}{\|\tilde{v}\|_2^2} \cdot \sum_{\substack{I, J \in \binom{n}{\ell} \\ |I\Delta J | = r }} (\lambda\cdot  v^{\otimes r}_{I\Delta J}) \cdot v_I \cdot v_J \\
		&= \frac{1}{\|\tilde{v}\|_2^2} \cdot \sum_{\substack{I, J \in \binom{n}{\ell} \\ |I\Delta J | = r }} \lambda\cdot v_{I\Delta J}^2 \cdot v_{I\cap J}^2\\
		&= \lambda \cdot \frac{\binom{n}{\ell} \cdot \binom{n-\ell}{r/2}\cdot  \binom{\ell}{r/2} }{\binom{n}{\ell}}\\
		&= \lambda \cdot  \binom{n-\ell}{r/2}\cdot  \binom{\ell}{r/2}  \,,
		  	\end{align*}
		  	where the first equality follows from the expansion of quadratic form and $M_\ell(\lambda \cdot v^{\otimes r}) [I,J] = \lambda\cdot v^{\otimes r}_{I\Delta J} $, and the second-to-last equality from $v_i^2 = 1$ for any $i\in[n]$ since $v$ is a boolean vector and observing that for each fixed $I\subseteq [n]$, there are $\binom{n-\ell}{r/2}\cdot \binom{\ell}{r/2}$ many adjacent indices $J \in \binom{n}{\ell} $ such that $|I\Delta J| = r$.

	
	On the other hand, for an (unplanted) random symmetric tensor $T=G$, we have $\|M_\ell(T)\|_{sp} \leq B(G)$ by assumption. Thus, we have a separation as long as \[ 
	\|M_\ell(\widetilde{T})\|_{sp} > B(G) \geq \|M_{\ell}(T)\|_{sp} \,,
	\]
	which happens as long as $\lambda \cdot  \binom{n-\ell}{r/2}\cdot  \binom{\ell}{r/2} > 2 \cdot B(G) $. Rearranging then gives us the desired.
\end{proof}
Analogously, we have the following for odd-$r$. 
\begin{claim}
    For odd $r\geq 3$,
		for a given upper bound $B(G)$ on a random tensor $G$, there is a spectral algorithm that solves the detection question for \[\lambda \geq \ \Theta_r(1) \cdot  \left( \frac{B(G)}{n^{r+1} \cdot \ell^{r}}  \right)^{1/2} \,.\]
\end{claim}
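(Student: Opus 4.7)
The plan is to mirror the spectral-gap argument used in the even-$r$ claim: produce the test vector $\widetilde v \in \{\pm 1\}^{\binom{[n]}{\ell}}$ given by $\widetilde v[S] := \prod_{i \in S} v[i]$, and lower bound $\|M_\ell(\widetilde T)\|_{sp}$ by the Rayleigh quotient $\widetilde v^{\top} M_\ell(\widetilde T)\, \widetilde v / \|\widetilde v\|_2^2$. The new technical twist compared to the even case is that, for odd $r$, each Kikuchi entry is a \emph{degree-$2$} polynomial in the input tensor, so the clean triangle-inequality subtraction $\|M_\ell(G+\lambda v^{\otimes r})\|_{sp} \geq \|M_\ell(\lambda v^{\otimes r})\|_{sp} - \|M_\ell(G)\|_{sp}$ used in the even-$r$ proof is no longer available. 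Instead, under the decomposition $M_\ell(G + \lambda v^{\otimes r}) = M_\ell(G) + 2\lambda\, M_{\mathrm{cross}}(G, v^{\otimes r}) + \lambda^2 M_\ell(v^{\otimes r})$ (with $M_{\mathrm{cross}}$ the associated polar bilinear form), I would bound the three contributions to the Rayleigh quotient separately.

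The core computation is the \emph{signal} $\lambda^2\, \widetilde v^{\top} M_\ell(v^{\otimes r})\, \widetilde v / \|\widetilde v\|_2^2$. For each valid tuple $(I,J,t,S_1,S_2,T_1,T_2)$ appearing in an entry of $M_\ell$, the two hyperedges satisfy $e_1 \Delta e_2 = I \Delta J$, so using $v_i^2 = 1$ throughout, the factor $\widetilde v_I\, \widetilde v_J \cdot (v^{\otimes r})_{e_1}(v^{\otimes r})_{e_2}$ collapses to $(v_{I\Delta J})^2 = +1$ on every tuple. The Rayleigh quotient therefore counts valid tuples per row: for each $I$ there are $\binom{\ell}{r-1}\binom{n-\ell}{r-1}$ choices of $J$ with $|I \Delta J| = 2(r-1)$, each entry contains $\Theta_r(n)$ terms from the sum over $t$ (and $O_r(1)$ partition choices), producing a signal of the advertised order after dividing by $\|\widetilde v\|_2^2 = \binom{n}{\ell}$; as a useful byproduct, $\widetilde v$ turns out to be an exact eigenvector of $M_\ell(v^{\otimes r})$ with this eigenvalue, so the bound is not lossy. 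The \emph{noise} term is then immediate: $|\widetilde v^{\top} M_\ell(G)\, \widetilde v| / \|\widetilde v\|_2^2 \leq \|M_\ell(G)\|_{sp} \leq B(G)$ with probability $1 - o_n(1)$ by the odd-$r$ norm bound already established.

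The main obstacle is the \emph{cross term} $\lambda\, \widetilde v^{\top} M_{\mathrm{cross}}(G, v^{\otimes r})\, \widetilde v$, which has no analog in the even-$r$ case. Since $M_{\mathrm{cross}}$ is linear in $G$, this scalar is a mean-zero Gaussian, and its variance can be computed by the same combinatorial bookkeeping as the signal: expanding, the coefficient of $G_e$ is $c_e = v_e \cdot N_e$ where $N_e$ is an integer count of the ways to extend $e$ to a compatible completion $(I, J, e_2, t, S_1, T_1, S_2, T_2)$. The sign factors $v_e \in \{\pm 1\}$ drop out of $c_e^2$, so $\sum_e c_e^2$ is again a purely combinatorial quantity; the key estimate to verify is that this variance is bounded by (signal) $\times$ $B(G)$, so that by AM-GM the cross contribution is strictly subdominant to $\tfrac{1}{2}\lambda^2 \cdot \mathrm{signal} + 2 B(G)$ with high probability under Gaussian concentration. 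Once this is in place, combining the three contributions shows $\|M_\ell(\widetilde T)\|_{sp} > B(G) \geq \|M_\ell(G)\|_{sp}$ whenever the signal dominates twice the noise, which after rearrangement yields the claimed threshold for $\lambda$ via a simple spectral test. The bulk of the work, and where I would expect the delicate bookkeeping to live, is in carrying out the combinatorial count for the cross-term variance in a way that genuinely matches the signal/noise product (rather than losing an extra factor of $\ell$ or $n$).
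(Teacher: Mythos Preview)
Your high-level approach coincides with the paper's: lower-bound $\|M_\ell(\widetilde T)\|_{sp}$ by the Rayleigh quotient against the character vector $\widetilde v$, and verify that the pure-signal contribution $\lambda^2\, \widetilde v^{\top} M_\ell(v^{\otimes r})\, \widetilde v / \|\widetilde v\|_2^2$ equals $\Theta_r(\lambda^2)$ times the stated combinatorial count. The paper's odd-$r$ proof is in fact just this one computation: it says the argument is ``essentially verbatim'' from the even case and writes out only the signal calculation, so on the step the paper actually carries out, you and the paper agree (your count $\binom{\ell}{r-1}\binom{n-\ell}{r-1}\cdot \Theta_r(n)$ is the right one).

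Where you go further is the cross term. You are correct that for odd $r$ the map $G \mapsto M_\ell(G)$ is \emph{quadratic}, so $M_\ell(G+\lambda v^{\otimes r})$ does not split as $M_\ell(G)+M_\ell(\lambda v^{\otimes r})$, and the triangle-inequality step from the even-$r$ proof is unavailable as written. The paper's ``verbatim'' simply glosses over this. Your plan---treat the cross Rayleigh quotient as a mean-zero Gaussian in $G$, compute its variance as $\sum_e N_e^2$ (your observation that $c_e = v_e N_e$ with the sign dropping out is correct and makes this purely combinatorial), and absorb it into $\tfrac12\lambda^2\cdot\text{signal} + 2B(G)$ via AM--GM---is a sound way to close the gap, and is strictly more than the paper provides. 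So your proposal is the paper's sketch together with a genuine repair of a step the paper does not address.
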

\begin{proof}
    The proof is essentially verbatim, except one verify that for the odd Kikuchi matrix,    we have \begin{align*}
         \|M_{\ell}( \lambda \cdot v^{\otimes r}  )\|_{sp} & \geq \frac{1}{\|\tilde{v}\|_2^2} \cdot   \tilde{v}^T \cdot M_\ell( \lambda \cdot v^{\otimes r}  )  \cdot \tilde{v}^T  \\&= \frac{1}{\|\tilde{v}\|_2^2} \cdot \sum_{\substack{I\in \binom{n}{\ell}  }}  \sum_{\substack{ A,B\subseteq I \\ |A| =|B|=\lfloor r/2 \rfloor \\ A\cap B = \emptyset } } \sum_{t\in [n]  } \sum_{\substack{ S, T \in \binom{n}{\lfloor r/2 \rfloor } \\S\cap T = \emptyset \\ S\cap I =\emptyset\\ T\cap I =\emptyset \\ J = I\setminus (A\cup B) \cup S\cup T }  } v_{A\cup S\cup \{t\}}  \cdot v_{B\cup T\cup \{t\}}  \cdot \cdot v_I \cdot v_{J} \cdot \lambda^2 \\
         &= \frac{1}{\|\tilde{v}\|_2^2} \cdot \sum_{\substack{I\in \binom{n}{\ell}  }}  \sum_{\substack{ A,B\subseteq I \\ |A| =|B|=\lfloor r/2 \rfloor \\ A\cap B = \emptyset } } \sum_{t\in [n]  } \sum_{\substack{ S, T \in \binom{n}{\lfloor r/2 \rfloor } \\S\cap T = \emptyset \\ S\cap I =\emptyset\\ T\cap I =\emptyset \\ J = I\setminus (A\cup B) \cup S\cup T }  } v_I^2 \cdot  v_J^2 \cdot  v_t^2 \cdot \lambda^2\\
         &= \Theta_r(1) \cdot  \lambda^2 \cdot  n^{r+1} \cdot \ell^r \,.
     \end{align*}       
\end{proof}
The above two proofs also immediately allow us to prove our main theorem for algorithmic application for smooth trade-off of Tensor PCA,
\begin{proof}[Proof to~\cref{thm:tensor-pca-application}]
    This follows by plugging in our established bound from the main lemma for norm bounds \cref{thm:main-thm-even}, \cref{thm:main-thm-odd} into the above bound that translates a high-probability-bound on the Kikuchi matrix to the detection threshold $\lambda$.
\end{proof}

\subsection{Extension to Recovery}
\begin{proof}[Proof Sketch to \cref{thm:recovery} ]
    Our extension result to the recovery question follows verbatim from that in A.3 of \cite{WAM19} except that we apply the improved spectral norm bound for the noise matrix in the analysis of the voting matrix.  We highlight the change from our spectral norm bound, that is
   \begin{lemma}[Lemma A.5+ A.6 of \cite{WAM19}]  
   	 For any $\eps>0$, and $\delta\in (0,1)$, for $\lambda > \epsilon^{-1} \cdot c \cdot \sqrt{ \frac{C_r}{d_\ell } } $, then with probability at least $1-\delta$, \[ 
   	 \|v^\perp\|_2^2 \leq \frac{\|M_\ell\|}{\lambda \cdot \binom{n}{r/2} \cdot \binom{n}{\ell/2} } \cdot \frac{\ell}{m}
   	 \]
   	 \end{lemma}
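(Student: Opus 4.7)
The plan is to follow the structure of the WAM19 recovery analysis (their Lemmas A.5 and A.6 in Appendix A.3) essentially verbatim, with the only substantive change being that our sharp spectral norm bound from \cref{thm:main-thm-even} and \cref{thm:main-thm-odd} is plugged in wherever the noise matrix norm enters. Concretely, write $M_\ell(T) = \lambda \cdot M_\ell(v^{\otimes r}) + M_\ell(G)$. As already observed in the detection claim, the signal part has the lifted vector $\tilde v \in \R^{\binom{n}{\ell}}$ defined by $\tilde v_S = \prod_{i \in S} v_i$ as a top eigenvector, with eigenvalue on the order of $\lambda \cdot \binom{n-\ell}{r/2}\binom{\ell}{r/2} = \lambda \cdot d_\ell$ (up to $r$-dependent constants).

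The first step is a Davis--Kahan type perturbation bound applied to $M_\ell(T)$: if $w$ denotes the top eigenvector of $M_\ell(T)$, then the component of $w$ orthogonal to the lifted direction $\tilde v/\|\tilde v\|$ has squared norm at most $\|M_\ell(G)\|^2 / (\lambda d_\ell)^2$ up to an absolute constant, provided the signal eigenvalue dominates the noise norm. This gap condition is precisely what the hypothesis $\lambda > \epsilon^{-1} c \sqrt{C_r / d_\ell}$ guarantees once we substitute our norm bound $\|M_\ell(G)\| \leq C_r \cdot \bigl(\sqrt{n\ell}\bigr)^{r/2}\sqrt{\ell}$ from \cref{thm:main-thm-even} (and the odd-$r$ analogue from \cref{thm:main-thm-odd}); the square root arises because the bound on $\|v^\perp\|_2^2$ is a squared quantity whereas the eigenvector-angle bound is linear in $\|M_\ell(G)\|/\lambda$.

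The second step is to run the WAM19 voting / partial-trace construction: one forms an $n \times n$ matrix from $w$ by summing products $w_S w_T$ over pairs $(S, T)$ differing in one coordinate, and takes its top eigenvector as the estimate $\hat v$. This converts ``mass off the lifted direction'' in $\R^{\binom{n}{\ell}}$ into correlation loss $\|v^\perp\|_2^2$ in $\R^n$, contributing the combinatorial overhead factor $\ell / m$ that appears in the statement. Chaining the perturbation bound with this combinatorial reduction yields the stated inequality.

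Since both the Davis--Kahan step and the voting-matrix analysis are generic and do not depend on any fine structure of $G$ beyond a high-probability spectral norm bound, no new technical ingredient is required beyond our main norm theorems; the only ``obstacle'' is bookkeeping the constants so that the threshold $\lambda > \epsilon^{-1} c \sqrt{C_r / d_\ell}$ matches the ratio between our noise bound and the signal eigenvalue $\lambda d_\ell$. That bookkeeping is essentially the same calculation carried out in WAM19, and the failure probability $\delta$ is absorbed into the high-probability event from our main theorems together with a standard normalization of $\|\tilde v\|_2$.
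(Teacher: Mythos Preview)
Your proposal is correct and takes essentially the same approach as the paper: both defer the argument entirely to Appendix~A.3 of \cite{WAM19} (Davis--Kahan on the lifted Kikuchi matrix followed by the voting/partial-trace reduction), with the only modification being that the improved spectral norm bound from \cref{thm:main-thm-even}/\cref{thm:main-thm-odd} replaces the cruder noise-matrix estimate wherever it appears. Your write-up is in fact more explicit than the paper's, which simply states that the argument ``follows verbatim'' from \cite{WAM19} once the new norm bound is substituted.
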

    The subsequent analysis follows from the remaining of section A from \cite{WAM19}.
\end{proof}

\subsection{Deferred Proofs for Lower Bound}
We recall the following bounded difference inequalities,
   \begin{fact}
       For any function $f(X_1,...,X_n)$ of independent random variables $\{X_1,...,X_n\}$, we have \[ 
       \Var[f(X_1,...,X_n)] \leq \E[\sum_{i}^n (D_i^- f(X_1,...,X_n)^2 ]
       \]
       for \[ 
       D_i^{-} f(X) \coloneqq f(X_1,...,X_n) - \inf_z (X_1,..,X_{i-1},z, X_{i+1},...,X_n)\,.
       \]
   \end{fact}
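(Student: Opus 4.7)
The plan is to derive this statement, which is the Efron--Stein--Steele variance inequality in the ``lower infimum'' form, via a two-step reduction: first control $\Var[f]$ by a sum of conditional variances, then upper bound each conditional variance by a squared deviation from the per-coordinate infimum.

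First, I would invoke the Efron--Stein inequality in the standard form $\Var[f(X)] \leq \sum_{i=1}^n \E\bigl[\Var_i[f(X)]\bigr]$, where $\Var_i[\,\cdot\,]$ denotes the variance taken only over $X_i$ with the remaining coordinates fixed. This is proven by the Doob martingale decomposition: set $\mathcal{F}_i = \sigma(X_1,\ldots,X_i)$ and $\Delta_i = \E[f \mid \mathcal{F}_i] - \E[f \mid \mathcal{F}_{i-1}]$, so that $f - \E[f] = \sum_{i=1}^n \Delta_i$. The $\Delta_i$ are pairwise orthogonal in $L^2$ as martingale differences, giving $\Var[f] = \sum_i \E[\Delta_i^2]$, and a conditional Jensen step using independence of $X_i$ from $\mathcal{F}_{i-1}$ yields $\E[\Delta_i^2] \leq \E[\Var_i[f]]$.

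Second, for any fixed realization of $X_{-i} := (X_1,\ldots,X_{i-1},X_{i+1},\ldots,X_n)$, the variance of $f$ conditional on $X_{-i}$ is, by the usual characterization of variance as the minimum mean-squared error, bounded by $\E_i\bigl[(f(X) - c)^2\bigr]$ for \emph{any} constant $c = c(X_{-i})$. Taking the particular choice
\[
c(X_{-i}) \;=\; \inf_{z}\, f(X_1,\ldots,X_{i-1},z,X_{i+1},\ldots,X_n)
\]
makes the integrand equal to $(D_i^- f(X))^2$, and we obtain $\Var_i[f(X)] \leq \E_i\bigl[(D_i^- f(X))^2\bigr]$. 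Combining with step one and taking outer expectations yields the stated inequality.

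The only mild technical wrinkle is measurability of the pointwise infimum $c(X_{-i})$, which could be an issue for pathological $f$ on uncountable alphabets. In the applications of interest here --- symmetric tensors with Rademacher or standard Gaussian entries, where $f = \|M_\ell(G)\|_{sp}$ is continuous in each coordinate --- this is routine (discrete alphabet, or continuous $f$ on $\R$ where one may approximate the infimum by a countable sequence). I expect no genuine obstacle beyond setting up the martingale decomposition carefully.
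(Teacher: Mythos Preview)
Your proof is correct and follows the standard derivation of the Efron--Stein--Steele inequality. However, the paper does not actually prove this statement: it is presented as a recalled ``Fact'' (a standard bounded-difference inequality, implicitly from the same reference as the subsequent Talagrand-type theorem, \cite{vanHandel2016}), with no proof given. So there is nothing to compare against; your argument simply supplies the omitted justification.
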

   
   \begin{theorem}[Theorem 4.20 in \cite{vanHandel2016})]\label{thm:talagrand}  In the above set-up, $f(X_1,...,X_n)$ is $\|\sum_i (D_i^- f)^2\|_\infty $ -subgaussian.    
   \end{theorem}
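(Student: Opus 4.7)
The plan is to use the entropy method (Herbst argument) to control the moment generating function of $f - \E f$. Writing $\psi(\lambda) := \log \E[\exp(\lambda (f - \E f))]$ and $\sigma^2 := \|\sum_i (D_i^- f)^2\|_\infty$, the goal reduces to showing $\psi(\lambda) \le \lambda^2 \sigma^2 / 2$ for all $\lambda \in \R$, which is exactly the sub-Gaussian MGF bound; the tail inequality $\Pr[|f - \E f| > t] \le 2\exp(-t^2/(2\sigma^2))$ then follows by the standard Chernoff optimization. I would give both signs of $\lambda$ separately, since the one-sided quantity $D_i^-$ is not symmetric under reflection $f \mapsto -f$ (but the definition of $\sigma^2$ already absorbs the worst case).

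The first step is tensorization of entropy for product measures: for $Z \ge 0$, $\operatorname{Ent}(Z) \le \sum_i \E[\operatorname{Ent}_i(Z)]$, where $\operatorname{Ent}_i$ denotes conditional entropy with all coordinates but $X_i$ fixed. Applied to $Z = \exp(\lambda f)$, this reduces the task to a one-dimensional inequality for each coordinate.

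The second step is a one-variable modified log-Sobolev inequality: freezing $X_{-i}$ and viewing $f$ as a function of $X_i$ alone, I would prove
\[
\operatorname{Ent}_i(\exp(\lambda f)) \;\le\; \tfrac{\lambda^2}{2} \, \E_i\!\left[(D_i^- f)^2 \exp(\lambda f)\right].
\]
This can be derived by comparing $\exp(\lambda f)$ against $\exp(\lambda \underline f_i)$ where $\underline f_i := \inf_z f(X_1,\ldots,X_{i-1},z,X_{i+1},\ldots,X_n)$, and invoking the elementary estimate $\operatorname{Ent}(e^Y) \le \tfrac{1}{2}\E[(Y - \underline Y)^2 e^Y]$ valid for a random variable $Y$ with essential infimum $\underline Y$ (itself a consequence of the variational formula $\operatorname{Ent}(Z) = \sup\{\E[UZ] : \E[e^U] \le 1\}$ combined with a Taylor expansion of $x \mapsto x\log x$ around $e^{\underline Y}$). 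This is precisely where the one-sided difference $D_i^-$ enters, in place of a symmetric range $\sup_{z,z'}|f(z) - f(z')|$ that would be needed for a McDiarmid-style bound.

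Summing the per-coordinate bound across $i$ and using $\sum_i (D_i^- f)^2 \le \sigma^2$ pointwise yields $\operatorname{Ent}(e^{\lambda f}) \le \tfrac{\lambda^2 \sigma^2}{2}\,\E[e^{\lambda f}]$. The Herbst argument then converts this into the claimed MGF estimate: rewriting $\operatorname{Ent}(e^{\lambda f}) = \lambda \E[f e^{\lambda f}] - \E[e^{\lambda f}]\log\E[e^{\lambda f}] = \E[e^{\lambda f}]\bigl(\lambda \psi'(\lambda) - \psi(\lambda)\bigr)$, the entropy bound becomes the differential inequality $\frac{d}{d\lambda}\bigl(\psi(\lambda)/\lambda\bigr) \le \sigma^2/2$. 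Integrating from $0$ with $\lim_{\lambda \to 0}\psi(\lambda)/\lambda = \E[f-\E f] = 0$ gives $\psi(\lambda) \le \lambda^2 \sigma^2/2$, which closes the argument after Chernoff. The main obstacle I expect is the one-variable modified log-Sobolev inequality in the second step: getting the asymmetric form with $D_i^-$ and the sharp constant $\tfrac{1}{2}$ requires care (one may alternatively invoke Bobkov--Ledoux or the framework in Boucheron--Lugosi--Massart). Every other step is routine manipulation of log-MGFs.
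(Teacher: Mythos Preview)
The paper does not prove this statement at all: it is quoted verbatim as Theorem~4.20 from van~Handel's lecture notes and used as a black box to deduce concentration of $\|M_\ell\|_{sp}$ around its mean. So there is no ``paper's own proof'' to compare against.

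That said, your outline is the standard entropy-method proof and is essentially how the result is established in the cited reference (tensorization of entropy, a one-variable modified log-Sobolev inequality producing the one-sided difference $D_i^-$, and the Herbst differential inequality). One point to be careful about: the inequality $\operatorname{Ent}_i(e^{\lambda f}) \le \tfrac{\lambda^2}{2}\,\E_i[(D_i^- f)^2 e^{\lambda f}]$ is only valid for $\lambda \ge 0$, since it relies on $\phi(-x) \le x^2/2$ for $x \ge 0$ where $\phi(u)=e^u-u-1$. For $\lambda < 0$ the natural argument uses $D_i^+ f$ instead, and $\|\sum_i (D_i^- f)^2\|_\infty$ is not in general equal to $\|\sum_i (D_i^+ f)^2\|_\infty$, so your remark that ``the definition of $\sigma^2$ already absorbs the worst case'' is not quite right. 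In the paper's application this does not matter (only the upper tail of $\|M_\ell\|_{sp}$ is needed for the lower-bound corollary), but if you want the full two-sided sub-Gaussian statement you should either control both $D_i^{\pm}$ or invoke the specific symmetry of the problem at hand.
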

From this point, it suffices for us to verify the variance bound of $1$ for the Kikuchi matrix in our setting and apply the theorem. 
   
   \begin{proposition} \label{prop:variance-bound}
       \[ 
        \Var[f(G_1,...,G_{\binom{n}{r} } )] \leq O(1)\,.  
       \]
   \end{proposition}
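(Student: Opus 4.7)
The plan is to realize $f(G) := \|M_\ell(G)\|_{sp}$ as a convex Lipschitz function of the $\pm 1$ Bernoulli vector $G = (G_S)_{S \in \binom{[n]}{r}}$ and invoke the Efron--Stein / bounded-differences framework, together with a sharp structural bound on the spectral influence of each single coordinate.

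First, I will use the linearity $M_\ell(G) = \sum_{S} G_S A_S$, where $A_S[I,J] := \mathbf{1}[I \Delta J = S]$, to reduce the per-coordinate difference to an operator norm of $A_S$. Flipping a single coordinate $G_S$ alters $M_\ell$ by $\pm 2 A_S$, so by Weyl's inequality $|f(G^{(+S)}) - f(G^{(-S)})| \leq 2\|A_S\|_{sp}$, which reduces the Efron--Stein contribution of coordinate $S$ to bounding $\|A_S\|_{sp}$.

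Second, I will establish the structural lemma $\|A_S\|_{sp} = 1$ for every $S \in \binom{[n]}{r}$ by a short combinatorial argument. If $A_S[I,J] = 1$ with $|I|=|J|=\ell$, then $I \Delta J = S$ forces $|I \cap S| = |J \cap S| = r/2$, and the unique $J$ meeting this condition is $J = I \Delta S$. Since $I \mapsto I \Delta S$ is an involution on $\{I \in \binom{[n]}{\ell} : |I \cap S| = r/2\}$, the nonzero block of $A_S$ is a permutation matrix, and hence has unit operator norm. In particular each one-coordinate increment obeys $(D_S^- f)^2 \leq 4$ deterministically.

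Third, to pass from this per-coordinate bound to the dimension-free $O(1)$ variance claim, I will invoke Talagrand's convex-Lipschitz concentration inequality, exploiting that $f$ is a convex function of the Bernoulli coordinates. The induced sub-Gaussian parameter is governed by the Euclidean Lipschitz constant $L$ of $G \mapsto \|M_\ell(G)\|_{sp}$, which in turn is controlled by the identity
\[
\sum_{S \in \binom{[n]}{r}} A_S^2 \;=\; \binom{\ell}{r/2}\binom{n-\ell}{r/2}\cdot I,
\]
obtained by counting, for each $I \in \binom{[n]}{\ell}$, the number of $S \in \binom{[n]}{r}$ with $|I \cap S| = r/2$. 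Combining the three ingredients yields the claimed variance bound in direct analogy with the classical Lipschitz-concentration argument for the spectral norm of a Wigner matrix.

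The principal obstacle is the last step: converting the per-coordinate operator-norm increment bound into a dimension-free variance estimate. A naive Efron--Stein summation yields only $\Var[f] = O(\binom{n}{r})$, far larger than the asserted $O(1)$; the sharp bound requires the convex-Lipschitz form of Talagrand's inequality together with the precise spectral structure of $\sum_S A_S^2$. Everything else---the bounded-differences setup and the structural identity $\|A_S\|_{sp} = 1$---is essentially routine.
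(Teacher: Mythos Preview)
Your first two ingredients are correct: the decomposition $M_\ell(G)=\sum_S G_S A_S$ with $\|A_S\|_{sp}=1$ holds because the nonzero block of $A_S$ is the permutation $I\mapsto I\Delta S$ on $\{I\in\binom{[n]}{\ell}:|I\cap S|=r/2\}$, and $\sum_S A_S^2 = d\cdot I$ with $d=\binom{\ell}{r/2}\binom{n-\ell}{r/2}$ follows from the count you give. The gap is the third step, and it is precisely where the Wigner analogy breaks. For a Wigner matrix each random entry sits in a single matrix position, so $G\mapsto M(G)$ is a Frobenius isometry and the Euclidean Lipschitz constant of $\|M(\cdot)\|_{sp}$ is $1$. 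For the Kikuchi matrix each $G_S$ is replicated across $\binom{r}{r/2}\binom{n-r}{\ell-r/2}$ entries, and the Euclidean Lipschitz constant $L=\sup_{\|c\|_2=1}\|\sum_S c_S A_S\|_{sp}$ is \emph{not} $O(1)$: taking $c$ proportional to the all-ones vector already gives $L\geq d/\sqrt{\binom{n}{r}}\asymp_r \ell^{r/2}$. Your identity does no better if you instead bound the gradient norm directly, since Cauchy--Schwarz yields only $\sum_S(v^T A_S v)^2\leq v^T(\sum_S A_S^2)v=d$, which is polynomial in $n$. Concretely, for $G$ the all-ones tensor the top eigenvector is uniform and one computes $\sum_S(v^T A_S v)^2=d^2/\binom{n}{r}\asymp_r\ell^r$, so no pointwise bound on $\sum_S(D_S^-f)^2$ can be $O(1)$ independent of $\ell$; the convex-Lipschitz route as you have set it up cannot reach the stated conclusion.

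The paper does not go through a global Lipschitz constant. It bounds $D_S^-f\leq\langle v_{\max},(M_\ell-M^-)v_{\max}\rangle$, expands this quadratic form over equal partitions $(A_S,B_S)$ of $S$ and the dormant part $D\in\binom{[n]\setminus S}{\ell-r/2}$, squares, and sums over $S$, asserting a collapse to $O(1)\sum_T v_{\max}[T]^2$. This is the same quantity $\sum_S(v_{\max}^T A_S v_{\max})^2$ you would be bounding, but attacked by direct combinatorial expansion adapted to the realized eigenvector rather than through the operator identity $\sum_S A_S^2$. The paper's final collapse is itself stated tersely, and the all-ones example above shows the literal $O(1)$ cannot hold uniformly in $\ell$; what is defensible from that expansion is a $\mathrm{poly}(\ell)$ bound, which suffices for the downstream corollary since the mean of $\|M_\ell\|_{sp}$ scales like $n^{r/4}$.
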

   \begin{proof}
       We highlight our argument by considering random Bernoulli matrix. Consider for the time that we have fixed underlying input $G$, and we will vary the entry of $G$ along the way. For any $S\in \binom{n}{r}$, let $G_S'$ be the tensor obtained by varying entry in $G[S]$ while fixing the rest of the entries. 
     For the following, let $M_\ell \coloneqq M_\ell(G)$ be the Kikuchi matrix of the fixed input $G$, and  
       let $M^-$ be the Kikuchi matrix obtained  as the following, \[ 
       \lambda_{\max}(M^-) = \inf_{G' \leftarrow G_S \in \{-1, 1\} } \lambda_{max}(M_\ell(G'))
       \]
       Let $v_{\max}$ be the top eigenvector for $M_\ell \coloneqq M_\ell(G)$ of unit-norm. Observe that \begin{align*}
           D_S^- \lambda_{\max}(M_\ell  ) &= \lambda_{\max}(M_\ell ) - \lambda_{\max}(M^-)\\
           &\leq \langle v_{\max}(M_\ell ) v_{\max} \rangle - \sup_{v\in B_2} \langle v,M^{-} v\rangle    \\
           &\leq  \langle v_{\max}(M_\ell - M^- ) v_{\max} \rangle   \\         
           &=2  \sum_{\substack{ A_S,B_S \in \binom{S}{r/2} \\ \text{equal partition for }S  } } \sum_{D\in \binom{[s]\setminus S}{\ell - r/2}  } v_{\max}[A_S\cup D]\cdot v_{\max}[B_S\cup D]  \,.
       \end{align*}  
       Taking the square of the above, we have \begin{align*}
      \sum_{\substack{ A_S,B_S, A_S', B_S' \in \binom{S}{r/2} \\ \text{equal partition for }S  } } \sum_{D, D'\in \binom{[s]\setminus S}{\ell - r/2}  } v_{\max}[A_S\cup D]\cdot v_{\max}[B_S\cup D] \cdot  v_{\max}[A_S'\cup D']\cdot v_{\max}[B_S'\cup D']
       \end{align*}
       Finally, summing over $S$, we have \begin{align*}
       	\sum_S (D_S^-)^2 \leq O(1) \sum_{T \in \binom{n}{\ell}} v_{\max}[T]^2  = O(1)
       \end{align*}
   \end{proof}

\end{document}